\newtheorem{assumption}{\bf Assumption}
\newtheorem{definition}{\bf Definition}
\newtheorem{theorem}{\bf Theorem}
\newtheorem{proposition}{\bf Proposition}
\newtheorem{corollary}{\bf Corollary}
\newtheorem{remark}{\bf Remark}
\title{Robust output feedback model predictive control using online estimation bounds}
\author{Johannes K\"ohler$^{1,2}$, Matthias A. M\"uller$^3$, Frank Allg\"ower$^1$
\thanks{$^1$Institute for Systems Theory and Automatic Control, University of Stuttgart,
70550 Stuttgart, Germany. (email:allgower@ist.uni-stuttgart.de).}
\thanks{$^2$Institute for Dynamic Systems and Control, ETH Zürich, Switzerland. (email: jkoehle@ethz.ch)}
\thanks{$^3$Institute of Automatic Control, Leibniz University Hannover, 30167 Hannover, Germany.
(email:mueller@irt.uni-hannover.de).}
\thanks{%
This work was supported by the German Research Foundation (DFG) under Grant MU 3929/2-1.}}
\begin{document}
\maketitle
\thispagestyle{empty}
\pagestyle{empty}

%%%%%%%%%%%%%%%%%%%%%%%%%%%%%%%%%%%%%%%%%%%%%%%%%%%%%%%%%%%%%%%%%%%%%%%%%%%%%%%%

\begin{abstract}
We present a framework to design nonlinear robust output feedback model predictive control (MPC) schemes that ensure constraint satisfaction under noisy output measurements and disturbances. 
We provide novel estimation methods to bound the magnitude of the estimation error based on: stability properties of the observer; detectability; set-membership estimation; moving horizon estimation (MHE). 
Robust constraint satisfaction is guaranteed by suitably incorporating these online validated bounds on the estimation error in a homothetic tube based MPC formulation. 
In addition, we show how the performance can be further improved by combining MHE and MPC in a single optimization problem. 
The framework is applicable to a general class of detectable and (incrementally) stabilizable nonlinear systems.  
While standard output feedback MPC schemes use offline computed worst-case bounds on the estimation error, the proposed framework utilizes online validated bounds, thus reducing conservatism and improving performance. 
We demonstrate the reduced conservatism of the proposed framework using a nonlinear 10-state quadrotor example. 
\end{abstract}
%\begin{IEEEkeywords}
%Predictive control for nonlinear systems; Output feedback; Incremental system properties; Constrained control; Output regulation; Moving horizon estimation
%\end{IEEEkeywords}
%
%!TEX root = ./Output.tex
%%%%%%%%%%%%%%%%%%%%%%%%%%%%%%%%%%%%%%%%%%%%%%%%%%%%%%%%%%%%%%%%%%%%%%%%%%%%%%%
\section{Introduction}
Output feedback for nonlinear constrained systems is a theoretically challenging problem with high practical relevance. 
One of the main theoretical challenges in this problem setup includes the guaranteed satisfaction of safety relevant constraints despite the presence of uncertainty in terms of disturbances and state estimation error. 
%Examples
Examples of particular high interest include motion planning with vision based measurements (e.g., robotics and autonomous driving), where collision avoidance needs to be guaranteed despite potentially large uncertainty in the state estimate. 
In addition, solving the constrained output feedback problem is a preliminary for the constrained output regulation problem~\cite{Koehler2020Regulation}, which includes offset-free tracking~\cite{dong2020homothetic_offset} as a special cases.  
In this paper, we present a model predictive control (MPC) approach to the nonlinear constrained output feedback problem that combines modern robust MPC methodologies with online estimation bounds.

\subsubsection*{Related work}
%What is MPC
MPC~\cite{rawlings2017model,grune2017nonlinear} is an optimization based control method that can ensure satisfaction of state and input constraints
for general nonlinear systems. 
The presence of disturbances or estimation error can cause feasibility issues and invalidate the nominal stability properties~\cite{grimm2004examples}. 
%Robust MPC
\textit{Robust} MPC formulations account for bounded prediction mismatch by suitably adjusting a back off in the constraints and mitigate the effect of disturbances using an additional feedback.
Given a constant bound on the model mismatch, there exist many recent nonlinear robust MPC schemes that can ensure constraint satisfaction  using robust positive invariant (RPI) sets~\cite{singh2019robust,bayer2013discrete} or suitable over-approximations of the reachable set~\cite{limon2005robust,villanueva2017robust,kohler2018novel,Koehler2020Robust}. 

%Output feedback MPC
In the nonlinear output feedback case, the uncertain initial state estimate $\hat{x}$ and the lack of a separation principle further complicate the analysis and design of suitable MPC schemes, compare~\cite{findeisen2003state} for an overview. 
In the absence of state constraints, a nominal MPC implementation combined with a suitable observer can ensure (practical) asymptotic stability~\cite{findeisen2003output,magni2004stabilization,messina2005discrete}.
In the presence of state constraints, the combination of a robust MPC and a stable observer yields an output feedback MPC with some non-vanishing robustness margin in some (potentially small) region of attraction~\cite{roset2008robustness}. 
This result is, however, only of a qualitative nature and tailored robust output feedback MPC formulations can be significantly less conservative.

For linear systems, a joint minimax MHE and MPC optimization problem is proposed in~\cite{lofberg2002towards} resulting in  linear/quadratic matrix inequalities.
For nonlinear systems, in~\cite{copp2017simultaneous} a joint $\min-\max$ problem is proposed  that combines MPC and MHE.
However, this approach imposes additional conditions on the cost function to ensure a saddle point condition and the overall problem cannot be solved with standard solvers used in MPC.

%Tailored - output
The most straight forward method to design robust output feedback MPC schemes is to compute an RPI set offline that bounds the estimation error $x-\hat{x}\in\mathbb{E}$ and then use standard robust MPC methods. 
Corresponding linear output feedback MPC schemes have been developed using tubes~\cite{mayne2006robust,kogel2017robust}, constraint tightening~\cite{lovaas2008robust}, and more general feedback policies~\cite{goulart2007output,subramanian2016non}.
The special case of noisy state measurements can be handled analogously, which is particularly relevant for (data-driven) input-output models (cf.~\cite{manzano2020robust}).
The considered RPI set $\mathbb{E}$ can often be a conservative over-approximation of the true estimation error, which can lead to unnecessarily cautious control actions and thus lack of performance. 
The issue of a larger initial estimation error is addressed in~\cite{mayne2009robust} by pre-computing a sequence of monotonically decreasing sets $\mathbb{E}_t$, compare also~\cite{ji2020robust} for ellipsoidal sets. 
In~\cite{bemporad2000output}, set-membership estimation is used to compute a polytope $\mathbb{E}_{k|t}$ that contains the true state and is less conservative than offline computed RPI sets $\mathbb{E}$.
In~\cite{chisci2002feasibility}, the complexity and feasibility issues regarding the set-membership estimation method in~\cite{bemporad2000output} are illuminated and a solution is provided based on a fixed parametrization.  
Similar moving horizon estimation methods are used in~\cite{dong2020homothetic,dong2020homothetic_offset}  with a scalar parametrization $\gamma_k\mathbb{E}$ and a homothetic tube.
An alternative solution to the complexity and feasibility issue of the set-membership estimation is provided in~\cite{brunner2018enhancing} by using the last $M-k$ measurements to obtain valid set predictions $k$-steps in the future.

Overall, the existing  design procedures for robust output feedback MPC are tailored to linear system dynamics to efficiently compute reachable/invariants sets and and use polytopic set-membership estimation. 
Thus, the existing methods are not directly applicable to the nonlinear case.

\subsubsection*{Contribution}
In this work, we present a robust output feedback MPC framework for nonlinear constrained systems. 
As a first contribution, we develop estimation procedures to derive upper bounds for the observer error, which are applicable to nonlinear systems (cf. Section \ref{sec:estimation}). 
Then, as a second contribution, we develop a robust output-feedback MPC framework that utilizes online estimates of the magnitude of the observer error (cf. Section~\ref{sec:tube}).
By combining the novel estimation procedure with the robust MPC designs, we obtain nonlinear robust output feedback MPC formulations that ensure robust constraint satisfaction and can reduce the conservatism of offline bounds on the estimation error.
Overall, the resulting approach shares the main conceptual and theoretical properties of the linear output feedback MPC schemes~\cite{dong2020homothetic_offset,bemporad2000output,chisci2002feasibility,dong2020homothetic,brunner2018enhancing} and is applicable to a general class of nonlinear systems.

%estimation
In order to provide recursive feasibility guarantees in the MPC, the proposed estimation methods provide bounds for: the current estimation error, the future estimation error, and the deviations of the estimated state from the nominal dynamics.
First, we utilize detectability, i.e., incremental input/output-to-state stability ($\delta$-IOSS), in the form of equivalent dissipation inequalities (cf.~\cite{allan2020detect,knuefer2020time}) to determine valid bounds on the estimation error of a given stable observer using finite horizon past data. 
Furthermore, we show how this can be naturally extended to compute an ``optimal'' state estimate  (in terms of the derived bound), resulting in an MHE algorithm similar to~\cite{knufer2018robust,knuefer2021MHE}. 
In addition, we discuss how set-membership estimation methods based on the non-falsified set (cf.~\cite{dong2020homothetic_offset,bemporad2000output,chisci2002feasibility,dong2020homothetic,brunner2018enhancing})  can be applied to nonlinear systems.

%MPC
Given the derived bounds on the estimation error, we provide a general robust output-feedback MPC framework that allows for reduced conservatism based on the online computed bounds on the observer error.  
In particular, the proposed MPC framework uses incremental Lyapunov functions to derive a \textit{homothetic} tube formulation (cf.~\cite{dong2020homothetic,rakovic2012homothetic}). 
We also show how the MHE formulation can be integrated into the robust MPC to obtain an improved formulation that solves estimation and control in a single optimization problem, similar to~\cite{dong2020homothetic_offset,copp2017simultaneous,dong2020homothetic}. 
Finally, we show how the robust MPC formulation can be simplified to a nominal MPC formulation combined with a constraint tightening, thus allowing for efficient online implementation.  
A preliminary version of this formulation for the special case of offline computed bounds, exponential stability, and polytopic constraints can be found in the conference proceedings~\cite{Kohler2019Output}. 

Overall, the resulting framework is applicable to a large class of nonlinear systems, guarantees  recursive feasibility, constraint satisfaction, and robust performance bounds under bounded disturbances and noise. 
In contrast to most existing output-feedback MPC approaches (cf.~\cite{mayne2006robust,kogel2017robust,lee2001receding,lovaas2008robust,goulart2007output,mayne2009robust}), we are not restricted to offline computed worst-case bounds on the observer error, but use less conservative bounds validated during run-time, similar to~\cite{bemporad2000output,chisci2002feasibility,dong2020homothetic,brunner2018enhancing}. 
In the special case of exact state measurements, the proposed robust tube MPC formulation unifies earlier robust MPC schemes based on contraction metrics/incremental stability~\cite{singh2019robust,bayer2013discrete,kohler2018novel,Koehler2020Robust} by providing a homothetic tube formulation based on incremental Lyapunov functions. 
Due to the simple parametrization, the overall computational complexity of the proposed approach is only moderately increased compared to a nominal MPC.

\subsubsection*{Notation}
The quadratic norm with respect to a positive definite matrix $Q=Q^\top$ is denoted by $\|x\|_Q^2=x^\top Q x$ and  the minimal and maximal eigenvalue of $Q$ are denoted by $\lambda_{\min}(Q)$ and $\lambda_{\max}(Q)$, respectively. 
The identity matrix is denoted by $I_n\in\mathbb{R}^{n\times n}$. 
The non-negative real numbers are denoted by $\mathbb{R}_{\geq 0}$.
The set of integers is denoted by $\mathbb{I}$, $\mathbb{I}_{[a,b]}$ denotes the set of integers in the interval $[a,b]$ with some $a,b\in\mathbb{R}$, and $\mathbb{I}_{\geq 0}$ denotes the non-negative integer/natural numbers.
By $\mathcal{K}$ we denote the class of functions $\alpha:\mathbb{R}_{\geq 0}\rightarrow\mathbb{R}_{\geq 0}$, which are continuous, strictly increasing, and satisfy $\alpha(0)=0$. 
By $\mathcal{K}_{\infty}$ we denote the class of functions $\alpha\in\mathcal{K}$ which are unbounded.
We denote the class of functions $\delta:\mathbb{I}_{\geq 0}\rightarrow \mathbb{R}_{\geq 0}$, which are continuous and decreasing with $\lim\limits_{k\rightarrow\infty}\delta(k)=0$ by $\mathcal{L}$.
By $\mathcal{KL}$ we denote the functions $\beta:\mathbb{R}_{\geq 0}\times\mathbb{I}_{\geq 0}\rightarrow\mathbb{R}_{\geq 0}$ with $\beta(\cdot,t)\in\mathcal{K}$ and $\beta(r,\cdot)\in\mathcal{L}$ for any fixed $t\in\mathbb{I}_{\geq 0}$, $r\in\mathbb{R}_{\geq 0}$. 
The interior of a set $\mathbb{Z}\subset\mathbb{R}^n$ is denoted by $\text{int}(\mathbb{Z})$.

%!TEX root = ./Output.tex
%%%%%%%%%%%%%%%%%%%%%%%%%%%%%%%%%%%%%%%%%%%%%%%%%%%%%%%%%%%%%%%%%%%%%%%%%%%%%
\section{Preliminaries}
 \label{sec:setup_output}
This section introduces the problem setup,  the control goal, and preliminaries regarding detectability and stabilizability. 
%!TEX root = ./Output.tex
%%%%%%%%%%%%%%%%%%%%%%%%%%%%%%%%%%%%%%%%%%%%%%%%%%%%%%%%%%%%%%%%%%%%%%%%%%%%%
\subsection{Problem setup}
We consider a  nonlinear perturbed discrete-time system
\begin{subequations}
\label{eq:sys_w}
\begin{align}
\label{eq:sys_w_1}
x_{t+1}&=f_{\mathrm{w}}(x_t,u_t,w_t),\\
\label{eq:sys_w_2}
y_t&=h_{\mathrm{w}}(x_t,u_t,w_t), 
\end{align}
\end{subequations}
 with state $x_t\in\mathbb{X}=\mathbb{R}^n$, control input $u_t\in\mathbb{U}\subseteq\mathbb{R}^m$, disturbances/noise $w_t\in\mathbb{W}\subseteq\mathbb{R}^q$, noisy measurement $y_t\in\mathbb{Y}\subseteq\mathbb{R}^p$, time $t\in\mathbb{I}_{\geq 0}$, continuous dynamics $f_{\mathrm{w}}:\mathbb{X}\times\mathbb{U}\times\mathbb{W}\rightarrow\mathbb{X}$, and continuous output equations $h_{\mathrm{w}}:\mathbb{X}\times\mathbb{U}\times\mathbb{W}\rightarrow\mathbb{Y}$.  
We assume w.l.o.g. that $0\in\mathbb{W}$ and define the nominal system equations $f(x,u):=f_{\mathrm{w}}(x,u,0)$, $h(x,u):=h_{\mathrm{w}}(x,u,0)$. 
We impose point-wise in time constraints on the state and input
$(x_t,u_t)\in \mathbb{Z}$, $t\in\mathbb{I}_{\geq 0}$.
The overall control goal is to minimize some user chosen performance measure/cost $\ell$ while ensuring constraint satisfaction. 
To this end, we develop an output feedback MPC scheme that uses the past measured outputs $y_j$, $j\in\mathbb{I}_{[0,t-1]}$ and some initial state estimate $\hat{x}_0$ with a known bound on the estimation error to compute a control action $u_t$ at time $t$.

In order to derive robust bounds on the estimation error and ensure robust constraint satisfaction, we assume that the disturbances are bounded. 
 \begin{assumption}
\label{ass:disturbance} (Bounded disturbance)
There exists a constant $\overline{w}>0$, such that $\|w_t\|\leq \overline{w}$ for all $t\in\mathbb{I}_{\geq 0}$. 
\end{assumption}
   
%!TEX root = ./Output.tex
%%%%%%%%%%%%%%%%%%%%%%%%%%%%%%%%%%%%%%%%%%%%%%%%%%%%%%%%%%%%%%%%%%%%%%%%%%%%%
 \subsection{Detectability}
 \label{sec:setup_output_2}
 One standard description of detectability for nonlinear systems is the notion of incremental input/output-to-state stability ($\delta$-IOSS)~\cite{allan2020detect,knuefer2020time,knufer2018robust,knuefer2021MHE,sontag1997output,muller2017nonlinear,allan2019lyapunov}. 
\begin{definition}($\delta$-IOSS~\cite[Def.~3]{allan2020detect},\cite[Def.~2]{knuefer2020time})
\label{def:IOSS}
 System~\eqref{eq:sys_w} is incrementally uniformly IOSS if there exist $\beta_{1},\beta_{2},\beta_{3}\in\mathcal{KL}$ such that
\begin{align}
\label{eq:IOSS}
&\|x_{k}-\tilde{x}_{k}\|\leq \max\{\beta_{1}(\|x_0-\tilde{x}_0\|,k),\\
&\max_{j\in\mathbb{I}_{[0,k-1]}}\beta_{2}(\|w_j-\tilde{w}_j\|,k-j-1),\nonumber\\
&\max_{j\in\mathbb{I}_{[0,k-1]}}\beta_{3}(\|y_j-\tilde{y}_j\|,k-j-1)\},\nonumber
\end{align}
for all initial conditions $x,\tilde{x}\in\mathbb{X}$, disturbance sequences $w,\tilde{w}\in\mathbb{W}^\infty$, input sequences $u\in\mathbb{U}^{\infty}$, and all $k\in\mathbb{I}_{\geq 0}$, where $(x,u,y,w)_{t=0}^{\infty}$ and $(\tilde{x},u,\tilde{y},\tilde{w})_{t=0}^{\infty}$ correspond to two trajectories each satisfying Equations~\eqref{eq:sys_w} for all $t\in\mathbb{I}_{\geq 0}$. 
\end{definition}
We point out that this definition deviates from earlier characterizations used in the literature based on a maximum norm~\cite{sontag1997output,muller2017nonlinear,allan2019lyapunov} by using an explicit discounting in terms of $\mathcal{KL}$ functions. 
In particular, in~\cite{allan2020detect,knuefer2020time} it was recently shown that these characterizations are in fact equivalent (cf.~\cite[Prop.~4]{allan2020detect},  \cite[Thm.~5]{knuefer2020time}), this property (Def.~\ref{def:IOSS}) is necessary (cf.~\cite[Prop.~3]{knuefer2020time}, \cite[Prop.~5]{allan2020detect}) and sufficient (cf.~\cite[Thm.~13]{knuefer2021MHE}) for the design of robustly stable state observers, 
 and can be equivalently\footnote{%
We note that the considered output equation~\eqref{eq:sys_w_2} and the setting in~\cite{knuefer2020time} are more general compared to the setting in~\cite{allan2020detect} where $y=h(x)$ is assumed. We conjecture that the converse Lyapunov results  in~\cite{allan2020detect} remain valid. 
} characterized using a $\delta$-IOSS Lyapunov function  (cf.~\cite[Thm.~8]{allan2020detect}). 
\begin{definition} ($\delta$-IOSS Lyapunov function~\cite[Def.~6]{allan2020detect})
\label{def:IOSS_Lyap}
A function $W_\delta:\mathbb{X}\times\mathbb{X}\rightarrow\mathbb{R}_{\geq 0}$ is called an (exponential-decrease) $\delta$-IOSS Lyapunov function if there exist $\alpha_1,\alpha_2\in\mathcal{K}_{\infty}$, $\sigma_{1},\sigma_{2}\in\mathcal{K}$, and $\eta\in[0,1)$ such that
\begin{subequations}
\label{eq:IOSS_Lyap}
\begin{align}
\label{eq:IOSS_Lyap_1}
&\alpha_1(\|x-\tilde{x}\|)\leq W_\delta(x,\tilde{x})\leq \alpha_2(\|x-\tilde{x}\|),\\
\label{eq:IOSS_Lyap_2}
&W_\delta(f_{\mathrm{w}}(x,u,w),f_{\mathrm{w}}(\tilde{x},u,\tilde{w}))\leq \eta W_\delta(x,\tilde{x})\\
&+\sigma_{1}(\|w-\tilde{w}\|)+\sigma_{2}(\|h_{\mathrm{w}}(x,u,w)-h_{\mathrm{w}}(\tilde{x},u,\tilde{w}\|), \nonumber
\end{align}
\end{subequations}
for all $x,\tilde{x}\in\mathbb{X}$, $u\in\mathbb{U}$, $w,\tilde{w}\in\mathbb{W}$. 
\end{definition}
Note that $\delta$-IOSS is a special case of incremental dissipativity which under additional differentiability conditions and for $\alpha_1,\alpha_2$ quadratic can be equivalently characterized using the \textit{differential dynamics} (cf.~\cite{koelewijn2021incremental}) and thus corresponds to the \textit{differential detectability} conditions in~\cite{sanfelice2012metric}.  
In Section~\ref{sec:estimation}, we utilize this detectability characterizations to design observers and derive online verifiable bounds on the estimation error. 
  
%!TEX root = ./Output.tex
%%%%%%%%%%%%%%%%%%%%%%%%%%%%%%%%%%%%%%%%%%%%%%%%%%%%%%%%%%%%%%%%%%%%%%%%%%%%%
\subsection{Incremental stabilizability}
In order to ensure constraint satisfaction despite uncertain state estimates and disturbed dynamics, we require some analysis tool that allows us to efficiently compute an over-approximation of the reachable set for nonlinear systems. 
%increm
The stability of trajectories can be studied using the notion of \textit{incremental stability}~\cite{angeli2002lyapunov} and \textit{contraction metrics}~\cite{lohmiller1998contraction}, compare also~\cite{tran2019convergence}. 
%stabilizable
A less restrictive notion is given by incremental/universal \textit{stabilizability}~\cite{manchester2017control}, which considers an additional feedback. 
%ISS
 In order to also consider the presence of disturbances $w$, the notion of incremental \textit{stability} can be strengthened to incremental input-to-state stability ($\delta$-ISS)~\cite{bayer2013discrete,tran2016incremental}. 
A natural unification of these different concepts is to consider \textit{incremental input-to-state stabilizability}.
\begin{definition} (Incremental input-to-state stabilizability) 
\label{def:delta_ISS}
System~\eqref{eq:sys_w} is uniformly incremental input-to-state stabilizable, if there exists some control law $\kappa:\mathbb{X}\times\mathbb{X}\times\mathbb{U}\rightarrow\mathbb{U}$ and functions $\beta_{4},\beta_{5}\in\mathcal{KL}$, $\gamma_{\kappa}\in\mathcal{K}$, such that
\begin{subequations}
\label{eq:delta_ISS}
\begin{align}
\label{eq:delta_ISS_x}
\|x_{k}-\tilde{x}_{k}\|\leq &\max\{\beta_{4}(\|x_0-\tilde{x}_0\|,k),\\
&\max_{j\in\mathbb{I}_{[0,k-1]}}\beta_{5}(\|w_j-\tilde{w}_j\|,k-j-1)\},\nonumber\\
\label{eq:delta_ISS_kappa}
\|\kappa(\tilde{x}_k,{x}_k,{u_k})-u_k\|\leq &\gamma_{\kappa}(\|\tilde{x}_k-x_k\|),
\end{align}
\end{subequations}
for all initial conditions $x_0,\tilde{x}_0\in\mathbb{X}$, all disturbance sequences $w,\tilde{w}\in\mathbb{W}^\infty$, all input sequences $u\in\mathbb{U}^{\infty}$,
and all $k\in\mathbb{I}_{\geq 0}$, where $(x,u,y,w)_{t=0}^{\infty}$ and $(\tilde{x},\tilde{u},\tilde{y},\tilde{w})_{t=0}^{\infty}$ correspond to two different trajectories satisfying~\eqref{eq:sys_w} and $\tilde{u}_t=\kappa(\tilde{x}_t,x_t,u_t)$, $t\in\mathbb{I}_{\geq 0}$.  
\end{definition}
%
\begin{comment}
A natural unification of these different concepts is to consider \textit{attenuated} $\delta$-ISS (cf. attenuated ISS~\cite[Def.~4]{messina2005discrete}). 
\begin{definition} (Attenuated $\delta$-ISS) 
\label{def:delta_ISS}
System~\eqref{eq:sys_w} is attenuated uniformly $\delta$-ISS with some control law $\kappa:\mathbb{X}\times\mathbb{X}\times\mathbb{U}\rightarrow\mathbb{U}$ if there exist $\beta_{4},\beta_{5}\in\mathcal{KL}$, $\gamma_{\kappa}\in\mathcal{K}$, such that
\begin{subequations}
\label{eq:delta_ISS}
\begin{align}
\label{eq:delta_ISS_x}
\|x_{k}-\tilde{x}_{k}\|\leq &\beta_{4}(\|x_0-\tilde{x}_0\|,k)\\
&+\sum_{j=0}^{k-1}\beta_{5}(\|w_j-\tilde{w}_j\|,k-j-1),\nonumber\\
\label{eq:delta_ISS_kappa}
\|\kappa(\tilde{x}_k,{x}_k,{u_k})-u_k\|\leq &\gamma_{\kappa}(\|\tilde{x}_k-x_k\|),
\end{align}
\end{subequations}
for all initial conditions $x_0,\tilde{x}_0\in\mathbb{X}$, disturbance sequences $w,\tilde{w}\in\mathbb{W}^\infty$, input sequences $u\in\mathbb{U}^{\infty}$ and all $k\in\mathbb{I}_{\geq 0}$, where $(x,u,y,w)_{t=0}^{\infty}$ and $(\tilde{x},\tilde{u},\tilde{y},\tilde{w})_{t=0}^{\infty}$ correspond to two different trajectories satisfying~\eqref{eq:sys_w} and $\tilde{u}_t=\kappa(\tilde{x}_t,x_t,u_t)$. 
\end{definition}
\end{comment}
For $\kappa(\tilde{x},x,u)=u$ this definition reduces to $\delta$-ISS. 
The $\delta$-ISS characterizations in~\cite{bayer2013discrete,tran2016incremental} correspond to an alternative characterization using the $\max$-norm, analogous to the different characterizations of $\delta$-IOSS in~\cite{muller2017nonlinear,allan2019lyapunov} and \cite{allan2020detect,knuefer2020time} (compare the discussion below Def.~\ref{def:IOSS}).
In the absence of disturbances ($w=\tilde{w}=0$), this condition reduces to incremental stabilizability and condition~\eqref{eq:delta_ISS_kappa}  imposes additional (uniform) continuity bounds on $\kappa$. 
If additionally $\beta_{4},\beta_{5}$ are linear in the first argument and decay exponentially in the second argument, then this condition reduces to the \textit{universal/incremental exponential stabilizability} condition in~\cite{Koehler2020Robust,manchester2017control}. 
%Lyap
Since $\delta$-ISS is a special case of $\delta$-IOSS (e.g., with $h_{\mathrm{w}}$ constant), we use a control Lyapunov function (CLF) characterization analogous to Definition~\ref{def:IOSS_Lyap}.
\begin{definition} ($\delta$-ISS CLF)
\label{def:ISS_Lyap}
A function $V_\delta:\mathbb{X}\times\mathbb{X}\rightarrow\mathbb{R}_{\geq 0}$ is called an (exponential-decrease) $\delta$-ISS CLF if there exist $\alpha_3,\alpha_4\in\mathcal{K}_{\infty}$, $\sigma_{3},\gamma_{\kappa}\in\mathcal{K}$, $\rho\in[0,1)$, and a control law $\kappa:\mathbb{X}\times\mathbb{X}\times\mathbb{U}\rightarrow\mathbb{U}$ such that
\begin{subequations}
\label{eq:ISS_Lyap}
\begin{align}
\label{eq:ISS_Lyap_1}
&\alpha_3(\|x-\tilde{x}\|)\leq V_\delta(x,\tilde{x})\leq \alpha_4(\|x-\tilde{x}\|),\\
\label{eq:ISS_Lyap_2}
&V_\delta(f_{\mathrm{w}}(x,u,w),f_{\mathrm{w}}(\tilde{x},\kappa(\tilde{x},x,u),\tilde{w}))\\
\leq& \rho V_\delta(x,\tilde{x})+\sigma_{3}(\|w-\tilde{w}\|), \nonumber\\
\label{eq:ISS_Lyap_3}
&\|\kappa(\tilde{x},x,u)-u\|\leq \gamma_{\kappa}(\|x-\tilde{x}\|),
\end{align}
\end{subequations}
for all $x,\tilde{x}\in\mathbb{X}$, $u\in\mathbb{U}$, $w,\tilde{w}\in\mathbb{W}$. 
\end{definition}
Compared to the $\delta$-ISS Lyapunov characterization in~\cite[Def.~3]{bayer2013discrete}, \cite[Def.~7]{tran2016incremental} we consider an additional stabilizing feedback $\kappa$ and w.l.og.  (cf.~\cite[Prop.~7]{sontag1998comments}) use an exponential decrease characterization with $\rho\in[0,1)$. 
%Prop.~27 shows that global asymptotic stability can be equivalently characterized using an ``exponential-decay'' Lyapunov function
%Definition 7:tran: ``dissipation-form 
In the special case that $\alpha_3,\alpha_4$ are quadratic, $\gamma_{\kappa}$ is linear, and the dynamics are continuously differentiable, such an incremental CLF can be characterized using control contraction metrics (CCMs) (cf.~\cite{manchester2017control}), compare~\cite[Thm.~3.5]{singh2019robust}.  
In this case, there exist various recent offline design methods for $V_\delta,\kappa$ using sum-of-squares (SOS) programming~\cite{manchester2017control,singh2019robust} or linear matrix inequalities (LMIs) based on a parametrization using linear parameter varying (LPV) systems~\cite{koehler2020nonlinearTAC,wang2020virtual,koelewijn2021nonlinear}.
Similar methods can be used to compute a $\delta$-IOSS Lyapunov function~\cite{koelewijn2021incremental} or design a stable observer~\cite{sanfelice2012metric,manchester2020observer}.

In Section~\ref{sec:tube}, we utilize this incremental Lyapunov function to guarantee robust constraint satisfaction, similar to~\cite{singh2019robust,bayer2013discrete,kohler2018novel,Koehler2020Robust,Kohler2019Output}. % provide an over-approximation of the reachable set and thus 

\begin{remark}
\label{rk:constraints}
(Global bounds)
In this paper, we only consider \textit{global} system properties that are valid for all $(x,u,w)\in\mathbb{X}\times\mathbb{U}\times\mathbb{W}$, to simplify the exposition. 
However, it is possible to modify the following derivations in case the system properties only hold on the constraint set $\mathbb{Z}$ or to account for the fact that the stability properties of the observer may only hold for a small enough initial estimation error, compare~\cite{Kohler2019Output}. 
In addition, while this paper considers the standard case of constantly bounded disturbances (Ass.~\ref{ass:disturbance}), it is possible to incorporate state and input dependent bounds on the magnitude of the disturbances using the tools similar to~\cite{Koehler2020Robust}, in order to better reflect parametric model mismatch.
\end{remark}

%!TEX root = ./Output.tex
%%%%%%%%%%%%%%%%%%%%%%%%%%%%%%%%%%%%%%%%%%%%%%%%%%%%%%%%%%%%%%%%%%%%%%%%%%%%%%%
\section{State estimation bounds}
\label{sec:estimation}
In this section, we introduce the conditions on the state estimation and provide design procedures for corresponding state estimation methods. 
In order to ensure closed-loop constraint satisfaction and robust recursive feasibility with the proposed output-feedback MPC framework, the state observer should provide state estimates $\hat{x}_t$ satisfying the following properties:
\begin{enumerate}
\item The estimation error $\hat{x}_t-x_t$ satisfies a known bound $\overline{e}_t$.
\item The deviation of the observer dynamics $\hat{f}$ from the nominal dynamics $f$ satisfies a known bound.
\item An upper bound on the future estimation error is available. 
\end{enumerate}
These conditions are intuitively required as will also become apparent in Section~\ref{sec:tube}. 
In addition to these requirements, we are particularly interested in obtaining bounds on the estimation error, which are (significantly) less conservative than offline derived bounds. 
To this end, in Section~\ref{sec:estimation_luenberger} we will first investigate Luenberger-like observers. 
Then, in Sections~\ref{sec:IOSS} and \ref{sec:estimation_setmember} we provide online estimates on the observer error given the past data using detectability (Def.~\ref{def:IOSS_Lyap}) and set-membership estimation, respectively. 
Finally,  in Section~\ref{sec:estimation_MHE} we consider ``optimal'' state estimates using MHE. 

In order to simplify the following discussion, we will sometimes require  that the control input ensures boundedness of the closed loop, which is later guaranteed in Section~\ref{sec:tube} with the robust MPC design and compact constraints.
\begin{assumption}
\label{ass:boundedness} (Bounded trajectories)
There exists a constant $c>0$, such that for all $t\in\mathbb{I}_{\geq 0}$: 
$\max\{\|x_t\|,\|\hat{x}_t\|,\|u_t\|,\|w_t\|\}\leq c$.
\end{assumption}

%Lunenberg
%!TEX root = ./Output.tex
%%%%%%%%%%%%%%%%%%%%%%%%%%%%%%%%%%%%%%%%%%%%%%%%%%%%%%%%%%%%%%%%%%%%%%%%%%%%%%%
  \subsection{Luenberger-like observers}
\label{sec:estimation_luenberger}
The most standard observer design for linear and nonlinear systems is to use a copy of the nominal dynamics in combination with an injection law based on the measured output, which we refer to as Luenberger-like observers. 
The corresponding observer dynamics are given by\footnote{%
In case $h(x,u)$ is independent of $u$ we can also consider predictor-corrector observers that use $y_{t+1}$ instead of $y_t$ to compute $\hat{x}_{t+1}$.  
}
\begin{align}
\label{eq:observer}
\hat{x}_{t+1}=f(\hat{x}_t,u_t)+\hat{L}(\hat{x}_t,u_t,y_t)=:\hat{f}(\hat{x}_t,u_t,y_t),
\end{align}
with the state estimate $\hat{x}_t\in\mathbb{R}^n$ and the continuous injection law $\hat{L}:\mathbb{X}\times\mathbb{U}\times\mathbb{Y}\rightarrow\mathbb{X}$ with $\hat{L}(\hat{x},u,h(\hat{x},u))=0$. 
 \begin{assumption}
\label{ass:stable_observer} (Robustly stable observer)
There exist a $\delta$-Lyapunov function $V_{\mathrm{o}}:\mathbb{X}\times\mathbb{X}\rightarrow\mathbb{R}_{\geq 0}$ and $\alpha_5,\alpha_6\in\mathcal{K}_\infty$, $\sigma_4,\gamma_{\mathrm{L},1},\gamma_{\mathrm{L},2}\in\mathcal{K}$, $\tilde{\eta}\in[0,1)$, such that 
\begin{subequations}
\label{eq:observer_prop}
\begin{align}
\label{eq:observer_prop_1}
&\alpha_5(\|x-\hat{x}\|)\leq V_{\mathrm{o}}(\hat{x},x)\leq \alpha_6(\|x-\hat{x}\|)\\
\label{eq:observer_prop_2}
&V_{\mathrm{o}}(\hat{f}(\hat{x},u,h_{\mathrm{w}}(x,u,w)),f_{\mathrm{w}}(x,u,w))\\
\leq &\tilde{\eta} V_{\mathrm{o}}(\hat{x},x)+\sigma_4(\|w\|),\nonumber\\
\label{eq:observer_prop_3_mod}
&\|\hat{L}(\hat{x},u,h_{\mathrm{w}}(x,u,w))\|\leq \gamma_{\mathrm{L},1}(V_{\mathrm{o}}(\hat{x},x))+\gamma_{\mathrm{L},2}(\|w\|),
\end{align}
\end{subequations}
for all $(x,\hat{x},u,w)\in\mathbb{X}\times\mathbb{X}\times\mathbb{U}\times\mathbb{W}$.
 \end{assumption}
Conditions~\eqref{eq:observer_prop_1}-\eqref{eq:observer_prop_2} provide a Lyapunov characterization to ensure that $\hat{f}$ is a robustly stable observer (cf.~\cite[Def.~2]{allan2020detect}). 
Condition~\eqref{eq:observer_prop_3_mod} can be ensured by using  $\hat{L},h_{\mathrm{w}}$ continuous, Inequality~\eqref{eq:observer_prop_1}, and boundedness of the trajectories (Ass.~\ref{ass:boundedness}).
%!TEX root = ./Output.tex
%%%%%%%%%%%%%%%%%%%%%%%%%%%%%%%%%%%%%%%%%%%%%%%%%%%%%%%%%%%%%%%%%%%%%%%%%%%%%%%
\subsubsection*{Observer designs}
In the following, we detail different observer designs from the literature that satisfy Assumption~\ref{ass:stable_observer}. 
If the dynamics have the special form 
\begin{align*}
f(x,u)=Ax+\gamma(u,y),\quad h(x,u)=Cx+Du, 
\end{align*}
 the injection $\hat{L}(\hat{x},u,y)=\gamma(u,y)+L(C\hat{x}+Du-y)$ with some $L\in\mathbb{R}^{n\times p}$ yields linear dynamics for the observer error $e_{\mathrm{o}}$ and thus, assuming $(A,C)$ detectable, Conditions~\eqref{eq:observer_prop_1}--\eqref{eq:observer_prop_2} can be satisfied with $V_{\mathrm{o}}(x,\hat{x})=\|x-\hat{x}\|_{P_{\mathrm{o}}}^2$, $\alpha_5,\alpha_6$ quadratic and Condition~\eqref{eq:observer_prop_3_mod} follows from continuity of $\gamma$, compare~\cite{krener1985nonlinear}. 

If the dynamics $f$ is in an observer normal-form, then a \textit{high-gain observer} of the form 
\begin{align}
\label{eq:obs_linear_gain}
\hat{L}(\hat{x},u,y)=L\cdot (\hat{y}-y),\quad \hat{y}=h(\hat{x},u),\quad L\in\mathbb{R}^{n\times p},
\end{align}
can ensure exponential stability with a quadratic Lyapunov function $V_{\mathrm{o}}$ if the dynamics are Lipschitz continuous and $L$ ensures a sufficiently fast decay~\cite{gauthier1992simple}.
Note that a system can be transformed into observer normal-form if the system is \textit{uniformly observable for any input}~\cite[Thm.~2]{gauthier1992simple}. 
We note that the fast decay may require a \textit{high gain} $L$ and thus the bound in Condition~\eqref{eq:observer_prop_3_mod} can deteriorate.

If we allow for time-varying functions $V_{\mathrm{o}},L$, then Assumption~\ref{ass:stable_observer} can also be (locally) satisfied with an \textit{extended Kalman filter} (EKF). 
In particular, assuming continuous differentiability of $f,h$ with suitable boundedness and observability conditions on the Jacobian and small enough disturbances/noise, one can show that the EKF locally satisfies Assumption~\ref{ass:stable_observer} with linear functions $\gamma_{\mathrm{L},1},\gamma_{\mathrm{L},2}$ and a time-varying quadratic function $V_{\mathrm{o}}$, compare~\cite[Thm.~3.1]{reif1999stochastic}.  
A similar observer design with corresponding (local) stability properties is given by the state dependent Riccati equation (SDRE) Kalman filter~\cite{jaganath2005sdre}. 
However, the a priori guaranteed bounds on the estimation error for EKF and SDRE tend to be conservative, which limits the applicability in robust output-feedback MPC with safety critical state constraints.

The design of time-invariant observers of the form~\eqref{eq:obs_linear_gain} with quadratic Lyapunov functions $V_{\mathrm{o}}$ can be accomplished by treating the nonlinearity as a suitably bounded uncertainty. 
Corresponding LMIs can be found in~\cite{accikmecse2011observers} and \cite{zemouche2013lmi} for slope-restricted nonlinearites and based on an LPV embedding, respectively.

It is possible to consider more general quadratically bounded Lyapunov functions $V_{\mathrm{o}}$ and linearly bounded functions $\hat{L}$ by using the concept of 
\textit{differential detectability}~\cite{sanfelice2012metric}, which is a differential version of the $\delta$-IOSS property (Def.~\ref{def:IOSS_Lyap}). 
A corresponding differential observer can be designed using LMI/SOS tools~\cite{manchester2020observer}, dual to the construction of control contraction metrics~\cite{manchester2017control}.
For the special case $h(x,u)=[I_p,0]x$, a \textit{globally} exponentially stable observer is designed in reduced coordinates in~\cite{manchester2020observer}. 
More recently, general (partially necessary and sufficient) design conditions for \textit{globally} exponentially stable observers have been derived in~\cite{sanfelice2021metric}, utilizing more general coordinate arguments.

%IOSS
%!TEX root = ./Output.tex
%%%%%%%%%%%%%%%%%%%%%%%%%%%%%%%%%%%%%%%%%%%%%%%%%%%%%%%%%%%%%%%%%%%%%%%%%%%%%%%
\subsection{Estimation error bounds using detectability}
\label{sec:IOSS}
In the following, we show how recent data can be used to determine bounds on the current estimation error, which may be less conservative  than offline verifiable a priori bounds resulting from Assumptions~\ref{ass:disturbance} and \ref{ass:stable_observer}.
To this end, we exploit the boundedness of the disturbances (Ass.~\ref{ass:disturbance}), detectability (Def.~\ref{def:IOSS}, \ref{def:IOSS_Lyap}), and robust stability of the observer (Ass.~\ref{ass:stable_observer}).
Additional results regarding robustness w.r.t. outlier noise and observable systems can be found in Appendix~\ref{app:IOSS}.
First, we derive bounds based on the detectability conditions and improved bounds for the special case that the $\delta$-Lyapunov function $V_{\mathrm{o}}$ (Ass.~\ref{ass:stable_observer}) is also a $\delta$-IOSS Lyapunov function (Def.~\ref{def:IOSS_Lyap}). 
%IOSS
%!TEX root = ./Output.tex
%%%%%%%%%%%%%%%%%%%%%%%%%%%%%%%%%%%%%%%%%%%%%%%%%%%%%%%%%%%%%%%%%%%%%%%%%%%%%%%
The following assumption regarding the nature of the disturbances is crucial to allow for simple estimates based on the observer.
\begin{assumption}(Additive disturbances)
\label{ass:add_disturbance}
The perturbed dynamic~\eqref{eq:sys_w_1} satisfies  $f_{\mathrm{w}}(x,u,w)=f(x,u)+ E_{\mathrm{x}} w$ with $q\geq n$, $\mathbb{W}=\mathbb{R}^q$, and a full rank matrix  $E_{\mathrm{x}}\in\mathbb{R}^{n\times q}$. 
\end{assumption}

\begin{proposition}
\label{prop:IOSS}
Let Assumptions~\ref{ass:disturbance} and \ref{ass:add_disturbance} hold.
Suppose the system admits an (exponential-decay) $\delta$-IOSS Lyapunov function (Def.~\ref{def:IOSS_Lyap}).
Then, for any $t\in\mathbb{I}_{\geq 0}$, $M\in\mathbb{I}_{[0,t]}$  the state estimate of the observer~\eqref{eq:observer} satisfies 
\begin{align}
\label{eq:observer_IOSS_bound_1}
&W_\delta(\hat{x}_t,x_t)\nonumber\\
\leq &\sum_{j=1}^{M}\eta^{j-1}\left(\sigma_1(\overline{w}+\|\hat{w}_{t-j}\|)+\sigma_2(\|\hat{y}_{t-j}-y_{t-j}\|) \right)\nonumber\\
&+\eta^M W_\delta(\hat{x}_{t-M},x_{t-M}),
\end{align}
with $\hat{y}_k=h_{\mathrm{w}}(\hat{x}_k,u_k,\hat{w}_k)$, $\hat{w}_k=E_{\mathrm{x}}^\dagger \hat{L}(\hat{x}_k,u_k,y_k)$, $k\in\mathbb{I}_{\geq 0}$, where $E_{\mathrm{x}}^\dagger=E_{\mathrm{x}}^\top (E_{\mathrm{x}} E_{\mathrm{x}}^\top)^{-1}$ is the  Moore–Penrose right inverse of $E_{\mathrm{x}}$.
\begin{comment}
If the system is uniformly observable (Def.~\ref{def:observable}), then for any $t\geq \nu$ the following inequality holds
\begin{align}
\label{eq:observer_IOSS_bound_2}
\|\hat{x}_{t-\nu}-{x}_{t-\nu}\|\leq \sum_{j=1}^{\nu}\gamma_{\mathrm{w}}(\overline{w}+\|\hat{w}_{t-j}\|)+\gamma_{\mathrm{v}}(\|y_{t-j}-\hat{y}_{t-j}\|).
\end{align}
\textbf{Part I: }
\end{comment}
\end{proposition}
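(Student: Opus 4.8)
The plan is to reinterpret the observer~\eqref{eq:observer} as being generated by the perturbed dynamics~\eqref{eq:sys_w_1} under a suitable ``virtual'' disturbance, and then to iterate the one-step dissipation inequality~\eqref{eq:IOSS_Lyap_2} of the $\delta$-IOSS Lyapunov function $M$ times, followed by a triangle inequality on the disturbance term.

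First I would introduce $\hat{w}_k:=E_{\mathrm{x}}^\dagger \hat{L}(\hat{x}_k,u_k,y_k)$. Because $E_{\mathrm{x}}$ has full row rank and $q\geq n$ (Ass.~\ref{ass:add_disturbance}), its right inverse satisfies $E_{\mathrm{x}}E_{\mathrm{x}}^\dagger=I_n$, so $E_{\mathrm{x}}\hat{w}_k=\hat{L}(\hat{x}_k,u_k,y_k)$; combined with the additive structure $f_{\mathrm{w}}(x,u,w)=f(x,u)+E_{\mathrm{x}}w$ this gives
\begin{align*}
\hat{x}_{k+1}=f(\hat{x}_k,u_k)+\hat{L}(\hat{x}_k,u_k,y_k)=f(\hat{x}_k,u_k)+E_{\mathrm{x}}\hat{w}_k=f_{\mathrm{w}}(\hat{x}_k,u_k,\hat{w}_k),
\end{align*}
with $\hat{w}_k\in\mathbb{W}=\mathbb{R}^q$. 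Hence the estimate sequence, together with $\hat{w}$ and the associated output $\hat{y}_k:=h_{\mathrm{w}}(\hat{x}_k,u_k,\hat{w}_k)$, is an admissible trajectory of~\eqref{eq:sys_w}. This reformulation is exactly what Assumption~\ref{ass:add_disturbance} is designed for and is the only nonroutine ingredient of the argument; I expect verifying it cleanly (in particular that the nominal copy $f$ in the observer matches the drift in~\eqref{eq:sys_w_1}, and that the resulting $\hat{w}_k$ is an admissible disturbance, i.e.\ lies in $\mathbb{W}$) to be the main subtlety.

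It then remains to compare this trajectory with the true one via~\eqref{eq:IOSS_Lyap_2}. Applying~\eqref{eq:IOSS_Lyap_2} at each time $k$ with $(\hat{x}_k,u_k,\hat{w}_k)$ in the first argument slot and $(x_k,u_k,w_k)$ in the second, and using $y_k=h_{\mathrm{w}}(x_k,u_k,w_k)$ together with the definition of $\hat{y}_k$, yields
\begin{align*}
W_\delta(\hat{x}_{k+1},x_{k+1})\leq \eta\,W_\delta(\hat{x}_k,x_k)+\sigma_1(\|\hat{w}_k-w_k\|)+\sigma_2(\|\hat{y}_k-y_k\|).
\end{align*}
Iterating this bound from $k=t-1$ down to $k=t-M$ (finitely many steps, since $M\in\mathbb{I}_{[0,t]}$) and collecting the geometric weights $\eta^{j-1}$ gives
\begin{align*}
W_\delta(\hat{x}_t,x_t)\leq \eta^M W_\delta(\hat{x}_{t-M},x_{t-M})+\sum_{j=1}^{M}\eta^{j-1}\left(\sigma_1(\|\hat{w}_{t-j}-w_{t-j}\|)+\sigma_2(\|\hat{y}_{t-j}-y_{t-j}\|)\right).
\end{align*}
Finally, by Assumption~\ref{ass:disturbance} we have $\|\hat{w}_{t-j}-w_{t-j}\|\leq \|\hat{w}_{t-j}\|+\overline{w}$, and since $\sigma_1\in\mathcal{K}$ is strictly increasing it follows that $\sigma_1(\|\hat{w}_{t-j}-w_{t-j}\|)\leq \sigma_1(\overline{w}+\|\hat{w}_{t-j}\|)$; substituting this into the previous display yields precisely the claimed bound~\eqref{eq:observer_IOSS_bound_1}. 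Everything after the virtual-disturbance reformulation is a routine induction plus the triangle inequality, so no further obstacle is anticipated.
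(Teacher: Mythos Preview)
Your proposal is correct and follows essentially the same route as the paper: reinterpret the observer update as $\hat{x}_{k+1}=f_{\mathrm{w}}(\hat{x}_k,u_k,\hat{w}_k)$ via Assumption~\ref{ass:add_disturbance}, apply the $\delta$-IOSS dissipation inequality~\eqref{eq:IOSS_Lyap_2} one step at a time, bound $\|\hat{w}_k-w_k\|\leq\|\hat{w}_k\|+\overline{w}$ using Assumption~\ref{ass:disturbance}, and iterate over $k\in\mathbb{I}_{[t-M,t-1]}$. The only cosmetic difference is that the paper inserts the triangle-inequality bound before iterating rather than after, which is immaterial.
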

\begin{proof}
Due to Assumption~\ref{ass:add_disturbance} and the observer structure in Equation~\eqref{eq:observer}, we have
\begin{align*}
\hat{x}_{k+1}=&\hat{f}(\hat{x}_k,u_k,y_k)=f(\hat{x}_k,u_k)+\hat{L}(\hat{x}_k,u_k,y_k)\\
=&f(\hat{x}_k,u_k)+E_{\mathrm{x}}\hat{w}_k = f_{\mathrm{w}}(\hat{x}_k,u_k,\hat{w}_k).
\end{align*}
Applying Inequality~\eqref{eq:IOSS_Lyap_2} for some $k\in\mathbb{I}_{\geq 0}$ yields
\begin{align*}
&W_{\delta}(\hat{x}_{k+1},x_{k+1})=W_{\delta}(f_{\mathrm{w}}(\hat{x}_k,u_k,\hat{w}_k),f_{\mathrm{w}}(x_k,u_k,w_k))\\
\leq& \eta W_{\delta}(\hat{x}_k,x_k)+\sigma_1(\|\hat{w}_k-{w}_k\|)\\
&+\sigma_2(\|h_{\mathrm{w}}(\hat{x}_k,u_k,\hat{w}_k)-h_{\mathrm{w}}(x_k,u_k,w_k)\|)\\
\stackrel{\text{Ass.}~\ref{ass:disturbance}}{\leq}& \eta W_{\delta}(\hat{x}_k,x_k)+\sigma_1(\overline{w}+\|\hat{w}_k\|)+\sigma_2(\|\hat{y}_k-y_k\|). 
\end{align*}
Using this inequality repeatedly for $k\in\mathbb{I}_{[t-M,t-1]}$, we obtain Inequality~\eqref{eq:observer_IOSS_bound_1}.
\begin{comment}
\textbf{Part II: } Analogous to the preceding proof, $\hat{x}_t$ is the solution to the perturbed dynamics $f_{\mathrm{w}}$ with the ``disturbances'' $\hat{w}_k$. 
Thus, Condition~\eqref{eq:observability} directly yields Inequality~\eqref{eq:observer_IOSS_bound_2}. 
\end{comment}
\end{proof}
Note that given given a bound on the  estimation error at time $t-M$, Inequality~\eqref{eq:observer_IOSS_bound_1} provides a valid bound on the current estimation error using the measured quantities $\hat{w}_k,\hat{y}_k,y_k$, the disturbance bound $\overline{w}$, and detectability (Def.~\ref{def:IOSS_Lyap}). 
In the extreme case that the observer would exactly match the data ($\hat{w}=0$, $\hat{y}=y$), the bound yields an exponential decay in terms of the initial estimation error and an additive term $\sigma_1(\overline{w})$. 
Given Definition~\ref{def:IOSS_Lyap}, the only additional conservatism of the derived bound is the inequality $\|\hat{w}_k-w_k\|\leq\|\hat{w}_k\|+\overline{w}$, which is needed since $w_k$ is unknown.  
\begin{remark}
\label{rk:add_dist}
(Additive disturbances)
The main restriction posed in Assumption~\ref{ass:add_disturbance} is that the dynamics are one-step controllable w.r.t. disturbances $w$, similar to~\cite{knufer2018robust}. 
In case the system is not controllable w.r.t. $w$, we can artificially introduce additional additive disturbances $v$ in the model. 
Then, Inequality~\eqref{eq:IOSS_Lyap_2} from $\delta$-IOSS can be adapted to
\begin{align*}
%\label{eq:IOSS_Lyap_2_vict}
&W_\delta(f_{\mathrm{w}}(x,u,w)+v,f_{\mathrm{w}}(\tilde{x},u,\tilde{w}))\leq \eta W_\delta(x,\tilde{x})+\sigma_{\delta}(\|v-0\|)\nonumber\\
&+\sigma_{1}(\|w-\tilde{w}\|)+\sigma_{2}(\|h_{\mathrm{w}}(x,u,w)-h_{\mathrm{w}}(\tilde{x},u,\tilde{w}\|),
\end{align*}
assuming $W_\delta$ is uniformly continuous (Ass.~\ref{ass:IOSS_cont}). 
Correspondingly, Inequality~\eqref{eq:observer_IOSS_bound_1} remains valid with $\sigma_1(\overline{w}+\|\hat{w}_{t-j}\|)$ replaced by $\sigma_1(\overline{w})+\sigma_\delta(\| \hat{v}_k\|)$ with $v_k=\hat{L}(\hat{x}_k,u_k,y_k)$. 
%Thus, we have a system which is one-step controllable w.r.t. $v$ and satisfies a $\delta$-IOSS inequality. 
Hence, for all intense and purposes, Assumption~\ref{ass:add_disturbance} can be relaxed by defining an additional artificial additive disturbance $v$, assuming (uniform) continuity of $W_\delta$.
\end{remark}

By combining the $\delta$-IOSS based estimate with the stability properties of the observer (Ass.~\ref{ass:stable_observer}), we can compute a scalar bound $\overline{e}_t$ on the estimation error.
In particular, given some initial bound $\overline{e}_0\geq 0$, we recursively use the update 
\begin{subequations}
\label{eq:e_t_bound_1}
 \begin{align}
\label{eq:e_t_bound_1_a}
\overline{e}_{t,\mathrm{IOSS},M}:=&\sum_{j=1}^M \eta^{j-1} (\sigma_1(\overline{w}+\|\hat{w}_{t-j}\|)+\sigma_2(\|\hat{y}_{t-j}-y_{t-j}\|)) \nonumber\\
&+\eta^M\alpha_2(\alpha_5^{-1}(\overline{e}_{t-M})), \quad M\in\mathbb{I}_{[1,\min\{t,\overline{M}\}]}, \\
\label{eq:e_t_bound_1_b}
\overline{e}_{t,\mathrm{IOSS}}:=&\alpha_6(\alpha_1^{-1}(\min_{M\in\mathbb{I}_{[1,\min\{t,\overline{M}\}]}}\overline{e}_{t,\mathrm{IOSS},M})), \\
\label{eq:e_t_bound_1_c}
\overline{e}_t:=&\min\{\tilde{\eta} \overline{e}_{t-1}+\sigma_4(\overline{w}), \overline{e}_{t,\mathrm{IOSS}}\},
\end{align}
\end{subequations}
with some $\overline{M}\in\mathbb{I}_{\geq 1}$ specified by the user.
\begin{theorem}
\label{thm:IOSS}
Let Assumptions~\ref{ass:disturbance}, \ref{ass:stable_observer} and \ref{ass:add_disturbance} hold.
Suppose the system admits an (exponential-decay) $\delta$-IOSS Lyapunov function (Def.~\ref{def:IOSS_Lyap}) and that $V_{\mathrm{o}}(\hat{x}_0,x_0)\leq \overline{e}_0$.
Then, for all $t\in\mathbb{I}_{\geq 0}$ the estimates $\hat{x}_t$, $\overline{e}_t$ according to~\eqref{eq:observer} and \eqref{eq:e_t_bound_1} satisfy
\begin{subequations}
\label{eq:thm_IOSS}
\begin{align}
\label{eq:thm_IOSS_1}
V_{\mathrm{o}}(\hat{x}_t,x_t)\leq &\overline{e}_t,\\
\label{eq:thm_IOSS_2}
\|\hat{x}_{t+1}-f(\hat{x}_t,u_t)\|\leq& \gamma_{\mathrm{L},1}(\overline{e}_t)+\gamma_{\mathrm{L},2}(\overline{w}),\\
\label{eq:thm_IOSS_3}
\overline{e}_{t+k}\leq& \eta^k\overline{e}_t+\dfrac{1-\eta^k}{1-\eta}\sigma_4(\overline{w}), ~ k\in\mathbb{I}_{\geq 0}. 
\end{align}
\end{subequations}
\end{theorem}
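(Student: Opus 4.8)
\textbf{Proof plan for Theorem~\ref{thm:IOSS}.}

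The plan is to establish the three claims separately, relying on an induction over $t$ for~\eqref{eq:thm_IOSS_1} and then deriving~\eqref{eq:thm_IOSS_2}--\eqref{eq:thm_IOSS_3} as consequences. For~\eqref{eq:thm_IOSS_1}, I would argue by induction. The base case $t=0$ is the hypothesis $V_{\mathrm{o}}(\hat{x}_0,x_0)\leq\overline{e}_0$. For the induction step, assume $V_{\mathrm{o}}(\hat{x}_j,x_j)\leq\overline{e}_j$ for all $j\leq t$ and consider time $t+1$ (or time $t$, depending on indexing convention). There are two bounds combined in the $\min$ defining $\overline{e}_t$ in~\eqref{eq:e_t_bound_1_c}, and I need to show $V_{\mathrm{o}}(\hat{x}_t,x_t)$ is below each. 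The first term, $\tilde\eta\overline{e}_{t-1}+\sigma_4(\overline{w})$, follows directly from the robust-stability inequality~\eqref{eq:observer_prop_2} applied with $w=w_{t-1}$, using $\|w_{t-1}\|\leq\overline{w}$ (Ass.~\ref{ass:disturbance}) and the induction hypothesis $V_{\mathrm{o}}(\hat{x}_{t-1},x_{t-1})\leq\overline{e}_{t-1}$. The second term, $\overline{e}_{t,\mathrm{IOSS}}$, requires translating the $W_\delta$-based bound from Proposition~\ref{prop:IOSS} into a $V_{\mathrm{o}}$-based bound: using $\alpha_1(\|x-\tilde x\|)\leq W_\delta(x,\tilde x)$ and $V_{\mathrm{o}}(\hat x,x)\leq\alpha_6(\|x-\hat x\|)$ gives $V_{\mathrm{o}}(\hat x_t,x_t)\leq\alpha_6(\alpha_1^{-1}(W_\delta(\hat x_t,x_t)))$, and then bounding $W_\delta(\hat x_t,x_t)$ via~\eqref{eq:observer_IOSS_bound_1} where the terminal term $\eta^M W_\delta(\hat x_{t-M},x_{t-M})$ is controlled by $\eta^M\alpha_2(\|\hat x_{t-M}-x_{t-M}\|)\leq\eta^M\alpha_2(\alpha_5^{-1}(V_{\mathrm{o}}(\hat x_{t-M},x_{t-M})))\leq\eta^M\alpha_2(\alpha_5^{-1}(\overline{e}_{t-M}))$ by the induction hypothesis. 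Taking the minimum over the admissible horizons $M$ then yields $V_{\mathrm{o}}(\hat x_t,x_t)\leq\overline{e}_{t,\mathrm{IOSS}}$, matching~\eqref{eq:e_t_bound_1_b}. Combining the two bounds gives~\eqref{eq:thm_IOSS_1}.

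For~\eqref{eq:thm_IOSS_2}, I would simply apply the injection-bound condition~\eqref{eq:observer_prop_3_mod} of Assumption~\ref{ass:stable_observer}: since $\hat{x}_{t+1}-f(\hat{x}_t,u_t)=\hat L(\hat x_t,u_t,y_t)=\hat L(\hat x_t,u_t,h_{\mathrm{w}}(x_t,u_t,w_t))$ by the observer structure~\eqref{eq:observer}, condition~\eqref{eq:observer_prop_3_mod} gives $\|\hat L\|\leq\gamma_{\mathrm{L},1}(V_{\mathrm{o}}(\hat x_t,x_t))+\gamma_{\mathrm{L},2}(\|w_t\|)$, and then monotonicity of $\gamma_{\mathrm{L},1},\gamma_{\mathrm{L},2}\in\mathcal{K}$ together with $V_{\mathrm{o}}(\hat x_t,x_t)\leq\overline{e}_t$ (just proved) and $\|w_t\|\leq\overline{w}$ yields the claim.

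For~\eqref{eq:thm_IOSS_3}, I would use only the first branch of the $\min$ in~\eqref{eq:e_t_bound_1_c}, namely $\overline{e}_{t+1}\leq\tilde\eta\overline{e}_t+\sigma_4(\overline{w})$ — wait, this needs $\eta$ rather than $\tilde\eta$ in the final bound, so I should note that typically $\tilde\eta\leq\eta$ or that the statement uses $\eta$ as the relevant contraction rate; in any case, from $\overline{e}_{t+1}\leq\eta\,\overline{e}_t+\sigma_4(\overline{w})$ (taking $\eta$ as the larger of the two decay rates and using that the $\min$ is bounded by either argument), a straightforward induction over $k$ gives $\overline{e}_{t+k}\leq\eta^k\overline{e}_t+\sum_{i=0}^{k-1}\eta^i\sigma_4(\overline{w})=\eta^k\overline{e}_t+\frac{1-\eta^k}{1-\eta}\sigma_4(\overline{w})$, which is~\eqref{eq:thm_IOSS_3}. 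The main obstacle I anticipate is purely bookkeeping: carefully tracking the time indices (whether Proposition~\ref{prop:IOSS}'s bound at ``time $t$'' feeds into $\overline{e}_t$ or $\overline{e}_{t+1}$) and making sure the $\mathcal{K}_\infty$ compositions $\alpha_6\circ\alpha_1^{-1}$ and $\alpha_2\circ\alpha_5^{-1}$ are well-defined and monotone so that the $\min$ over $M$ commutes correctly with the outer class-$\mathcal{K}$ function in~\eqref{eq:e_t_bound_1_b}; there is no deep difficulty, only the need to chain the inequalities in the right order.
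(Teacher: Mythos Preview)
Your proposal is correct and matches the paper's proof essentially step for step: induction for~\eqref{eq:thm_IOSS_1} via the two-sided comparison $\alpha_1(\alpha_6^{-1}(V_{\mathrm{o}}))\leq W_\delta\leq \alpha_2(\alpha_5^{-1}(V_{\mathrm{o}}))$ combined with Proposition~\ref{prop:IOSS} and~\eqref{eq:observer_prop_2}, then~\eqref{eq:observer_prop_3_mod} for~\eqref{eq:thm_IOSS_2}, and the geometric series for~\eqref{eq:thm_IOSS_3}. Regarding your hesitation on $\eta$ versus $\tilde\eta$: the paper's own proof simply iterates $\overline{e}_{t}\leq\tilde\eta\,\overline{e}_{t-1}+\sigma_4(\overline{w})$ from~\eqref{eq:e_t_bound_1_c} without further comment, so the $\eta$ in~\eqref{eq:thm_IOSS_3} should be read as $\tilde\eta$ (consistent with its later use in~\eqref{eq:MPC_tube_e_pred}).
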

\begin{proof}
We first show Inequality~\eqref{eq:thm_IOSS_1} using a proof of induction. Suppose that $V_{\mathrm{o}}(\hat{x}_k,x_k)\leq \overline{e}_k$ $\forall k\in\mathbb{I}_{[0,t-1]}$. 
Inequalities~\eqref{eq:IOSS_Lyap_1} and \eqref{eq:observer_prop_1} imply
\begin{align}
\label{eq:proof_IOSS_1}
\alpha_1(\alpha_6^{-1}(V_{\mathrm{o}}(\hat{x},x)))\leq W_\delta(\hat{x},x)\leq \alpha_2(\alpha_5^{-1}(V_{\mathrm{o}}(\hat{x},x))).
\end{align}
Thus, Inequality~\eqref{eq:observer_IOSS_bound_1} ensures that $W_{\delta}(\hat{x}_t,x_t)\leq \min_M\overline{e}_{t,\mathrm{IOSS},M}$ and thus $V_{\mathrm{o}}(\hat{x}_t,x_t)\leq \alpha_6(\alpha_1^{-1}(\min_M\overline{e}_{t,\mathrm{IOSS},M}))=\overline{e}_{t,\mathrm{IOSS}}$.
 Condition~\eqref{eq:observer_prop_2} and Assumption~\ref{ass:disturbance} directly imply that $V_{\mathrm{o}}(\hat{x}_t,x_t)\leq \tilde{\eta}\overline{e}_{t-1}+\sigma_4(\overline{w})$.
 Thus, $V_{\mathrm{o}}(\hat{x}_t,x_t)\leq \overline{e}_t$. 
 Condition~\eqref{eq:thm_IOSS_2} follows directly from Inequality~\eqref{eq:observer_prop_3_mod}
using~\eqref{eq:thm_IOSS_1} and Assumption~\ref{ass:disturbance}. 
Condition~\eqref{eq:thm_IOSS_3} follows by applying the bound $\overline{e}_t\leq\tilde{\eta} \overline{e}_{t-1}+\sigma_4(\overline{w})$ from Equation~\eqref{eq:e_t_bound_1_c} $k$ times using the geometric series.
\end{proof}
Theorem~\ref{thm:IOSS} provides all the properties we required from the state estimation. 
In particular, the update rule~\eqref{eq:e_t_bound_1} yields valid bounds $\overline{e}_t$ (cf.~\eqref{eq:thm_IOSS_1}) on the estimation error, which also use recent data to potentially reduce conservatism. 
To allow for a reduction in conservatism, the ideal bound $\sigma_1(\overline{w})/(1-\eta)$ (perfectly matched data, $\hat{y}=0,\hat{w}=0$) should be smaller than the a priori observer bound $\sigma_4(\overline{w})/(1-\tilde{\eta})$. 
Condition~\eqref{eq:thm_IOSS_2} provides a bound on the difference between the nominal prediction model and the observer dynamics. 
Condition~\eqref{eq:thm_IOSS_3} in combination with Inequalities~\eqref{eq:thm_IOSS_1}--\eqref{eq:thm_IOSS_2}  allows us to predict valid bounds on the estimation error and the prediction mismatch. 
In the special case that we use no past data for the observer bounds ($\overline{M}=0$), we recover the simple error propagation used in our preliminary work~\cite{Kohler2019Output}. 
For $\overline{e}_0>\frac{\sigma_4(\overline{w})}{1-\tilde{\eta}}$, this is similar to the monotonically decreasing error sets used in~\cite{mayne2009robust} for linear systems.

%IOSS with common Lyap
%!TEX root = ./Output.tex
%%%%%%%%%%%%%%%%%%%%%%%%%%%%%%%%%%%%%%%%%%%%%%%%%%%%%%%%%%%%%%%%%%%%%%%%%%%%%%%
\subsubsection*{Identical Lyapunov function}
In the following, we investigate the important special case, when the $\delta$-Lyapunov function $V_{\mathrm{o}}$ (Ass.~\ref{ass:stable_observer}) is also a $\delta$-IOSS Lyapunov function.  
\begin{assumption} (Identical Lyapunov function)
\label{ass:common_Lyap}
The $\delta$-Lyapunov function $V_{\mathrm{o}}$ from Assumption~\ref{ass:stable_observer} is also an (exponential-decrease) $\delta$-IOSS Lyapunov function according to Definition~\ref{def:IOSS_Lyap}.  
\end{assumption}
This condition is  naturally satisfied if $f_{\mathrm{w}},h_{\mathrm{w}}$ are affine in $w$, $V_{\mathrm{o}}$ is quadratic and $\hat{L}$ according to~\eqref{eq:obs_linear_gain}, compare Appendix~\ref{app:identical}. 
Given $W_{\delta}=V_{\mathrm{o}}$ (Ass.~\ref{ass:common_Lyap}), the update rule~\eqref{eq:e_t_bound_1} simplifies to
\begin{subequations}
\label{eq:e_t_bound_2}
 \begin{align}
\label{eq:e_t_bound_2_a}
\overline{e}_{t,\mathrm{IOSS},M}:=&\sum_{j=1}^M \eta^{j-1} (\sigma_1(\overline{w}+\|\hat{w}_{t-j}\|)+\sigma_2(\|\hat{y}_{t-j}-y_{t-j}\|))\nonumber\\
&+\eta^M\overline{e}_{t-M}, \quad M\in\mathbb{I}_{[1,\min\{t,\overline{M}\}]}, \\
\label{eq:e_t_bound_2_b}
\overline{e}_{t,\mathrm{IOSS}}:=&\min_{M\in\mathbb{I}_{[1,\min\{t,\overline{M}\}]}}\overline{e}_{t,\mathrm{IOSS},M}, \\ 
\label{eq:e_t_bound_2_c}
\overline{e}_t:=&\min\{\tilde{\eta}\overline{e}_{t-1}+\sigma_4(\overline{w}), \overline{e}_{t,\mathrm{IOSS}}\}.
\end{align}
\end{subequations}
The following corollary shows that the results in Theorem~\ref{thm:IOSS} remain valid and that is suffices to set $\overline{M}=1$, i.e., only use the most recent measurement to compute $\overline{e}_t$. 
\begin{corollary}
\label{corol:IOSS}
Let Assumption~\ref{ass:common_Lyap} and the conditions in Theorem~\ref{thm:IOSS} hold.
Then, for any $t\in\mathbb{I}_{\geq 0}$, the estimates $\hat{x}_t$, $\overline{e}_t$ according to~\eqref{eq:observer} and \eqref{eq:e_t_bound_2} satisfy Inequalities~\eqref{eq:thm_IOSS}.
Furthermore, $\overline{e}_{t,\mathrm{IOSS}}=\overline{e}_{t,\mathrm{IOSS},1}$. 
\end{corollary}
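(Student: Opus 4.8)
The plan is to show that under Assumption~\ref{ass:common_Lyap} the full machinery of Theorem~\ref{thm:IOSS} collapses onto the simpler recursion \eqref{eq:e_t_bound_2}, and then to argue that taking $\overline{M}=1$ loses nothing. First, I would observe that with $W_\delta = V_{\mathrm{o}}$ the composite maps appearing in \eqref{eq:e_t_bound_1} simplify: $\alpha_2(\alpha_5^{-1}(\cdot))$ becomes the identity (since $\alpha_2,\alpha_5$ play the role of the upper/lower bounds of the \emph{same} function on both sides), and likewise $\alpha_6(\alpha_1^{-1}(\cdot))$ becomes the identity in \eqref{eq:e_t_bound_1_b}. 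Concretely, \eqref{eq:proof_IOSS_1} with $W_\delta=V_{\mathrm{o}}$ reads $\alpha_1(\alpha_6^{-1}(V_{\mathrm{o}})) \le V_{\mathrm{o}} \le \alpha_2(\alpha_5^{-1}(V_{\mathrm{o}}))$, so no inflation occurs when passing back and forth between the two Lyapunov descriptions. Substituting these identities into \eqref{eq:e_t_bound_1_a}--\eqref{eq:e_t_bound_1_c} yields exactly \eqref{eq:e_t_bound_2_a}--\eqref{eq:e_t_bound_2_c}. The induction proof of \eqref{eq:thm_IOSS} in Theorem~\ref{thm:IOSS} then applies verbatim, since that proof only used \eqref{eq:observer_IOSS_bound_1} (which holds because $V_{\mathrm{o}}=W_\delta$ is a $\delta$-IOSS Lyapunov function), Condition~\eqref{eq:observer_prop_2}, and the geometric-series step for \eqref{eq:thm_IOSS_3}; none of these is affected by the simplification.

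The substantive part is the claim $\overline{e}_{t,\mathrm{IOSS}} = \overline{e}_{t,\mathrm{IOSS},1}$, i.e. that the minimum over $M\in\mathbb{I}_{[1,\min\{t,\overline{M}\}]}$ in \eqref{eq:e_t_bound_2_b} is always attained at $M=1$. I would prove this by a one-step telescoping argument: show $\overline{e}_{t,\mathrm{IOSS},1} \le \overline{e}_{t,\mathrm{IOSS},M}$ for every admissible $M\ge 2$. The idea is to expand $\overline{e}_{t,\mathrm{IOSS},M}$ and peel off the last summand. Writing out \eqref{eq:e_t_bound_2_a},
\begin{align*}
\overline{e}_{t,\mathrm{IOSS},M}
= \sigma_1(\overline{w}+\|\hat{w}_{t-1}\|)+\sigma_2(\|\hat{y}_{t-1}-y_{t-1}\|)
+ \eta\Big(\textstyle\sum_{j=2}^{M}\eta^{j-2}(\cdots) + \eta^{M-1}\overline{e}_{t-M}\Big),
\end{align*}
and the bracketed term is, by definition (shifting the index), $\overline{e}_{t-1,\mathrm{IOSS},M-1}$. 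Hence $\overline{e}_{t,\mathrm{IOSS},M} = \sigma_1(\overline{w}+\|\hat{w}_{t-1}\|)+\sigma_2(\|\hat{y}_{t-1}-y_{t-1}\|) + \eta\,\overline{e}_{t-1,\mathrm{IOSS},M-1}$. Since $\overline{e}_{t-1} = \min\{\tilde\eta\overline{e}_{t-2}+\sigma_4(\overline{w}),\,\overline{e}_{t-1,\mathrm{IOSS}}\} \le \overline{e}_{t-1,\mathrm{IOSS}} \le \overline{e}_{t-1,\mathrm{IOSS},M-1}$, monotonicity of the expression in its last argument gives $\overline{e}_{t,\mathrm{IOSS},1} = \sigma_1(\overline{w}+\|\hat{w}_{t-1}\|)+\sigma_2(\|\hat{y}_{t-1}-y_{t-1}\|)+\eta\,\overline{e}_{t-1} \le \overline{e}_{t,\mathrm{IOSS},M}$, which is the claim.

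The main obstacle I anticipate is bookkeeping the index shifts cleanly in the telescoping step — in particular being careful that $\overline{e}_{t-1,\mathrm{IOSS},M-1}$ is genuinely admissible (it requires $M-1 \le \min\{t-1,\overline{M}\}$, which holds since $M\le \min\{t,\overline{M}\}$ and $M\le\overline{M}$ for $\overline{M}=1$ is the relevant case, while for larger $\overline{M}$ the inequality $M-1\le\min\{t-1,\overline{M}\}$ is immediate). There is also a minor base-case subtlety: at $t=1$ the only admissible $M$ is $1$, so the identity holds trivially, and for $t\ge 2$ one uses the recursion above with the fact that $\overline{e}_{t-1}$ is already defined by the induction hypothesis of Theorem~\ref{thm:IOSS}. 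Once the index arithmetic is pinned down, everything else is routine monotonicity of $\mathcal{K}$-functions and the nonnegativity of the summands.
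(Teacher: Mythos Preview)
Your proposal is correct and follows essentially the same approach as the paper: the first part (reducing \eqref{eq:e_t_bound_1} to \eqref{eq:e_t_bound_2} via $W_\delta=V_{\mathrm{o}}$ so that \eqref{eq:proof_IOSS_1} becomes an identity, then invoking the proof of Theorem~\ref{thm:IOSS} verbatim) is exactly what the paper does. For the second part, the paper telescopes from the far end---computing $\overline{e}_{t,\mathrm{IOSS},M+1}-\overline{e}_{t,\mathrm{IOSS},M}=\eta^M(\sigma_{t-M-1}+\eta\overline{e}_{t-M-1}-\overline{e}_{t-M})\ge 0$ to establish monotonicity in $M$---whereas you telescope from the near end via $\overline{e}_{t,\mathrm{IOSS},M}=\sigma_{t-1}+\eta\,\overline{e}_{t-1,\mathrm{IOSS},M-1}\ge \sigma_{t-1}+\eta\,\overline{e}_{t-1}=\overline{e}_{t,\mathrm{IOSS},1}$; both rest on the same key fact $\overline{e}_k\le\overline{e}_{k,\mathrm{IOSS},1}$ and are equally valid.
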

\begin{proof}
The proof of Theorem~\ref{thm:IOSS} remains unchanged, except for Inequality~\eqref{eq:proof_IOSS_1}, which is replaced by identity due to Assumption~\ref{ass:common_Lyap}. 
Abbreviate $\sigma_t=\sigma_1(\overline{w}+\|\hat{w}_{t}\|)+\sigma_2(\|\hat{y}_{t}-y_{t}\|)$, $t\in\mathbb{I}_{\geq 0}$. 
For any $t\in\mathbb{I}_{\geq 0}$, $M\in\mathbb{I}_{[0,\min\{\overline{M},t\}-1]}$ we have
\begin{align*}
&\overline{e}_{t,\mathrm{IOSS},M+1}-\overline{e}_{t,\mathrm{IOSS},M}
=\eta^M(\sigma_{t-M-1}+\eta\overline{e}_{t-M-1}- \overline{e}_{t-M}).
\end{align*}
Given that 
\begin{align*}
\overline{e}_{t-M}\leq \overline{e}_{t-M,\mathrm{IOSS},1}=\eta\overline{e}_{t-M-1}+\sigma_{t-M-1},
\end{align*}
we get $\overline{e}_{t,\mathrm{IOSS},M}\leq \overline{e}_{t,\mathrm{IOSS},M+1}$.
Since this holds for all $M\in\mathbb{I}_{[0,\min\{\overline{M},t\}-1]}$, we have
$\overline{e}_{t,\mathrm{IOSS},1}\leq\overline{e}_{t,\mathrm{IOSS},M}$ for all $M\in\mathbb{I}_{[1,\min\{t,\overline{M}\}]}$ and thus $\overline{e}_{t,\mathrm{IOSS}}=\overline{e}_{t,\mathrm{IOSS},1}$. 
\end{proof}
Based on this result we can reduce the update rule to
 \begin{align}
\label{eq:e_t_bound_3}
\overline{e}_{t+1,\mathrm{IOSS}}:=&\eta\overline{e}_{t}+\sigma_1(\overline{w}+\|\hat{w}_{t}\|)+\sigma_2(\|\hat{y}_{t}-y_{t}\|)), \nonumber\\
\overline{e}_{t+1}:=&\min\{\tilde{\eta}\overline{e}_{t}+\sigma_4(\overline{w}), \overline{e}_{t+1,\mathrm{IOSS}}\},
\end{align}
which can be evaluated very efficiently. 
We point out that for the general case considered in Theorem~\ref{thm:IOSS}, a larger value of $\overline{M}$ is typically advantageous since the conservatism induced by the factor $\alpha_2\circ\alpha_5^{-1}$ vanishes for large $M$.

%outlier
%\input{Estimation_2_3}
%observable
%\input{Estimation_2_4}

%Set
%!TEX root = ./Output.tex
%%%%%%%%%%%%%%%%%%%%%%%%%%%%%%%%%%%%%%%%%%%%%%%%%%%%%%%%%%%%%%%%%%%%%%%%%%%%%%%
\subsection{Estimation error bounds using set-membership estimation}
\label{sec:estimation_setmember}
In the following, we discuss set-membership methods to compute bounds on the estimation error. 
First, we discuss the non-falsified set.
Then, we present an optimization-based estimate using a fixed scalar parametrization and a moving horizon estimation.

\subsubsection*{Non-falsified set}
A classical approach to compute the set of possible states $x_t$ given past measurements $(y_k,u_k)_{k=0}^{t-1}$ is the so called non-falsified set.
In particular, given some set $\mathbb{E}_t\subseteq\mathbb{X}$ with $x_{t}\in\mathbb{E}_t$, the disturbance bound (Ass.~\ref{ass:disturbance}) and the measured input and output $u_t,y_t$, the non-falsified set can be updated as 
$\mathbb{E}_{t+1}:=\mathcal{F}(\mathbb{E}_t,u_t,y_t)$
with the set-valued map
\begin{align}
\label{eq:set_propagate}
\mathcal{F}(\mathbb{E},u,y):=\{&f_{\mathrm{w}}(x,u,w)|~\exists (w,x)\in\mathbb{W}\times\mathbb{E}: \\
&\|w\|\leq \overline{w},~y=h_{\mathrm{w}}(x,u,w)\}.\nonumber
\end{align}
The resulting sets $\mathbb{E}_t$ are the smallest possible sets that are guaranteed to contain the true state $x_t$, given the prior assumptions and measurements.

%linear
In case of linear systems with polytopic disturbance bounds, polytopic sets $\mathbb{E}_t$ can be efficiently computed by stacking the corresponding inequality constraints~\cite{bemporad2000output}.
However, the complexity of the set $\mathbb{E}_t$ increases unboundedly during runtime and simply discarding old measurements may yield recursive feasibility issues in the MPC, compare~\cite{chisci2002feasibility}. 
This feasibility issue has been solved in~\cite{brunner2018enhancing} by using the past $M-k$ measurements to define a feasible set $\mathbb{E}_{k|t}$, which is used for the robust predictions $k$ steps into the future, $k\in\mathbb{I}_{[0,N]}$. 
In case of linear systems with ellipsoidal bounds, ellipsoidal sets $\mathbb{E}_t$ can be computed using the methods developed in~\cite{bertsekas1971recursive}.

\subsubsection*{Fixed-complexity block updates}
The complexity and feasibility issues associated with the non-falsified set can be solved by using a finite horizon window to compute a set $\mathbb{E}_t$ with a fixed parametrization  that over-approximates the non-falsified set. 
For linear systems with polytopic disturbance bounds, corresponding polytopes $\mathbb{E}_t$ can be computed using block recursive updates based on linear programs (LPs) (cf.~\cite{chisci2002feasibility}), which can be integrated in an MHE-MPC formulation (cf.~\cite{dong2020homothetic_offset,dong2020homothetic}). %,chisci1998block

We extend this idea to the nonlinear setting by considering sets of the form $\mathbb{E}_t:=\{x\in\mathbb{X}|~V_{\mathrm{o}}(\hat{x}_t,x)\leq \overline{e}_t\}$ which are centred around the Luenberger state estimate $\hat{x}_t$ with some variable scaling $\overline{e}_t\geq 0$. 
At time $t$, given the past $M_t=\min\{\overline{M},t\}$ measurements and some initial bound $\overline{e}_{t-M_t}\geq 0$, we solve the following nonlinear program (NLP)%:
\begin{subequations}
\label{eq:NLP_set_estimation}
\begin{align}
\hat{\gamma}_{t,M_t}:=&\max_{\overline{w}_{\cdot|t},\overline{x}_{-M_t|t}} V_{\mathrm{o}}(\hat{x}_t,\overline{x}_{0|t})\\
\label{eq:NLP_set_estimation_init}
\text{s.t. }& V_{\mathrm{o}}(\overline{x}_{-M_t|t},\hat{x}_{t-M_t})\leq \overline{e}_{t-M_t},\\
\label{eq:NLP_set_estimation_dyn}
&\overline{x}_{k+1|t}=f_{\mathrm{w}}(\overline{x}_{k|t},u_{t+k},\overline{w}_{k|t}),~ k\in\mathbb{I}_{[-M_t,-1]},\\
\label{eq:NLP_set_estimation_y}
&y_{t-k}=h_{\mathrm{w}}(\overline{x}_{k|t},u_{t+k},\overline{w}_{k|t}),~ k\in\mathbb{I}_{[-M_t,-1]},\\
\label{eq:NLP_set_estimation_w}
&\|\overline{w}_{k|t}\|\leq \overline{w},~ k\in\mathbb{I}_{[-M_t,-1]}.
\end{align}
\end{subequations}
A maximizer is denoted by $\overline{x}^*_{-M_t|t},\overline{w}^*_{\cdot|t}$. 
In the special case of linear dynamics $f_{\mathrm{w}},h_{\mathrm{w}}$, polytopic bounds on $w$, and a polytopic function $V_{\mathrm{o}}$, the optimization problem~\eqref{eq:NLP_set_estimation} reduces to an LP, similar to the updates used in~\cite{dong2020homothetic_offset,chisci2002feasibility,dong2020homothetic}.

In order to provide recursively feasible and predictable bounds on the magnitude of the observer error, we additionally\footnote{%
For $M=1$, Assumption~\ref{ass:stable_observer} intuitively ensures $\hat{\gamma}_{t,M_t}\leq \tilde{\eta}\bar{e}_{t-1}+\sigma_4(\bar{w})$.
} use the stability properties of the observer (Ass.~\ref{ass:stable_observer}) to define
\begin{align}
\label{eq:NLP_set_estimation_update}
\overline{e}_t:=\min\{\hat{\gamma}_{t,M_t},\tilde{\eta}\overline{e}_{t-1}+\sigma_4(\overline{w})\},
\end{align}
analogous to the update in Equation~\eqref{eq:e_t_bound_1_c}.  
\begin{theorem}
\label{thm:set}
Let Assumptions~\ref{ass:disturbance} and \ref{ass:stable_observer} hold.
Suppose that $V_{\mathrm{o}}(\hat{x}_0,x_0)\leq \overline{e}_0$.
Then, for all $t,\overline{M}\in\mathbb{I}_{\geq 0}$ the estimates $\hat{x}_t$, $\overline{e}_t$ according to~\eqref{eq:observer} and \eqref{eq:NLP_set_estimation}--\eqref{eq:NLP_set_estimation_update} satisfy Inequalities~\eqref{eq:thm_IOSS}. 
\end{theorem}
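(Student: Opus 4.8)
The plan is to run the same induction as in the proof of Theorem~\ref{thm:IOSS}, with the $\delta$-IOSS bound there replaced by the set-membership bound $\hat{\gamma}_{t,M_t}$ returned by the NLP~\eqref{eq:NLP_set_estimation}. The core statement is~\eqref{eq:thm_IOSS_1}, i.e.\ $V_{\mathrm{o}}(\hat{x}_t,x_t)\leq\overline{e}_t$; once it is available, \eqref{eq:thm_IOSS_2} and \eqref{eq:thm_IOSS_3} follow exactly as in Theorem~\ref{thm:IOSS}. I would prove~\eqref{eq:thm_IOSS_1} by induction on $t$: the base case $t=0$ is the hypothesis $V_{\mathrm{o}}(\hat{x}_0,x_0)\leq\overline{e}_0$, and for the step one assumes $V_{\mathrm{o}}(\hat{x}_k,x_k)\leq\overline{e}_k$ for all $k\in\mathbb{I}_{[0,t-1]}$. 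Since $M_t=\min\{\overline{M},t\}$, the index $t-M_t$ lies in $\mathbb{I}_{[0,t-1]}$, so the hypothesis is available there (for the degenerate case $\overline{M}=0$ the update~\eqref{eq:NLP_set_estimation_update} reduces to the pure observer recursion $\overline{e}_t=\tilde{\eta}\overline{e}_{t-1}+\sigma_4(\overline{w})$ and~\eqref{eq:thm_IOSS_1} is immediate from Assumption~\ref{ass:stable_observer}).

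The new ingredient is the bound $V_{\mathrm{o}}(\hat{x}_t,x_t)\leq\hat{\gamma}_{t,M_t}$. I would establish it by exhibiting the true trajectory as a feasible candidate for~\eqref{eq:NLP_set_estimation}: set $\overline{x}_{k|t}:=x_{t+k}$ for $k\in\mathbb{I}_{[-M_t,0]}$ and $\overline{w}_{k|t}:=w_{t+k}$ for $k\in\mathbb{I}_{[-M_t,-1]}$. Then the dynamics constraints~\eqref{eq:NLP_set_estimation_dyn} and the output constraints~\eqref{eq:NLP_set_estimation_y} hold because $(x,u,y,w)$ satisfies the plant equations~\eqref{eq:sys_w}, the disturbance constraints~\eqref{eq:NLP_set_estimation_w} hold by Assumption~\ref{ass:disturbance}, and the initial constraint~\eqref{eq:NLP_set_estimation_init} holds by the induction hypothesis applied at $t-M_t$ together with the two-sided bound~\eqref{eq:observer_prop_1} (so that the ordering of the arguments of $V_{\mathrm{o}}$ is immaterial). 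Since this candidate has $\overline{x}_{0|t}=x_t$, optimality in~\eqref{eq:NLP_set_estimation} gives $\hat{\gamma}_{t,M_t}\geq V_{\mathrm{o}}(\hat{x}_t,x_t)$. Independently, applying~\eqref{eq:observer_prop_2} along the true trajectory at time $t-1$ (using $y_{t-1}=h_{\mathrm{w}}(x_{t-1},u_{t-1},w_{t-1})$), the induction hypothesis at $t-1$, and Assumption~\ref{ass:disturbance} yields $V_{\mathrm{o}}(\hat{x}_t,x_t)\leq\tilde{\eta}\overline{e}_{t-1}+\sigma_4(\overline{w})$. Taking the minimum of the two bounds and using the definition~\eqref{eq:NLP_set_estimation_update} gives $V_{\mathrm{o}}(\hat{x}_t,x_t)\leq\overline{e}_t$, which closes the induction.

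The two remaining inequalities are then routine. For~\eqref{eq:thm_IOSS_2}, note $\hat{x}_{t+1}-f(\hat{x}_t,u_t)=\hat{L}(\hat{x}_t,u_t,y_t)$ with $y_t=h_{\mathrm{w}}(x_t,u_t,w_t)$, so~\eqref{eq:observer_prop_3_mod}, the already-established~\eqref{eq:thm_IOSS_1}, monotonicity of $\gamma_{\mathrm{L},1}$, and Assumption~\ref{ass:disturbance} give the claimed bound. For~\eqref{eq:thm_IOSS_3}, the update~\eqref{eq:NLP_set_estimation_update} always enforces $\overline{e}_t\leq\tilde{\eta}\overline{e}_{t-1}+\sigma_4(\overline{w})$, so iterating this bound $k$ times and summing the resulting geometric series yields the decay estimate.

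I expect the only genuine difficulty to be the feasibility argument for the NLP: one has to align the time indices of the decision variables $\overline{x}_{\cdot|t},\overline{w}_{\cdot|t}$ with the true plant segment $(x_{t-M_t},\dots,x_t)$ and observe that the initial-set constraint~\eqref{eq:NLP_set_estimation_init} can only be certified through the recursion hypothesis — which is precisely why the additional term $\tilde{\eta}\overline{e}_{t-1}+\sigma_4(\overline{w})$ in~\eqref{eq:NLP_set_estimation_update} is indispensable for obtaining a well-posed, self-consistent recursion for $\overline{e}_t$. Everything else is a verbatim transcription of the corresponding steps in the proof of Theorem~\ref{thm:IOSS}.
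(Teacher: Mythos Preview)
Your proposal is correct and follows essentially the same approach as the paper: induction on $t$ for~\eqref{eq:thm_IOSS_1}, feasibility of the true trajectory in the NLP~\eqref{eq:NLP_set_estimation} to obtain $\hat{\gamma}_{t,M_t}\geq V_{\mathrm{o}}(\hat{x}_t,x_t)$, the observer dissipation~\eqref{eq:observer_prop_2} for the second branch of the minimum, and deferral of~\eqref{eq:thm_IOSS_2}--\eqref{eq:thm_IOSS_3} to the arguments of Theorem~\ref{thm:IOSS}. Your write-up is in fact more explicit than the paper's (you spell out the $\overline{M}=0$ case and flag the argument ordering in~\eqref{eq:NLP_set_estimation_init}), but the logical skeleton is identical.
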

\begin{proof}
We first show Inequality~\eqref{eq:thm_IOSS_1} using a proof of induction. 
Suppose that $V_{\mathrm{o}}(\hat{x}_k,x_k)\leq \overline{e}_k$ $\forall k\in\mathbb{I}_{[0,t-1]}$. 
Using Assumption~\ref{ass:disturbance}, the true state and disturbance sequence satisfy the constraints in~\eqref{eq:NLP_set_estimation} and thus $\hat{\gamma}_t\geq V_{\mathrm{o}}(\hat{x}_t,x_t)$. 
Satisfaction of Condition~\eqref{eq:thm_IOSS_1} follows from the update ~\eqref{eq:NLP_set_estimation_update} and Assumptions~\ref{ass:disturbance}, \ref{ass:stable_observer}. 
Conditions~\eqref{eq:thm_IOSS_2}--\eqref{eq:thm_IOSS_3} follow using the same arguments as in Theorem~\ref{thm:IOSS}. 
\end{proof}
The resulting bound $\overline{e}_t$ shares the same theoretical properties (cf.~\eqref{eq:thm_IOSS}) as the bounds used in Theorem~\ref{thm:IOSS}.   
\begin{remark}(Advantages and limitations)
\label{rk:set_estimation}
Compared to the bounds derived in Theorem~\ref{thm:IOSS}, the set estimation method used in Theorem~\ref{thm:set} has multiple advantages. 
%no additive disturbances
In particular, we can directly deal with more general disturbance characterizations (Ass.~\ref{ass:add_disturbance} is not needed). 
%no detectability
Furthermore, although detectability (cf.~Def.~\ref{def:IOSS}/\ref{def:IOSS_Lyap}) of the system is implicitly needed to ensure satisfaction of Assumption~\ref{ass:stable_observer},  the updates in Theorem~\ref{thm:set} do not use the corresponding constants or require an identical Lyapunov function (cf. Ass.~\ref{ass:common_Lyap}).
% less conservative
One of the main benefits of the set estimation (Thm.~\ref{thm:set}) is the fact that the exact nonlinear system equations are used to compute $\hat{\gamma}_t$, instead of using (possibly conservative) bounds $\overline{e}_{t,\mathrm{IOSS}}$ based on the $\delta$-IOSS  Lyapunov function (Def.~\ref{def:IOSS_Lyap}), thus resulting in less conservative estimates. 
  %
%Drawback
However, the set-membership approach also suffers from some inherent limitations.
%NLP
The update~\eqref{eq:NLP_set_estimation_update} requires the solution to the NLP~\eqref{eq:NLP_set_estimation} and thus significantly increases the computational complexity.
%local minima
In particular, this optimization problem is non-convex and the guarantees in Theorem~\ref{thm:set} only hold if the \textit{global} optimum is found (which is not necessarily required for the optimization problems appearing in MPC and MHE, cf.~\cite{scokaert1999suboptimal,schiller2020robust}). 
%
%Fragile
Furthermore, since set-membership methods use the exact model characterization, they can be \textit{fragile} to \textit{outlier} noise (in contrast to the $\delta$-IOSS bounds, cf. App.~\ref{app:IOSS}). 
In particular, if there exists a single disturbance realization $w_t$ which does not satisfy Assumption~\ref{ass:disturbance}, then the non-falsified set can be empty and the optimization problem~\eqref{eq:NLP_set_estimation} becomes infeasible.  
\end{remark}

\begin{remark}
\label{rk:set_estimation_M1}
(Existing set-valued state estimation for nonlinear systems)
We point out that there also exists a rich literature on set-valued state estimation that does not require the high computational cost of the update in Theorem~\ref{thm:set} (cf.~\cite{shamma1997approximate,rego2020guaranteed}). 
In particular, these methods only use the latest measurement ($M=1$) to update the set $\mathbb{E}_t$ by suitably over-approximating the nonlinear propagation $\mathcal{F}$~\eqref{eq:set_propagate} (e.g., using a local Taylor approximation or interval arithmetic). 
The resulting sets $\mathbb{E}_t$ can be parametrized without some nominal Luenberger observer, e.g., using constrained zonotopes (cf.~\cite{rego2020guaranteed}). 
However, it is not obvious how to ensure Conditions~\eqref{eq:thm_IOSS_2}--\eqref{eq:thm_IOSS_3} for these set-valued estimation methods. 
\end{remark}

%MHE
%!TEX root = ./Output.tex
%%%%%%%%%%%%%%%%%%%%%%%%%%%%%%%%%%%%%%%%%%%%%%%%%%%%%%%%%%%%%%%%%%%%%%%%%%%%%%%
\subsection{Moving horizon estimation}
\label{sec:estimation_MHE}
In Sections~\ref{sec:IOSS}--\ref{sec:estimation_setmember}, we derived valid upper bounds on the estimation error $\overline{e}_t$ for a state estimate resulting from a Luenberger-like observer~\eqref{eq:observer}. 
In the following, we show how to compute ``optimal'' state estimates $\hat{x}_t$ resulting in a smaller bound on the estimation error $\overline{e}_t$, by using an MHE scheme. 

In order to construct a simple arrival cost\footnote{%
In case a simple continuity bound $\sigma_\delta$ is not known, the arrival cost can be replaced by an initial state constraint $\hat{x}_{-M_t|t}=\hat{x}_{t-M_t}$. 
In this case, Condition~\eqref{eq:MHE_properties_3} needs to be replaced by a different bound, e.g., using additive disturbances (Ass.~\ref{ass:add_disturbance}) to create a suitable candidate solution (cf.~\cite[Thm.~3]{knufer2018robust}).} 
for the MHE scheme, we assume that the $\delta$-IOSS Lyapunov function $W_\delta$ is uniformly continuous. 
\begin{assumption}
\label{ass:IOSS_cont}
(Continuity $\delta$-IOSS Lyapunov function)
There exists a function $\sigma_\delta\in\mathcal{K}$, such that for any $x,\hat{x},\tilde{x}\in\mathbb{X}$, the $\delta$-IOSS Lyapunov function $W_\delta$ (Def.~\ref{def:IOSS_Lyap}) satisfies
\begin{align}
\label{eq:IOSS_cont}
|W_{\delta}(\hat{x},x)-W_{\delta}(\tilde{x},x)|\leq \sigma_\delta(\|\hat{x}-\tilde{x}\|).
\end{align}
\end{assumption}
We note that Condition~\eqref{eq:IOSS_cont} can be ensured on the compact set specified in Assumption~\ref{ass:boundedness} if $W_\delta$ is continuous, which can in turn be ensured by suitable continuity properties on the dynamics (cf.~\cite[Thm.~11]{allan2020detect}) .

At time $t$, the MHE scheme considers past input and output data $(u,y)$  in a window of length $M_t=\min\{t,M\}$, $M\in\mathbb{I}_{\geq 0}$, the past estimate $\hat{x}_{t-M_t}$, and solves the following NLP:
\begin{subequations}
\label{eq:MHE_IOSS}
\begin{align}
\label{eq:MHE_IOSS_cost}
\min_{\hat{w}_{\cdot|t},\hat{x}_{-M_t|t}}& \sum_{j=1}^{M_t}\eta^{j-1}\left(\sigma_1(\overline{w}+\|\hat{w}_{-j|t}\|)+\sigma_2(\|\hat{y}_{-j|t}-y_{t-j}\|) \right)\nonumber\\
&\eta^{M_t}\sigma_{\delta}(\|\hat{x}_{-M_t|t}-\hat{x}_{t-M_t}\|)\\
\label{eq:MHE_IOSS_1}
\text{s.t. }&\hat{x}_{j+1|t}=f_{\mathrm{w}}(\hat{x}_{j|t},u_{t+j},\hat{w}_{k|t}),~ j\in\mathbb{I}_{[-M_t,-1]},\\
\label{eq:MHE_IOSS_2}
&\hat{y}_{j|t}=h_{\mathrm{w}}(\hat{x}_{j|t},u_{t+j},\hat{w}_{j|t}),~ j\in\mathbb{I}_{[-M_t,-1]}.
\end{align}
\end{subequations}
We denote a minimizer to~\eqref{eq:MHE_IOSS} by $\hat{w}^*_{\cdot|t},\hat{x}^*_{-M_t|t}$ with the corresponding estimated state and output trajectory $\hat{x}^*_{\cdot|t}$, $\hat{y}^*_{\cdot|t}$. 
Note that in the cost we choose $\eta,\sigma_1,\sigma_2,\sigma_\delta,\eta$ based on the $\delta$-IOSS Lyapunov function $W_\delta$. 
A similar exponentially decaying cost has been previously suggested in~\cite{knufer2018robust}, compare also~\cite{knuefer2021MHE} for a for a more general asymptotic discounting. 
The MHE estimate is given by
\begin{subequations}
\label{eq:MHE_update}
\begin{align}
\label{eq:MHE_update_1}
\hat{x}_t:=&\hat{x}^*_{0|t},\\
\label{eq:MHE_update_2}
\overline{e}_{t}:=& \sum_{j=1}^{M_t}\eta^{j-1}\left(\sigma_1(\overline{w}+\|\hat{w}^*_{-j|t}\|)+\sigma_2(\|\hat{y}^*_{-j|t}-y_{t-j}\|) \right)\nonumber\\
&+\eta^{M_t}(\overline{e}_{t-M_t}+\sigma_\delta(\|\hat{x}^*_{-M_t|t}-\hat{x}_{t-M_t}\|)).
\end{align}
\end{subequations}
The following theorems summarizes the theoretical properties.  
\begin{theorem}
\label{thm:MHE}
Suppose the system admits an (exponential-decay) $\delta$-IOSS Lyapunov function (Def.~\ref{def:IOSS_Lyap}) $W_\delta$, Assumptions~\ref{ass:disturbance}, \ref{ass:boundedness}, and \ref{ass:IOSS_cont} hold, and  $W_{\delta}(\hat{x}_0,x_0)\leq \overline{e}_0$.
Then, there exists a function $\sigma_{\mathrm{f}}\in\mathcal{K}$ such that for any $t\in\mathbb{I}_{\geq 0}$, the estimates~\eqref{eq:MHE_update} satisfy
\begin{subequations}
\label{eq:MHE_properties}
\begin{align}
\label{eq:MHE_properties_1}
&W_{\delta}(\hat{x}_t,x_t)\leq \overline{e}_t,\\
\label{eq:MHE_properties_2}
&\|\hat{x}_{t+1}-f(\hat{x}_t,u_t)\|
\leq  \sigma_{\mathrm{f}}(\alpha_1^{-1}(\|\overline{e}_{t}\|)+\overline{w})+\alpha_1^{-1}(\|\overline{e}_{t+1}\|),\\
\label{eq:MHE_properties_3}
&\overline{e}_t\leq \dfrac{1-\eta^{M_t}}{1-\eta}\sigma_1(2\overline{w})+\eta^{M_t}(\overline{e}_{t-M_t}+\sigma_\delta(\alpha_1^{-1}(\overline{e}_{t-M_t}))).
\end{align}
\end{subequations}
\end{theorem}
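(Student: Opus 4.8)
The plan is to establish the three claims~\eqref{eq:MHE_properties_1}--\eqref{eq:MHE_properties_3} by induction over $t$, treating~\eqref{eq:MHE_properties_1} as the backbone on which the other two depend; throughout we use that $\overline{e}_t\geq 0$, so $\|\overline{e}_t\|=\overline{e}_t$. For~\eqref{eq:MHE_properties_1}, assume $W_\delta(\hat{x}_k,x_k)\leq\overline{e}_k$ for all $k\in\mathbb{I}_{[0,t-1]}$ (the case $t=0$, $M_0=0$ is immediate from $W_\delta(\hat{x}_0,x_0)\leq\overline{e}_0$ and $\hat{x}_0=\hat{x}^*_{0|0}$). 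The key step is to apply the $\delta$-IOSS dissipation inequality~\eqref{eq:IOSS_Lyap_2} along the pair formed by the optimal MHE trajectory $\hat{x}^*_{\cdot|t}$ — which by~\eqref{eq:MHE_IOSS_1}--\eqref{eq:MHE_IOSS_2} satisfies the perturbed dynamics with disturbances $\hat{w}^*_{\cdot|t}$ and outputs $\hat{y}^*_{\cdot|t}$ — and the true trajectory $x_{t+\cdot}$ driven by the same inputs. Iterating this inequality from $j=-M_t$ to $j=-1$, using $\|\hat{w}^*_{j|t}-w_{t+j}\|\leq\|\hat{w}^*_{j|t}\|+\overline{w}$ (Assumption~\ref{ass:disturbance}), yields $W_\delta(\hat{x}_t,x_t)\leq\eta^{M_t}W_\delta(\hat{x}^*_{-M_t|t},x_{t-M_t})+\sum_{j=1}^{M_t}\eta^{j-1}(\sigma_1(\overline{w}+\|\hat{w}^*_{-j|t}\|)+\sigma_2(\|\hat{y}^*_{-j|t}-y_{t-j}\|))$. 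I would then bound the arrival term via the continuity Assumption~\ref{ass:IOSS_cont}, $W_\delta(\hat{x}^*_{-M_t|t},x_{t-M_t})\leq W_\delta(\hat{x}_{t-M_t},x_{t-M_t})+\sigma_\delta(\|\hat{x}^*_{-M_t|t}-\hat{x}_{t-M_t}\|)$, and invoke the induction hypothesis $W_\delta(\hat{x}_{t-M_t},x_{t-M_t})\leq\overline{e}_{t-M_t}$; the resulting expression is exactly the definition~\eqref{eq:MHE_update_2} of $\overline{e}_t$. Note that only feasibility of $\hat{x}^*_{\cdot|t}$, not optimality, enters here.

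For~\eqref{eq:MHE_properties_3} the idea is to use optimality. Comparing~\eqref{eq:MHE_IOSS_cost} with~\eqref{eq:MHE_update_2} shows $\overline{e}_t$ equals the optimal MHE cost plus $\eta^{M_t}\overline{e}_{t-M_t}$, so I would upper bound the optimal cost by the cost of the candidate given by the true trajectory, i.e. $\hat{x}_{-M_t|t}=x_{t-M_t}$, $\hat{w}_{\cdot|t}=(w_{t-M_t},\dots,w_{t-1})$, which is feasible by Assumption~\ref{ass:disturbance} and reproduces $\hat{y}_{\cdot|t}=(y_{t-M_t},\dots,y_{t-1})$. Along this candidate the $\sigma_2$ terms vanish, $\sigma_1(\overline{w}+\|w_{t-j}\|)\leq\sigma_1(2\overline{w})$, and the arrival term is bounded using~\eqref{eq:IOSS_Lyap_1} together with the already-proven~\eqref{eq:MHE_properties_1} by $\sigma_\delta(\|x_{t-M_t}-\hat{x}_{t-M_t}\|)\leq\sigma_\delta(\alpha_1^{-1}(\overline{e}_{t-M_t}))$; summing the geometric series $\sum_{j=1}^{M_t}\eta^{j-1}=\tfrac{1-\eta^{M_t}}{1-\eta}$ gives precisely~\eqref{eq:MHE_properties_3}.

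For~\eqref{eq:MHE_properties_2} I would start from the triangle inequality $\|\hat{x}_{t+1}-f(\hat{x}_t,u_t)\|\leq\|\hat{x}_{t+1}-x_{t+1}\|+\|f_{\mathrm{w}}(x_t,u_t,w_t)-f(x_t,u_t)\|+\|f(x_t,u_t)-f(\hat{x}_t,u_t)\|$. The first term is bounded by $\alpha_1^{-1}(\overline{e}_{t+1})$ via~\eqref{eq:IOSS_Lyap_1} and~\eqref{eq:MHE_properties_1}. For the remaining two, Assumption~\ref{ass:boundedness} confines $x_t,\hat{x}_t,u_t,w_t$ to a fixed compact set on which, by continuity of $f_{\mathrm{w}}$, $f$ and the identity $f_{\mathrm{w}}(\cdot,\cdot,0)=f$, one obtains $t$-independent bounds $\sigma_{f,1}(\|w_t\|)$ and $\sigma_{f,2}(\|x_t-\hat{x}_t\|)$ with $\sigma_{f,1},\sigma_{f,2}\in\mathcal{K}$; using $\|x_t-\hat{x}_t\|\leq\alpha_1^{-1}(\overline{e}_t)$ and setting $\sigma_{\mathrm{f}}(r):=\sigma_{f,1}(r)+\sigma_{f,2}(r)$, monotonicity gives $\sigma_{f,1}(\overline{w})+\sigma_{f,2}(\alpha_1^{-1}(\overline{e}_t))\leq\sigma_{\mathrm{f}}(\overline{w}+\alpha_1^{-1}(\overline{e}_t))$, which is the claimed estimate.

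I expect the only genuine technical care to be needed in the last step: extracting a single, $t$-independent $\sigma_{\mathrm{f}}\in\mathcal{K}$ from uniform continuity of $f_{\mathrm{w}}$ and $f$ on the compact set of Assumption~\ref{ass:boundedness}, and checking the degenerate small-$M_t$ cases in the induction. The rest parallels the proof of Theorem~\ref{thm:IOSS} and standard robust MHE arguments (cf.~\cite{knufer2018robust}).
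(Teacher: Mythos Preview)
Your proposal is correct and follows essentially the same approach as the paper's proof: induction on~\eqref{eq:MHE_properties_1} via iterated $\delta$-IOSS dissipation plus the continuity bound~\eqref{eq:IOSS_cont} on the arrival term, optimality against the true-trajectory candidate for~\eqref{eq:MHE_properties_3}, and a triangle-inequality/uniform-continuity argument on the compact set of Assumption~\ref{ass:boundedness} for~\eqref{eq:MHE_properties_2}. The only cosmetic difference is that the paper lumps your two continuity terms $\sigma_{f,1},\sigma_{f,2}$ into a single $\sigma_{\mathrm{f}}$ in one step, bounding $\|x_{t+1}-f(\hat{x}_t,u_t)\|\leq\sigma_{\mathrm{f}}(\|x_t-\hat{x}_t\|+\overline{w})$ directly rather than splitting at $f(x_t,u_t)$.
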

\begin{proof}
\textbf{Part I: } Suppose that $W_\delta(\hat{x}_j,x_j)\leq \overline{e}_j$ for all $j\in\mathbb{I}_{[0,t-1]}$. 
Continuity (Ass.~\ref{ass:IOSS_cont}) ensures
\begin{align}
\label{eq:MHE_proof_1}
&W_\delta(\hat{x}^*_{-M_t|t},x_{t-M_t})\nonumber\\
\leq &\underbrace{W_\delta(\hat{x}_{t-M_t},x_{t-M_t})}_{\leq \overline{e}_{t-M_t}}+\sigma_{\delta}(\|\hat{x}^*_{-M_t|t}-\hat{x}_{t-M_t}\|).
\end{align}
Given that $\hat{x}^*,\hat{w}^*,\hat{y}^*$ is a trajectory of the system, we can use the same derivation as in Proposition~\ref{prop:IOSS} based on $\delta$-IOSS resulting in
\begin{align*}
&W_\delta(\hat{x}_t,x_t)\\
\stackrel{\eqref{eq:IOSS_Lyap_2}}{\leq} &\sum_{j=1}^{M_t} \eta^{j-1}\left(\sigma_1(\overline{w}+\|\hat{w}^*_{-j|t}\|)+\sigma_2(\|\hat{y}^*_{-j|t}-y_{t-j}\|) \right)\nonumber\\
&+\eta^{M_t} W_\delta(\hat{x}^*_{-M_t|t},x_{t-M_t})\stackrel{\eqref{eq:MHE_update_2},\eqref{eq:MHE_proof_1}}{\leq} \overline{e}_t. 
\end{align*}
Thus, $W_\delta(\hat{x}_t,x_t)\leq \overline{e}_t$ holds recursively for all $t\in\mathbb{I}_{\geq 0}$ using induction. \\
\textbf{Part II: } Boundedness of $x,\hat{x},u,w$ (Ass.~\ref{ass:boundedness}) and $f$ continuos ensures that there exists a function $\sigma_{\mathrm{f}}\in\mathcal{K}$ such that
\begin{align*}
&\|\hat{x}_{t+1}-f(\hat{x}_t,u_t)\|\leq\|x_{t+1}-f(\hat{x}_t,u_t)\|+\||x_{t+1}-\hat{x}_{t+1}\|\\
\leq & \sigma_{\mathrm{f}}(\||x_{t}-\hat{x}_{t}\|+\overline{w})+\||x_{t+1}-\hat{x}_{t+1}\|.
\end{align*}
Using $\alpha_1(\||x_{t}-\hat{x}_{t}\|)\stackrel{\eqref{eq:IOSS_Lyap_1}}{\leq}  W_\delta(\hat{x}_t,x_t)\leq \overline{e}_t$, we arrive at~\eqref{eq:MHE_properties_2}. \\
\textbf{Part III: }
A feasible candidate solution to~\eqref{eq:MHE_IOSS} is the true trajectory, i.e., $\hat{x}_{k|t}=x_{t+k}$, $\hat{w}_{k|t}=w_{t+k}$, $\hat{y}_{k|t}=y_{t+k}$, $k\in\mathbb{I}_{[-M_t,-1]}$. 
The initial estimate satisfies
\begin{align}
\label{eq:MHE_proof_2}
\alpha_1(\|\hat{x}_{-M_t|t}-\hat{x}_{t-M_t}\|)\stackrel{\eqref{eq:IOSS_Lyap_1}}{\leq} &W_\delta(\hat{x}_{t-M_t},\hat{x}_{-M_t|t})
\stackrel{\eqref{eq:MHE_properties_1}}{\leq} \overline{e}_{t-M_t}.
\end{align}
Given that this trajectory is a feasible candidate solution to~\eqref{eq:MHE_IOSS} and thus upper bounds the cost of the minimizer, we can obtain the following upper bound on $\overline{e}_t$
\begin{align*}
\overline{e}_t=&\sum_{j=1}^{M_t}\eta^{j-1}\left(\sigma_1(\overline{w}+\|\hat{w}^*_{-j|t}\|)+\sigma_2(\|\hat{y}^*_{-j|t}-y_{t-j}\|) \right)\\
&+\eta^{M_t}(\sigma_\delta(\|\hat{x}^*_{-M_t|t}-\hat{x}_{t-M_t}\|)+\overline{e}_{t-M_t})\nonumber\\
\leq &\sum_{j=1}^{M_t-1}\eta^{j-1}\sigma_1(\overline{w}+\|w_{t-j}\|)\\
&+\eta^{M_t}(\sigma_\delta(\|\hat{x}_{t-M_t}-x_{t-M_t}\|)+\overline{e}_{t-M_t})\\
\stackrel{\text{Ass.}~\ref{ass:disturbance},\eqref{eq:MHE_proof_2}}{\leq} &\dfrac{1-\eta^{M_t}}{1-\eta}\sigma_1(2\overline{w})+\eta^{M_t}(\overline{e}_{t-M_t}+\sigma_\delta(\alpha_1^{-1}(\overline{e}_{t-M_t}))). \qedhere
\end{align*}
\end{proof}
The overall theoretical properties derived for the MHE estimate contain the same qualitative features provided by the Luenberger-like observer (cf. Thm.~\ref{thm:IOSS}), which are needed for robust MPC.
%1.
First, Condition~\eqref{eq:MHE_properties_1} ensures that $\overline{e}_t$ is a valid upper bound on the estimation error that utilizes past measurements, analogous to~\eqref{eq:thm_IOSS_1}. 
%2.
Second, condition~\eqref{eq:MHE_properties_2} provides a bound on the difference between the nominal prediction model and the MHE estimate ``dynamics'', similar to Inequality~\eqref{eq:thm_IOSS_2}.
Here, we can see that the bound for the MHE is more complex and thus typically more conservative. 
%3.
Finally, we provided a formula (cf.~\eqref{eq:MHE_properties_3}) that allows for deterministic predictions of the future estimation error $\overline{e}_{t+k}$. 
Compared to the formula in~\eqref{eq:thm_IOSS_3} for the Luenberger-like observer, the MHE formulas are more complex and it is not immediately obvious that the MHE estimate is robustly stable.  
In case of full information estimation (FIE), i.e., $M_t=t$, the bounds~\eqref{eq:MHE_properties_1}, \eqref{eq:MHE_properties_3} directly provide a robust estimator with the Lyapunov function $W_\delta$.  
This is similar to the FIE analysis in~\cite{allan2019lyapunov} which used an additional stabilizability assumption instead of continuity of $W_\delta$ and requires additional terms to form a \textit{Lyapunov-like function} since no exponential discounting is used. 
In case a standard finite-horizon MHE is used, the bound~\eqref{eq:MHE_properties_3} can only ensure robust stability if additionally $\sigma_\delta\circ\alpha_1^{-1}$ is (locally) linearly bounded and a sufficiently large horizon $M$ is used.
This requirement is comparable to the linear/polynomial-exponential bounds used in~\cite[Thm.~9]{muller2017nonlinear},\cite[Thm.~1]{knufer2018robust}, \cite[Lemma~11]{schiller2020robust} and can be viewed as the MHE equivalent of the \textit{exponential cost controllability} used in the analysis of MPC without terminal constraints (cf.~\cite{grune2017nonlinear}, \cite[Ass.~3-4]{Koehler2020Regulation}). %grimm?
We point out that the requirement to use a long enough horizon $M$ can be relaxed if observability is assumed (cf.~\cite{michalska1995moving}), a more intricate arrival cost is designed (cf.~\cite{rao2003constrained}), or a stabilizing observer is integrated (cf.~\cite{schiller2020robust,gharbi2020proximity2}).
\begin{remark}(Constraints and suboptimality in MHE)
\label{rk:compare_MHE}
%constraints
One of the classical motivations of MHE is also the fact that a-priori knowledge of the system can be included in the MHE by using additional constraints. 
However, using additional constraints would not improve the resulting guarantees (unless detectability only holds on some constraint set, compare Remark~\ref{rk:constraints}).  
In case the system is additionally one-step controllable (Ass.~\ref{ass:add_disturbance}), then the absence of constraints allows us to treat a Luenberger estimate as a feasible candidate solution and thus inherit some of its stability properties (cf.~\cite{schiller2020robust}). 
Such an explicitly known candidate solution has the additional advantage that the resulting guarantees remain valid with suboptimal solutions, which is not the case with the bound~\eqref{eq:MHE_properties_3}. 
\end{remark}

  \begin{remark}
\label{rk:MHE_simple}
(Combined MHE-Luenberger)
While the MHE formulation can provide significantly less conservative bounds $\overline{e}_t$, this scheme also has significant drawbacks compared to the results in Theorem~\ref{thm:IOSS}.
The bound~\eqref{eq:MHE_properties_3} requires additional assumptions and a long enough horizon $M$ to ensure that $\overline{e}_t$ does not diverge.
Furthermore, the derived bound~\eqref{eq:MHE_properties_2} can be significantly more conservative compared to the relatively direct bound~\eqref{eq:thm_IOSS_2} for Luenberger-like observers. 
A simple way to keep the desired bounds~\eqref{eq:thm_IOSS_2}--\eqref{eq:thm_IOSS_3} is to use an additional case distinction verifying whether the MHE estimates computed at time $t+1$ satisfy~\eqref{eq:thm_IOSS_2}--\eqref{eq:thm_IOSS_3} with $k=1$. 
If this is not the case, we replace the update~\eqref{eq:MHE_update} with the one-step Luenberger update $\overline{e}_{t+1}=\eta\overline{e}_t+\sigma_4(\overline{w})$, $\hat{x}_{t+1}=\hat{f}(\hat{x}_t,u_t,y_t)$.
In this case, the bound~\eqref{eq:MHE_properties_3} is in general not valid. 
Nevertheless, since we expect the derived MHE bounds to be relatively conservative compared to the true performance, this simple case distinction allows us to (often) use the improved bound~\eqref{eq:MHE_update_2} from the MHE, while still using the bounds in~\eqref{eq:observer_prop} to predict valid bounds. 
Compared to, e.g.,~\cite{schiller2020robust,gharbi2020proximity2}, the proposed case distinction uses the Luenberger observer only as a back-up instead of incorporating it directly in the optimization problem.
\end{remark}

%!TEX root = ./Output.tex
%%%%%%%%%%%%%%%%%%%%%%%%%%%%%%%%%%%%%%%%%%%%%%%%%%%%%%%%%%%%%%%%%%%%%%%%%%%%%%%
\section{Robust output-feedback MPC}
\label{sec:tube}
In this section, we present the proposed output-feedback MPC schemes based on the estimation error bounds provided in Section~\ref{sec:estimation}.
First, we compute predictable bounds for a dynamic output-feedback tracking controller (Sec.~\ref{sec:tube_dynamic}). 
Then, we use these bounds to develop a tube-based output-feedback MPC that guarantees robust recursive feasibility and constraint satisfaction (Sec.~\ref{sec:tube_RPI}).
We also show how the MHE formulation in Section~\ref{sec:estimation_MHE} can be incorporated  to develop a joint robust MPC-MHE optimization problem.  
Finally, we show how the tube-based MPC formulation can be simplified to a constraint tightening, resulting in a computational demand comparable to nominal output-feedback MPC (Sec.~\ref{sec:tube_tight}). 
%
%!TEX root = ./Output.tex
%%%%%%%%%%%%%%%%%%%%%%%%%%%%%%%%%%%%%%%%%%%%%%%%%%%%%%%%%%%%%%%%%%%%%%%%%%%%%%%
\subsection{Tube dynamics for nonlinear output-feedback}
\label{sec:tube_dynamic}
In order to reduce the effect of uncertainty (disturbances, noise, estimation error), tube-based MPC schemes use an additional feedback to bound the deviation w.r.t. some nominal prediction. 
To this end, we need to analyse the joint incremental stability properties of the true state $x$, the estimated state $\hat{x}$, and some nominal prediction $\overline{x}$. 
In the linear case, the separation principle can be used to separately design a stable observer and tracking feedback to compute two Lyapunov functions/RPI sets (cf.~\cite{mayne2006robust}). 
We note that the conservatism of such a separate design can be reduced by computing one combined Lyapunov function/RPI set (cf.~\cite{kogel2017robust}). 
In the nonlinear case, stability of a state feedback in combination with a stable observer can be ensured assuming Lipschitz continuity of the involved functions and (local) exponential stability of the controller (cf.~\cite{magni2004stabilization}). 
Alternatively, the combination of an exponentially stable observer with an exponentially stabilizing feedback yields an exponentially stabilizing dynamic output feedback which can be constructed using contraction metrics (cf.~\cite{manchester2014output}). 
We study the combined closed loop using the concept of ISS (Def.~\ref{def:delta_ISS}/\ref{def:ISS_Lyap}) by interpreting the mismatch between the observer dynamics and the nominal dynamics as a disturbance. 
\begin{proposition}
\label{prop:RPI}
Let Assumptions~\ref{ass:disturbance}, \ref{ass:stable_observer}, \ref{ass:add_disturbance} hold.
Suppose the system admits an (exponential-decrease) $\delta$-ISS CLF (Def.~\ref{def:ISS_Lyap}). 
Then, there exist functions  $\gamma_{\mathrm{s,o}},\gamma_{\mathrm{s,w}}\in\mathcal{K}$ such that for any $x,\hat{x},\overline{x}\in\mathbb{X}$, $\overline{u}\in\mathbb{U}$ it holds
\begin{align}
\label{eq:RPI_ISS}
&V_{\delta}(f(\overline{x},\overline{u}),\hat{f}(\hat{x},u,y))\nonumber\\
\leq &\rho V_\delta(\overline{x},\hat{x})+\gamma_{\mathrm{s,o}}(V_{\mathrm{o}}(\hat{x},x))+\gamma_{\mathrm{s,w}}(\overline{w}),
\end{align}
with $u=\kappa(\hat{x},\overline{x},\overline{u})$, $y=h_{\mathrm{w}}(x,u,w)$. 
\end{proposition}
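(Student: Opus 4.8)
The plan is to decompose the deviation $V_\delta(f(\overline{x},\overline{u}),\hat{f}(\hat{x},u,y))$ into two contributions using the triangle-like structure of the $\delta$-ISS CLF: one contribution measuring how far the \emph{observer step} $\hat{f}(\hat{x},u,y)$ lies from the \emph{perturbed system step} applied at $\hat{x}$ with the control $u$, and one contribution measuring the nominal contraction from $(\overline{x},\overline{u})$ to $(\hat{x},u)$ under the feedback $\kappa$. First I would invoke Assumption~\ref{ass:add_disturbance} (additive disturbances) exactly as in the proof of Proposition~\ref{prop:IOSS}: the observer dynamics~\eqref{eq:observer} can be rewritten as $\hat{f}(\hat{x},u,y) = f(\hat{x},u) + E_{\mathrm{x}}\hat{w} = f_{\mathrm{w}}(\hat{x},u,\hat{w})$ with $\hat{w} = E_{\mathrm{x}}^\dagger \hat{L}(\hat{x},u,y)$. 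This turns the observer step into a genuine perturbed-system trajectory, so that inequality~\eqref{eq:ISS_Lyap_2} from the $\delta$-ISS CLF can be applied with the two trajectories starting at $\overline{x}$ (input $\overline{u}$, disturbance $0$) and at $\hat{x}$ (input $u=\kappa(\hat{x},\overline{x},\overline{u})$, disturbance $\hat{w}$), giving
\begin{align*}
V_\delta(f(\overline{x},\overline{u}),\hat{f}(\hat{x},u,y)) \leq \rho V_\delta(\overline{x},\hat{x}) + \sigma_3(\|\hat{w}\|).
\end{align*}

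The remaining work is to bound $\sigma_3(\|\hat{w}\|)$ by $\gamma_{\mathrm{s,o}}(V_{\mathrm{o}}(\hat{x},x)) + \gamma_{\mathrm{s,w}}(\overline{w})$. Here I would use $\hat{w} = E_{\mathrm{x}}^\dagger \hat{L}(\hat{x},u,h_{\mathrm{w}}(x,u,w))$, so $\|\hat{w}\| \leq \|E_{\mathrm{x}}^\dagger\|\,\|\hat{L}(\hat{x},u,h_{\mathrm{w}}(x,u,w))\|$, and then apply the injection-law bound~\eqref{eq:observer_prop_3_mod} from Assumption~\ref{ass:stable_observer}, namely $\|\hat{L}(\hat{x},u,h_{\mathrm{w}}(x,u,w))\| \leq \gamma_{\mathrm{L},1}(V_{\mathrm{o}}(\hat{x},x)) + \gamma_{\mathrm{L},2}(\|w\|)$, combined with $\|w\|\leq\overline{w}$ from Assumption~\ref{ass:disturbance}. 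Using the standard $\mathcal{K}$-function inequality $\sigma_3(a+b)\leq \sigma_3(2a)+\sigma_3(2b)$, one sets $\gamma_{\mathrm{s,o}}(r) := \sigma_3\!\big(2\|E_{\mathrm{x}}^\dagger\|\gamma_{\mathrm{L},1}(r)\big)$ and $\gamma_{\mathrm{s,w}}(r) := \sigma_3\!\big(2\|E_{\mathrm{x}}^\dagger\|\gamma_{\mathrm{L},2}(r)\big)$, both of which are in $\mathcal{K}$ as compositions and scalings of $\mathcal{K}$ functions. This yields~\eqref{eq:RPI_ISS}.

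I expect the only mildly delicate point to be the bookkeeping of $\mathcal{K}$-function composition — in particular making sure the split of $\sigma_3$ of a sum into a sum of $\sigma_3$'s is done cleanly and that the resulting $\gamma_{\mathrm{s,o}},\gamma_{\mathrm{s,w}}$ are genuinely class-$\mathcal{K}$ (they are, since $\gamma_{\mathrm{L},1}(0)=\gamma_{\mathrm{L},2}(0)=0$ and $V_{\mathrm{o}}(\hat{x},x)=0$ exactly when $\hat{x}=x$, so no spurious constant offsets appear). There is no real analytic obstacle: the substitution $\hat{L}=E_{\mathrm{x}}\hat{w}$ is the same trick used in Proposition~\ref{prop:IOSS}, and everything else is a direct application of~\eqref{eq:ISS_Lyap_2} and~\eqref{eq:observer_prop_3_mod}. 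One subtlety worth a remark is that $u=\kappa(\hat{x},\overline{x},\overline{u})$ must lie in $\mathbb{U}$ for~\eqref{eq:ISS_Lyap_2} to apply; under the global assumptions of Remark~\ref{rk:constraints} this holds, and otherwise one restricts to the constraint set as discussed there.
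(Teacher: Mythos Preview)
Your proposal is correct and follows essentially the same approach as the paper's proof: rewrite the observer step as $f_{\mathrm{w}}(\hat{x},u,\hat{w})$ via Assumption~\ref{ass:add_disturbance}, apply~\eqref{eq:ISS_Lyap_2} with $(x,u,w)=(\overline{x},\overline{u},0)$ and $(\tilde{x},\tilde{w})=(\hat{x},\hat{w})$, then bound $\|\hat{w}\|$ using~\eqref{eq:observer_prop_3_mod} and split $\sigma_3$ of a sum via $\sigma(a+b)\leq\sigma(2a)+\sigma(2b)$, arriving at the same definitions $\gamma_{\mathrm{s,o}}=\sigma_3\circ 2\|E_{\mathrm{x}}^\dagger\|\gamma_{\mathrm{L},1}$ and $\gamma_{\mathrm{s,w}}=\sigma_3\circ 2\|E_{\mathrm{x}}^\dagger\|\gamma_{\mathrm{L},2}$. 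The only difference is the order of presentation (you apply the CLF first and bound $\hat{w}$ afterward, the paper does the reverse), which is immaterial.
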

\begin{proof}
Analogous to Proposition~\ref{prop:IOSS}, we use the fact that additive disturbances (Ass.~\ref{ass:add_disturbance}) allow us write the observer dynamics as perturbed dynamics as follows:
\begin{align*}
\hat{f}(\hat{x},u,y)=f(\hat{x},u)+E_{\mathrm{x}}\hat{w}=f_{\mathrm{w}}(\hat{x},u,\hat{w}),
\end{align*}
with $\hat{w}=E_{\mathrm{x}}^\dagger \hat{L}(\hat{x},u,y)$. 
Analogous to Inequality~\eqref{eq:thm_IOSS_2}, Condition~\eqref{eq:observer_prop_3_mod} and Assumption~\ref{ass:disturbance} ensure
$\|\hat{w}\|\leq \|E_{\mathrm{x}}^\dagger\|(\gamma_{\mathrm{L},1}(V_{\mathrm{o}}(\hat{x},x))+\gamma_{\mathrm{L},2}(\overline{w}))$. 
The $\delta$-ISS CLF (Def.~\ref{def:ISS_Lyap}) applied with $u=\kappa(\hat{x},\overline{x},\overline{u})$, $w=0$, and $\tilde{w}=\hat{w}$ implies
\begin{align*}
&V_\delta(f(\overline{x},\overline{u}),\hat{f}(\hat{x},u,y))\\
\leq& \rho V_\delta(\overline{x},\hat{x})+\sigma_3(\|E_{\mathrm{x}}^\dagger\|(\gamma_{\mathrm{L},1}(V_{\mathrm{o}}(\hat{x},x))+\gamma_{\mathrm{L},2}(\overline{w})))\\
\leq& \rho V_\delta(\overline{x},\hat{x})+\gamma_{\mathrm{s,o}}(V_{\mathrm{o}}(\hat{x},x))+\gamma_{\mathrm{s,w}}(\overline{w}),
\end{align*}
with 
 $\gamma_{\mathrm{s,o}}:=\sigma_3\circ 2\|E_{\mathrm{x}}^\dagger\| \gamma_{\mathrm{L},1}\in\mathcal{K}$,
 $\gamma_{\mathrm{s,w}}:=\sigma_3\circ2\|E_{\mathrm{x}}^\dagger\| \gamma_{\mathrm{L},2}\in\mathcal{K}$, 
 where we used $\sigma(a+b)\leq \sigma(2a)+\sigma(2b)$ for any $a,b\geq 0$, $\sigma\in\mathcal{K}$.
\end{proof}
 Condition~\eqref{eq:RPI_ISS} characterizes the difference between the estimated state and some nominal state based on bounds on the observer error $V_{\mathrm{o}}$ and the disturbance magnitude $\overline{w}$. 
By combining this bound with the observer stability properties~\eqref{eq:observer_prop_2} we know that the corresponding dynamic output feedback ensures convergence to an RPI set characterized by: 
\begin{subequations}
 \label{eq:max_error_RPI}
 \begin{align}
V_{\mathrm{o}}(\hat{x},x)\leq \overline{e}_{\max}:=&\sigma_4(\overline{w})/(1-\tilde{\eta}),\\
V_{\delta}(\overline{x},\hat{x})\leq \overline{s}_{\max}:=&(\gamma_{\mathrm{s,o}}(\overline{e}_{\max})+\gamma_{s,\mathrm{w}}(\overline{w}))/(1-\rho).
\end{align}
\end{subequations}
These bounds are sufficient to plan a nominal trajectory with tightened constraints such that the true closed-loop satisfies the state and input constraints, similar to~\cite{mayne2006robust}.
However, these a priori bounds are unnecessarily conservative and we will overcome this conservatism by using a receding horizon MPC implementation. 
In particular, at each sampling time $V_{\delta}(\overline{x},\hat{x})$ is exactly measured and less conservative bounds on $V_{\mathrm{o}}(\hat{x},x)$ are obtained online using the bounds in Section~\ref{sec:estimation}.

%!TEX root = ./Output.tex
%%%%%%%%%%%%%%%%%%%%%%%%%%%%%%%%%%%%%%%%%%%%%%%%%%%%%%%%%%%%%%%%%%%%%%%%%%%%%%%
\subsection{Homothetic tube-based MPC}
\label{sec:tube_RPI}
In the following, we present a homothetic tube-based output-feedback MPC scheme, that combines the stability properties of the observer (Sec.~\ref{sec:estimation}) with the $\delta$-ISS CLF.

%cost
For the MPC formulation, we consider a continuous stage cost $\ell:\mathbb{X}\times\mathbb{U}\times\mathbb{R}_{\geq 0}\times\mathbb{R}_{\geq 0}\rightarrow\mathbb{R}$, a continuous terminal cost $V_{\mathrm{f}}:\mathbb{X}\times\mathbb{R}_{\geq 0}\times\mathbb{R}_{\geq 0}\rightarrow\mathbb{R}$, a terminal set $\mathbb{X}_{\mathrm{f}}\subseteq\mathbb{X}\times\mathbb{R}_{\geq 0}\times\mathbb{R}_{\geq 0}$, and a prediction horizon $N\in\mathbb{I}_{\geq 0}$.
The open-loop cost over the prediction horizon $N$ of a nominal predicted state and input trajectory $(\overline{x}_{\cdot|t}\in\mathbb{X}^{N+1}$, $\overline{u}_{\cdot|t}\in\mathbb{U}^{N}$) with associated bounds on the estimation error $\overline{e}_{\cdot|t}\in\mathbb{R}^{N+1}_{\geq 0}$, and on the tracking error w.r.t. the nominal trajectory $\overline{s}_{\cdot|t}\in\mathbb{R}^{N+1}_{\geq 0}$ is given by
\begin{align*}
&\mathcal{J}_N(\overline{x}_{\cdot|t},\overline{u}_{\cdot|t},\overline{e}_{\cdot|t},\overline{s}_{\cdot|t})\nonumber\\
:=&\sum_{k=0}^{N-1}\ell(\overline{x}_{k|t},\overline{u}_{k|t},\overline{e}_{k|t},\overline{s}_{k|t})+V_{\mathrm{f}}(\overline{x}_{N|t},\overline{e}_{N|t},\overline{s}_{N|t}).
\end{align*}
The dependence of the cost $\mathcal{J}_N$ on the error bounds $\overline{e},\overline{s}$ allows for the consideration of a worst-case stage cost, yielding robust performance guarantees (cf.~\cite[Rk.~5]{Koehler2020Robust}, \cite{bayer2014tube}). 
For simplicity of exposition, we consider state estimates based on the Luenberger-like observers~\eqref{eq:observer} with the simplifying bounds and assumptions in Corollary~\ref{corol:IOSS}. 
The overall algorithm can be readily adapted to use the other estimation bounds presented in Section~\ref{sec:estimation}. 
At time $t$, given the state estimate $\hat{x}_t,\overline{e}_t$ from~\eqref{eq:observer}, \eqref{eq:e_t_bound_3}, the output-feedback MPC is based on the following NLP:
\begin{subequations}
\label{eq:MPC_tube}
\begin{align}
&\min_{\overline{u}_{\cdot|t},\overline{x}_{0|t}} \mathcal{J}_N(\overline{x}_{\cdot|t},u_{\cdot|t},\overline{e}_{\cdot|t},\overline{s}_{\cdot|t})\\
\text{s.t. }
%initial constraint
\label{eq:MPC_tube_init}
& \overline{s}_{0|t}=V_\delta(\overline{x}_{0|t},\hat{x}_{t}),\\
%predicted dynamics
\label{eq:MPC_tube_x_pred}
&\overline{x}_{k+1|t}=f(\overline{x}_{k|t},u_{k|t}),~j\in\mathbb{I}_{[0,N-1]},\\
%bound observer
\label{eq:MPC_tube_e_pred}
&\overline{e}_{k|t}=\dfrac{1-\tilde{\eta}^k}{1-\tilde{\eta}}\sigma_4(\overline{w})+\tilde{\eta}^k\overline{e}_{t},~k\in\mathbb{I}_{[0,N]},\\
%bound tube
\label{eq:MPC_tube_s_pred}
&\overline{s}_{k+1|t}=\rho\overline{s}_{k|t}+\gamma_{\mathrm{s,o}}(\overline{e}_{k|t})+\gamma_{\mathrm{s,w}}(\overline{w}),\\
%constraints
\label{eq:MPC_tube_tight}
&(x_{k|t},\kappa(\hat{x}_{k|t},\overline{x}_{k|t},\overline{u}_{k|t}))\in\mathbb{Z},~k\in\mathbb{I}_{[0,N-1]},\\
&\forall x_{k|t},\hat{x}_{k|t}: V_{\delta}(\overline{x}_{k|t},\hat{x}_{k|t})\leq \overline{s}_{k|t},~V_{\mathrm{o}}(\hat{x}_{k|t},x_{k|t})\leq \overline{e}_{k|t},\nonumber\\
%terminal
\label{eq:MPC_tube_terminal}
&(\overline{x}_{N|t},\overline{e}_{N|t},\overline{s}_{N|t})\in\mathbb{X}_{\mathrm{f}}.
\end{align}
\end{subequations} 
We denote a minimizer to~\eqref{eq:MPC_tube} by $\overline{u}^*_{\cdot|t},\overline{x}^*_{0|t}$ with the corresponding state trajectory and error bounds $\overline{x}^*_{\cdot|t}$, $\overline{s}^*_{\cdot|t}$, $\overline{e}_{\cdot|t}^*$. 
The closed-loop operation is given by the observer~\eqref{eq:observer}, the updates~\eqref{eq:e_t_bound_3}, and the control law $u_t=\kappa(\hat{x}_t,\overline{x}^*_{0|t},\overline{u}_{0|t}^*)$. 
The considered output-feedback MPC formulation corresponds to a homothetic-tube formulation (cf.~\cite{rakovic2012homothetic,dong2020homothetic}) due to the variable scaling $\overline{e},\overline{s}$. 
A crucial feature of the constraint tightening is the fact that the conservatism $\overline{s}_{\cdot|t}$ is deterministically predicted based on the initial error bounds $\overline{s}_{0|t},\overline{e}_{t}$ and thus the updates in Section~\ref{sec:estimation} allow for a more aggressive, but safe, operation compared to the a priori bound~\eqref{eq:max_error_RPI}. 
We note that the tightened constraints~\eqref{eq:MPC_tube_tight} can be converted into simpler to implement sufficient conditions, given suitable continuity bounds on the constraints, compare Assumption~\ref{ass:constraints_cont} below. 
In the special case of exact state measurement ($\overline{e}_{k|t}=0$), the corresponding robust MPC framework unifies/generalizes existing robust MPC methods based on contraction metrics/incremental stability, which consider the special cases $\bar{s}_{k|t}=\bar{s}_{\max}$ (cf. rigid tube formulations~\cite{singh2019robust,bayer2013discrete}) or $\bar{s}_{0|t}=0$ (cf. constraint tightening formulations~\cite{kohler2018novel,Koehler2020Robust}, Sec.~\ref{sec:tube_tight}), respectively.

The overall offline and online computation are summarized in Algorithm \ref{alg:online} and \ref{alg:offline}, respectively.
\begin{algorithm}[H]
\caption{Online Computation}
\label{alg:online}
\begin{algorithmic}[1]
\State Update $\hat{x}_t$ and $\overline{e}_t$  using~\eqref{eq:observer} and~\eqref{eq:e_t_bound_3}.
 \Statex (Alternatives: MHE~\eqref{eq:MHE_IOSS}--\eqref{eq:MHE_update}; set-membership estimation~\eqref{eq:observer} and \eqref{eq:NLP_set_estimation}--\eqref{eq:NLP_set_estimation_update})
\State Solve the MPC optimization problem \eqref{eq:MPC_tube}.
\State Apply the control input: $u_t=\kappa(\hat{x}_t,\overline{x}^*_{0|t},\overline{u}_{0|t}^*)$.
\State Set $t=t+1$ and go back to 1.
\end{algorithmic}
\end{algorithm}
\begin{algorithm}[H]
\caption{Offline Computation}
\label{alg:offline}
\begin{algorithmic}[1]
\State Choose stage cost $\ell$, constraint set $\mathbb{Z}$, disturbance bound~$\bar{w}$.
\item Compute $\delta$-IOSS Lyapunov function $W_\delta$ (Def.~\ref{def:IOSS_Lyap}).
\State Design stable observer $\hat{f}$~\eqref{eq:observer} (Ass.~\ref{ass:stable_observer}).
\State Compute $\delta$-ISS CLF (Def.~\ref{def:ISS_Lyap}).
\State Compute terminal ingredients $V_{\mathrm{f}},\mathbb{X}_{\mathrm{f}}$ (Ass. \ref{ass:term}).
\end{algorithmic}
\end{algorithm}

In order to provide closed-loop properties, we also need to impose (standard) conditions regarding terminal ingredients.
\begin{assumption}
\label{ass:term} (Terminal ingredients)
There exists a control law $k_{\mathrm{f}}:\mathbb{X}\rightarrow\mathbb{U}$, such that for all $(\overline{x},\overline{e},\overline{s})\in\mathbb{X}_{\mathrm{f}}$ and all $x,\hat{x}\in\mathbb{X}$ satisfying $V_\delta(\overline{x},\hat{x})\leq \overline{s}$ and $V_{\mathrm{o}}(\hat{x},x)\leq \overline{e}$, it holds that
\begin{subequations}
\label{eq:term}
\begin{align}
\label{eq:term_1}
&(\overline{x}^+,\overline{e}^+,\overline{s}^+)\in\mathbb{X}_{\mathrm{f}},\\
\label{eq:term_2}
&(x,\kappa(\hat{x},\overline{x},\overline{u}))\in\mathbb{Z},\\
\label{eq:term_3}
&V_{\mathrm{f}}(\overline{x}^+,\overline{e}^+,\overline{s}^+)-V_{\mathrm{f}}(\overline{x},\overline{s},\overline{e})\nonumber\\
\leq& \ell(0,0,\bar{e}_{\max},\bar{s}_{\max})-\ell(\overline{x},\overline{u},\overline{e},\overline{s}).
\end{align}
\end{subequations} 
with $\overline{x}^+=f(\overline{x},\overline{u})$, $\overline{u}=k_{\mathrm{f}}(\overline{x})$,
and with any  $\overline{s}^+,\overline{e}^+\in\mathbb{R}_{\geq 0}$ satisfying $\overline{s}^+\leq\rho s+\gamma_{\mathrm{s,o}}(\overline{e})+\gamma_{\mathrm{s,w}}(\overline{w})$ and $\overline{e}^+\leq\tilde{\eta}\overline{e}+\sigma_4(\overline{w})$. 
\end{assumption}
A simple design satisfying Assumption~\ref{ass:term} is given by $\mathbb{X}_{\mathrm{f}}=\{(\overline{x},\overline{e},\overline{s})|~\overline{x}\in\overline{\mathbb{X}}_{\mathrm{f}},\overline{e}\leq \overline{e}_{\max},\overline{s}\leq \overline{s}_{\max}\}$, where $V_{\mathrm{f}}$, $\overline{\mathbb{X}}_{\mathrm{f}}$ are constructed using nominal design methods (cf., e.g.,~\cite[Sec.~2.5.5]{rawlings2017model}) and \eqref{eq:term_3} requires $\ell$ non-decreasing in $\overline{e},\overline{s}$ (cf.~\cite[Rk.~5]{Koehler2020Robust}, \cite{bayer2014tube}).  
In particular, a terminal equality constraint, i.e., $V_{\mathrm{f}}=0$, $\overline{\mathbb{X}}_{\mathrm{f}}=\{0\}$, is feasible for $\overline{e}_{\max}$ and $\overline{s}_{\max}$ sufficiently small, and thus for a sufficiently small disturbance bound $\overline{w}$ using~\eqref{eq:max_error_RPI}.
\begin{theorem}
\label{thm:tube}
Let Assumptions~\ref{ass:disturbance}, \ref{ass:stable_observer},  \ref{ass:add_disturbance}, \ref{ass:common_Lyap}, and \ref{ass:term} hold.
Suppose the system admits an (exponential-decay) $\delta$-IOSS Lyapunov function (Def.~\ref{def:IOSS_Lyap}) and an (exponential-decrease) $\delta$-ISS CLF (Def.~\ref{def:ISS_Lyap}).  
Suppose further that~\eqref{eq:MPC_tube} is feasible at $t=0$ and $V_{\mathrm{o}}(\hat{x}_0,x_0)\leq \overline{e}_0$. 
Then, for all $t\in\mathbb{I}_{\geq 0}$ the problem~\eqref{eq:MPC_tube} is feasible and the constraints are satisfied, i.e., $(x_t,u_t)\in\mathbb{Z}$,  for the closed loop resulting from Algorithm~\ref{alg:online}. 
If further $\ell$, $V_{\mathrm{f}}$ are non-decreasing in $\bar{s}$, $\bar{e}$ and $\mathbb{Z}$ is compact, then the following performance bound holds:
\begin{align}
\label{eq:performance_RMPC}
\limsup_{T\rightarrow\infty}\frac{1}{T}\sum_{t=0}^{T-1}\ell(\overline{x}^*_{0|t},\overline{u}^*_{0|t},\overline{e}^*_{0|t},\overline{s}^*_{0|t})\leq \ell(0,0,\overline{e}_{\max},\overline{s}_{\max}).
\end{align}
\end{theorem}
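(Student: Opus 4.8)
The plan is to follow the standard two-part structure of robust MPC analysis: first establish recursive feasibility (and hence closed-loop constraint satisfaction) by constructing an explicit candidate solution at time $t+1$ from the optimizer at time $t$, and then use the value function as a Lyapunov-like function to derive the averaged performance bound. The key enabling facts from the excerpt are: Corollary~\ref{corol:IOSS} (the bound $\overline{e}_t$ satisfies \eqref{eq:thm_IOSS}, in particular the predictable propagation \eqref{eq:thm_IOSS_3} with $\tilde\eta$), Proposition~\ref{prop:RPI} (the tube inequality \eqref{eq:RPI_ISS} governing $V_\delta(\overline{x},\hat x)$ under the feedback $\kappa$), and Assumption~\ref{ass:term} (terminal ingredients). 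The crucial consistency property to check is that the error-bound dynamics $\overline{e}_{k|t}$ and $\overline{s}_{k|t}$ used inside the MPC \eqref{eq:MPC_tube_e_pred}--\eqref{eq:MPC_tube_s_pred} are \emph{over-approximations} of the true closed-loop quantities one step later, so that the shifted predicted trajectory remains feasible.

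\textbf{Recursive feasibility.} Suppose \eqref{eq:MPC_tube} is feasible at time $t$ with optimizer $\overline{x}^*_{\cdot|t},\overline{u}^*_{\cdot|t}$ and induced $\overline{e}^*_{\cdot|t},\overline{s}^*_{\cdot|t}$. Apply $u_t=\kappa(\hat x_t,\overline{x}^*_{0|t},\overline{u}^*_{0|t})$; the observer produces $\hat x_{t+1}$ and the update \eqref{eq:e_t_bound_3} produces $\overline{e}_{t+1}$. The candidate at $t+1$ is the usual shift: $\overline{x}_{k|t+1}:=\overline{x}^*_{k+1|t}$ for $k\in\mathbb{I}_{[0,N-1]}$ appended with the terminal control $\overline{u}_{N-1|t+1}=k_{\mathrm f}(\overline{x}^*_{N|t})$, and $\overline{u}_{k|t+1}:=\overline{u}^*_{k+1|t}$ otherwise. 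Its initial nominal state is $\overline{x}^*_{1|t}=f(\overline{x}^*_{0|t},\overline{u}^*_{0|t})$, so by the MPC constraint \eqref{eq:MPC_tube_init} and Proposition~\ref{prop:RPI} applied with $\overline{x}=\overline{x}^*_{0|t}$, $\hat x=\hat x_t$, $x=x_t$ (using $V_{\mathrm o}(\hat x_t,x_t)\le\overline{e}_t=\overline{e}^*_{0|t}$ from Corollary~\ref{corol:IOSS} and the fact that $\gamma_{\mathrm{s,o}}$ is monotone), we get $V_\delta(\overline{x}^*_{1|t},\hat x_{t+1})\le \rho\,\overline{s}^*_{0|t}+\gamma_{\mathrm{s,o}}(\overline{e}^*_{0|t})+\gamma_{\mathrm{s,w}}(\overline{w})=\overline{s}^*_{1|t}$. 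Hence the new $\overline{s}_{0|t+1}=V_\delta(\overline{x}^*_{1|t},\hat x_{t+1})\le\overline{s}^*_{1|t}$, and since the $\overline{s}$-recursion \eqref{eq:MPC_tube_s_pred} is monotone in its initial condition (and likewise the $\overline{e}$-recursion \eqref{eq:MPC_tube_e_pred}, using $\overline{e}_{t+1}\le\tilde\eta\overline{e}_t+\sigma_4(\overline{w})$ from \eqref{eq:e_t_bound_3} so that $\overline{e}_{k|t+1}\le\overline{e}^*_{k+1|t}$ for all $k$), the shifted tubes are pointwise no larger than the old ones: $\overline{s}_{k|t+1}\le\overline{s}^*_{k+1|t}$, $\overline{e}_{k|t+1}\le\overline{e}^*_{k+1|t}$. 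Therefore every tightened constraint \eqref{eq:MPC_tube_tight} that held for the old trajectory still holds for the shifted one (the set of $(x,\hat x)$ pairs over which we quantify only shrinks). The terminal pieces at stage $N$ of the candidate are handled by Assumption~\ref{ass:term}: \eqref{eq:term_1}--\eqref{eq:term_2} give terminal set invariance and the terminal constraint-tightening, noting that the candidate's terminal error bounds $\overline{e}_{N|t+1},\overline{s}_{N|t+1}$ satisfy exactly the one-step inequalities assumed there. This establishes feasibility of the candidate, hence recursive feasibility; closed-loop constraint satisfaction $(x_t,u_t)\in\mathbb{Z}$ follows from the $k=0$ instance of \eqref{eq:MPC_tube_tight} together with $V_\delta(\overline{x}^*_{0|t},\hat x_t)=\overline{s}^*_{0|t}$ and $V_{\mathrm o}(\hat x_t,x_t)\le\overline{e}_t=\overline{e}^*_{0|t}$.

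\textbf{Performance bound.} Denote by $\mathcal{J}_N^*(t)$ the optimal value of \eqref{eq:MPC_tube}. Evaluating $\mathcal{J}_N$ at the shifted candidate and using the telescoping cost together with the terminal decrease \eqref{eq:term_3} yields the standard inequality
\begin{align*}
\mathcal{J}_N^*(t+1)\le \mathcal{J}_N^*(t)-\ell(\overline{x}^*_{0|t},\overline{u}^*_{0|t},\overline{e}^*_{0|t},\overline{s}^*_{0|t})+\ell(0,0,\overline{e}_{\max},\overline{s}_{\max}),
\end{align*}
where monotonicity of $\ell$ and $V_{\mathrm f}$ in $\bar s,\bar e$ is used to bound the cost of the shifted candidate (whose error bounds are no larger) by $\mathcal{J}_N$ evaluated with the old, larger bounds, and then \eqref{eq:term_3} closes the gap at the terminal stage exactly as in nominal MPC. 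Summing this inequality over $t=0,\dots,T-1$, dividing by $T$, using $\mathcal{J}_N^*(T)\ge 0$ (which follows from compactness of $\mathbb{Z}$ and continuity of $\ell,V_{\mathrm f}$, so the stage/terminal costs are bounded below, after shifting $\ell$ by a constant if necessary), and letting $T\to\infty$ gives \eqref{eq:performance_RMPC}.

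\textbf{Main obstacle.} The delicate point is not the telescoping but verifying the monotone over-approximation $\overline{s}_{k|t+1}\le\overline{s}^*_{k+1|t}$ and $\overline{e}_{k|t+1}\le\overline{e}^*_{k+1|t}$: this is where Proposition~\ref{prop:RPI} must be invoked with precisely the online bound $\overline{e}_t$ (via Corollary~\ref{corol:IOSS}), and where one must confirm that the recursions \eqref{eq:MPC_tube_e_pred}--\eqref{eq:MPC_tube_s_pred} are exactly the worst-case propagations matching \eqref{eq:thm_IOSS_3} and \eqref{eq:RPI_ISS}. One must also be careful that $\gamma_{\mathrm{s,o}}$ is evaluated at $\overline{e}^*_{k|t}\ge\overline{e}_{k|t+1}$ so monotonicity of $\gamma_{\mathrm{s,o}}\in\mathcal{K}$ preserves the inequality through the $\overline{s}$-recursion. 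Everything else is routine once this comparison is in place.
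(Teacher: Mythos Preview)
Your proposal is correct and follows essentially the same approach as the paper's proof: the shifted candidate with terminal control appended, Proposition~\ref{prop:RPI} to bound $\overline{s}_{0|t+1}$, monotonicity of the recursions~\eqref{eq:MPC_tube_e_pred}--\eqref{eq:MPC_tube_s_pred} to propagate $\overline{s}_{k|t+1}\le\overline{s}^*_{k+1|t}$ and $\overline{e}_{k|t+1}\le\overline{e}^*_{k+1|t}$, Assumption~\ref{ass:term} for the terminal step, and the standard value-function telescoping for the performance bound. The only minor presentational difference is in the last step: the paper argues directly that compactness of $\mathbb{Z}$ together with the tightened constraints and the lower bounds~\eqref{eq:ISS_Lyap_1},~\eqref{eq:observer_prop_1} forces the \emph{nominal} optimizer $(\overline{x}^*_{\cdot|t},\overline{u}^*_{\cdot|t})$ to lie in a compact set, whence continuity gives uniform boundedness of $V_t$ and $(V_0-V_T)/T\to 0$, whereas you invoke a constant shift of $\ell$ to force $\mathcal{J}_N^*\ge 0$; both arguments are valid and yield the same conclusion.
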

\begin{proof}
\textbf{Part I: }
Suppose that the optimization problem is feasible at some time $t\in\mathbb{I}_{\geq 0}$. 
First, note that the conditions from Corollary~\ref{corol:IOSS} hold and thus  Conditions~\eqref{eq:thm_IOSS_1} and \eqref{eq:thm_IOSS_3} imply $V_{\mathrm{o}}(\hat{x}_t,x_t)\leq \overline{e}_t$ and $\overline{e}_{t+1}\leq \overline{e}^*_{1|t}$.   
At time $t+1$, consider the standard candidate input sequence $\overline{u}_{k|t+1}=\overline{u}_{k+1|t}^*$, $k\in\mathbb{I}_{[0,N-1]}$, $\overline{u}_{N|t+1}=k_{\mathrm{f}}(\overline{x}_{N|t+1})$, $\overline{x}_{0|t+1}=\overline{x}^*_{1|t+1}$, with $\overline{x}_{\cdot|t+1},\overline{e}_{\cdot|t+1},\overline{s}_{\cdot|t+1}$ according to the dynamics~\eqref{eq:MPC_tube_init}--\eqref{eq:MPC_tube_s_pred}. 
Proposition~\ref{prop:RPI} ensures $\overline{s}_{0|t+1}\leq \overline{s}^*_{1|t}$.
Monotonicity of the error propagation~\eqref{eq:MPC_tube_e_pred}--\eqref{eq:MPC_tube_s_pred}
 ensures that $\overline{s}_{k|t+1}\leq \overline{s}^*_{k+1|t}$, $\overline{e}_{k|t+1}\leq \overline{e}^*_{k+1|t}$, $k\in\mathbb{I}_{[0,N-1]}$. 
For $k\in\mathbb{I}_{[0,N-2]}$, satisfaction of the tightened constraints~\eqref{eq:MPC_tube_tight} follows from feasibility of step $k+1$ at time $t$ and the following nestedness property
\begin{align*}
&\{(\hat{x},x)|~V_\delta(\overline{x}_{k|t+1},\hat{x})\leq\overline{s}_{k|t+1},~V_{\mathrm{o}}(\hat{x},x)\leq \overline{e}_{k|t+1}\}\\
\subseteq&\{(\hat{x},x)|~V_\delta(\overline{x}^*_{k+1|t},\hat{x})\leq \overline{s}^*_{k+1|t},~V_{\mathrm{o}}(\hat{x},x)\leq \overline{e}^*_{k|t+1}\}.
\end{align*}
 Condition~\eqref{eq:term_2} ensures that the constraint~\eqref{eq:MPC_tube_tight} is also feasible for $k=N-1$. 
Satisfaction of the terminal constraint~\eqref{eq:MPC_tube_terminal} follows from Assumption~\ref{ass:term}. 
Thus, the MPC problem is recursively feasible. \\
\textbf{Part II: }
Feasibility of constraint~\eqref{eq:MPC_tube_tight} with $k=0$, $V_{\mathrm{o}}(\hat{x}_t,x_t)\leq\overline{e}_t$, and $V_\delta(\overline{x}^*_{0|t},\hat{x}_t)\leq \overline{s}^*_{0|t}$ ensure that the true closed-loop state and input $x_t$, $u_t=\kappa(\hat{x}_t,\overline{x}^*_{0|t},\overline{u}^*_{0|t})$ satisfy the posed constraints $\mathbb{Z}$ for all $t\in\mathbb{I}_{\geq 0}$. \\
\textbf{Part III: }
Denote the value function corresponding to problem~\eqref{eq:MPC_tube} at time $t$ by $V_t$. 
Using standard arguments based on the candidate solution from Part I, $\ell,V_{\mathrm{f}}$ non-decreasing in $\bar{s},\bar{e}$, and condition~\eqref{eq:term_3} of the terminal cost $V_{\mathrm{f}}$ yields
\begin{align}
\label{eq:robust_performance_step}
V_{t+1}-V_t\leq \ell(0,0,\overline{e}_{\max},\overline{s}_{\max}) -\ell(\overline{x}^*_{0|t},\overline{u}^*_{0|t},\overline{e}^*_{0|t},\overline{s}^*_{0|t}).
\end{align}
Compact constraints $\mathbb{Z}$ in combination with the tightened constraint set~\eqref{eq:MPC_tube_tight} and the bounds~\eqref{eq:ISS_Lyap_1},\eqref{eq:observer_prop_1}, provide uniform bounds on the optimal solution of problem~\eqref{eq:MPC_tube}.  
Continuity of $\ell,V_{\mathrm{f}}$ then ensures boundedness of the value function $V_t$, $t\in\mathbb{I}_{\geq 0}$. % from $t=0$ till $t=T$ 
Summing up Inequality~\eqref{eq:robust_performance_step} and taking the average we arrive at~\eqref{eq:performance_RMPC}, analogous to the performance bound in~\cite{bayer2014tube}.
\end{proof}
By choosing $\ell$ as a worst-case stage cost (cf.~\cite[Rk.~5]{Koehler2020Robust}, \cite{bayer2014tube}) Inequality~\eqref{eq:performance_RMPC} ensures that the average closed-loop performance is no worse than the worst-case stationary performance around the origin.

\subsubsection*{Simultaneous control and estimation using MHE $\&$ MPC}
In the following, we briefly demonstrate how the MHE scheme from Section~\ref{sec:estimation_MHE} can be incorporated in the robust MPC formulation, resulting in simultaneous estimation and control with a single optimization problem.

In principle, it is possible to directly include the MHE optimization~\eqref{eq:MHE_IOSS} in the MPC~\eqref{eq:MPC_tube} by replacing the Luenberger estimate $\hat{x}_t,\overline{e}_t$ and the bounds~\eqref{eq:thm_IOSS_2}--\eqref{eq:thm_IOSS_3} with the MHE estimate and the bounds~\eqref{eq:MHE_properties_2}--\eqref{eq:MHE_properties_3}. 
However, due to the $M$-step nature of the bound~\eqref{eq:MHE_properties_3} and the possible conservatism of~\eqref{eq:MHE_properties_2}, the MHE-MPC may be less favourable compared to the Luenberger observer.
Thus, we instead use the combined Luenberger-MHE proposed in Remark~\ref{rk:MHE_simple}. 
We first assume that the MHE can satisfy conditions~\eqref{eq:thm_IOSS_2}--\eqref{eq:thm_IOSS_3} and use the Luenberger observer with the MPC~\eqref{eq:MPC_tube}, whenever the corresponding MHE-MPC problem does not guarantee the same performance bound (or even feasibility). 

The corresponding MHE-MPC optimization problem is given by the following NLP: 
\begin{subequations}
\label{eq:MPC_MHE_simple}
\begin{align}
&\min_{\overline{u}_{\cdot|t},\overline{x}_{0|t},\hat{x}_{-M|t},\hat{w}_{\cdot|t}} \mathcal{J}_N(\overline{x}_{\cdot|t},u_{\cdot|t},\overline{e}_{\cdot|t},\overline{s}_{\cdot|t})\\
\text{s.t. }
%MHE dynamics
&\eqref{eq:MHE_IOSS_1}\text{--}\eqref{eq:MHE_IOSS_2},\quad \eqref{eq:MPC_tube_x_pred}\text{--}\eqref{eq:MPC_tube_terminal},\\
%bound observer
\label{eq:MPC_MHE_simple_e_init}
&\overline{e}_{0|t}=\sum_{j=1}^{M_t}\eta^{j-1}(\sigma_1(\overline{w}+\|\hat{w}_{-j|t}\|)+\sigma_2(\|\hat{y}_{-j|t}-y_{t-j}\|))\nonumber\\
&+\eta^{M_t}(\sigma_\delta(\|\hat{x}_{-M|t}-\hat{x}_{t-M_t}\|)+\overline{e}_{t-M}),\\
%initial constraint
\label{eq:MPC_MHE_simple_init}
& \overline{s}_{0|t}=V_\delta(\overline{x}_{0|t},\hat{x}_{0|t}).
\end{align}
\end{subequations} 
The main difference to the optimization problem~\eqref{eq:MPC_tube} is that we additionally optimize over $\hat{x}_{-M|t},\hat{w}_{\cdot|t}$ to compute the  state estimate $\hat{x}_{0|t}$, while the estimation error $\bar{e}_{0|t}$ is computed in~\eqref{eq:MPC_MHE_simple_e_init} based on the result in Theorem~\ref{thm:MHE}. 
To ensure that the performance bound~\eqref{eq:performance_RMPC} remains valid, we use a case distinction based on 
\begin{align*}
\bar{V}_{t+1}:=V_{t}+\ell(0,0,\overline{e}_{\max},\overline{s}_{\max}) -\ell(\overline{x}^*_{0|t},\overline{u}^*_{0|t},\overline{e}^*_{0|t},\overline{s}^*_{0|t}),
\end{align*}
which represents a valid upper bound on the value function of the Luenberger observer based MPC (cf. \eqref{eq:robust_performance_step}), where the optimal solution at time $t$ corresponds to the solution utilized in the case distinction. 

The following algorithm summarizes the corresponding closed-loop operation. 
\begin{algorithm}[H]
\caption{MHE \& MPC}
\label{alg:MPC_MHE_simple}
\begin{algorithmic}[1]
\State Solve MHE-MPC optimization problem~\eqref{eq:MPC_MHE_simple} at time $t$.
\If{Problem~\eqref{eq:MPC_MHE_simple} is feasible \& $V_t\leq \bar{V}_t$ holds.}  
\State Set $\overline{e}_t=\overline{e}_{0|t}^*$, $\hat{x}_t=\hat{x}_{0|t}^*$.
\Else{ }
\State Use update~\eqref{eq:observer} and \eqref{eq:e_t_bound_3} and solve~\eqref{eq:MPC_tube}. 
\EndIf 
\State Apply the input: $u_t=\kappa(\hat{x}_t,\overline{x}^*_{0|t},\overline{u}_{0|t}^*)$ and compute $\bar{V}_{t+1}$.
\State Set $t=t+1$ and go back to 1.
\end{algorithmic}
\end{algorithm}
By combining the results in Theorems~\ref{thm:MHE} and~\ref{thm:tube}, this formulation guarantees robust constraint satisfaction and the robust performance bound~\eqref{eq:performance_RMPC}. 

%discuss related
Ideas to fuse MHE and MPC in a single optimization problem have also been suggested in~\cite{copp2017simultaneous} and \cite{dong2020homothetic}.
However, the approach in~\cite{copp2017simultaneous} uses a $\min-\max$ formulation, necessitating special solvers, and a weighted control/estimated objective was minimized, which complicates the analysis of closed-loop properties. 
In~\cite{dong2020homothetic}, the special case of linear systems is considered, which allows for an efficient incorporation of set-membership estimation similar to Section~\ref{sec:estimation_setmember} in terms of linear inequality constraints.

%!TEX root = ./Output.tex
%%%%%%%%%%%%%%%%%%%%%%%%%%%%%%%%%%%%%%%%%%%%%%%%%%%%%%%%%%%%%%%%%%%%%%%%%%%%%%%
\subsection{Simplified constraint tightening}
\label{sec:tube_tight}
In the following, we show how the tube-based MPC~\eqref{eq:MPC_tube} can be simplified to a nominal MPC with additional constraint tightening, similar to~\cite{kohler2018novel,Koehler2020Robust,Kohler2019Output}. 
The motivation for this reformulation is as follows:
The functions $V_\delta,\kappa,V_{\mathrm{o}}$ can have a highly nonlinear expression and thus the constraints~\eqref{eq:MPC_tube_init}, \eqref{eq:MPC_tube_tight} can significantly increase the computational complexity of the MPC.
For example, if CCMs are used, then evaluating $V_\delta,\kappa$ requires the computation of the geodesic, which is a nonlinear optimization problem (cf.~\cite{manchester2017control}). 

To provide a computationally efficient constraint tightening, we assume that the constraints are characterized by a set of continuous functions. 
\begin{assumption}
\label{ass:constraints_cont} (Continuous constraints) 
The constraint set is given by $\mathbb{Z}=\{(x,u)\in\mathbb{R}^{n+m}|~g_i(x,u)\leq 0,~i\in\mathbb{I}_{[1,r]}\}$ with $g_i$ continuous. 
Furthermore, there exist functions $\sigma_{\mathrm{g}_i,\mathrm{s}},\sigma_{\mathrm{g}_i,\mathrm{o}}\in\mathcal{K}$, $i\in\mathbb{I}_{[1,r]}$, with $\sigma_{\mathrm{g}_i,\mathrm{s}}$ superadditive  such that 
\begin{align}
\label{eq:constraints_cont}
g_i({x},\kappa(\hat{x},\overline{x},\overline{u}))-g_i(\overline{x},\overline{u})\leq \sigma_{\mathrm{g}_i,\mathrm{s}}(V_\delta(\overline{x},\hat{x}))+\sigma_{\mathrm{g}_i,\mathrm{o}}(V_{\mathrm{o}}(\hat{x},x)),
\end{align}
 for all $x,\hat{x}, \overline{x}\in\mathbb{X}$, $\overline{u}\in\mathbb{U}$, $i\in\mathbb{I}_{[1,r]}$. 
\end{assumption}
Condition~\eqref{eq:constraints_cont} follows from uniform continuity of $g_i$ and uniform bounds on $\kappa,V_\delta,V_{\mathrm{o}}$ (cf. Def.~\ref{def:IOSS_Lyap}, Ass.~\ref{ass:stable_observer}). 
Superadditivity of $\sigma_{\mathrm{g}_i,\mathrm{s}}$ allows for a more intuitive MPC formulation, compare the general conditions in~\cite[App.~B]{Koehler2020Robust}. 

Assumption~\ref{ass:constraints_cont} allows us to formulate the following MPC problem using tightened constraints: 
\begin{subequations}
\label{eq:MPC_tight}
\begin{align}
&\min_{\overline{u}_{\cdot|t}} \mathcal{J}_N(\overline{x}_{\cdot|t},u_{\cdot|t},\overline{e}_{\cdot|t},\overline{s}_{\cdot|t})\\
\text{s.t. }
%initial constraint
\label{eq:MPC_tight_init}
& \overline{s}_{0|t}=0,\quad 
\overline{x}_{0|t}=\hat{x}_t,\\
%predicted dynamics
&\eqref{eq:MPC_tube_x_pred}\text{--}\eqref{eq:MPC_tube_s_pred},\\ 
%constraints
\label{eq:MPC_tight_tight}
&g_i(\overline{x}_{k|t},\overline{u}_{k|t})+\sigma_{\mathrm{g}_i,\mathrm{s}}(\overline{s}_{k|t})+\sigma_{\mathrm{g}_i,\mathrm{o}}(\overline{e}_{k|t})\leq 0,\\
&i\in\mathbb{I}_{[1,r]},~k\in\mathbb{I}_{[0,N-1]},\nonumber\\
%terminal
\label{eq:MPC_tight_terminal}
&(\overline{x}_{N|t},\overline{e}_{t})\in\mathbb{X}_{\mathrm{f}}.
\end{align}
\end{subequations} 

Compared to the tube MPC~\eqref{eq:MPC_tube}, the nominal initial state is not treated as an optimization variable, but fixed to the measured state in~\eqref{eq:MPC_tight_init}, compare the difference between~\cite{mayne2006robust} and \cite{chisci2002feasibility}. 
Since $\overline{s}_{0|t}=0$ and $\overline{e}_{0|t}=\overline{e}_t$ are fixed at time $t$, the constraint tightening $\sigma_{\mathrm{g}_i,\mathrm{s}}(\overline{s}_{k|t})+\sigma_{\mathrm{g}_i,\mathrm{o}}(\overline{e}_{k|t})$ can be computed prior to solving the optimization problem, compare~\cite{Kohler2019Output}. 
Furthermore, since $\overline{s}_{0|t}$ is fixed, the terminal set constraint~\eqref{eq:MPC_tight_terminal} only needs to depend on the initial estimation error $\overline{e}_t$ in addition to the terminal state $\overline{x}_{N|t}$. 
Intuitively, we can think of Problem~\eqref{eq:MPC_tight} as a simplified version of Problem~\eqref{eq:MPC_tube} using $s_{0|t}=0$, which allows for a more computationally efficient implementation. 
The overall computational complexity of the optimization problem~\eqref{eq:MPC_tight} is equivalent to a nominal MPC (same number of constraints, decision variables). 
This simplified constraint tightening can equally be used to simplify the MHE-MPC formulation in~\eqref{eq:MPC_MHE_simple}. 

\begin{algorithm}[H]
\caption{Constraint tightening MPC}
\label{alg:online_4}
\begin{algorithmic}[1]
\State Update $\hat{x}_t$ and $\overline{e}_t$  using~\eqref{eq:observer} and~\eqref{eq:e_t_bound_3}.
\State Solve the MPC optimization problem \eqref{eq:MPC_tight}.
\State Apply the control input: $u_t=\overline{u}_{0|t}^*$.
\State Set $t=t+1$ and go back to 1.
\end{algorithmic}
\end{algorithm}
Compared to Algorithm~\ref{alg:online}, the optimized input $u_{0|t}^*$ is directly applied instead of using a stabilizing feedback $\kappa$. 

\begin{assumption}
\label{ass:term_tight}
(Terminal ingredients)
There exists a control law $k_{\mathrm{f}}:\mathbb{X}\rightarrow\mathbb{U}$, such that for all $(\overline{x},\overline{e})\in\mathbb{X}_{\mathrm{f}}$, and all $d_{\mathrm{w}}$ satisfying $V_\delta(\overline{x}^+ + d_{\mathrm{w}},\overline{x}^+)\leq \rho^N (\gamma_{\mathrm{s,o}}(\overline{e})+\gamma_{\mathrm{s,w}}(\overline{w}))$, 
it holds that
\begin{subequations}
\label{eq:term_tight}
\begin{align}
\label{eq:term_tight_1}
&(\overline{x}^+ +d_{\mathrm{w}},\overline{e}_1)\in\mathbb{X}_{\mathrm{f}},\\
\label{eq:term_tight_2}
&g_i(\overline{x},k_{\mathrm{f}}(\overline{x}))+\sigma_{\mathrm{g}_i,\mathrm{s}}(\overline{s}_N)+\sigma_{\mathrm{g}_i,\mathrm{o}}(\overline{e}_N)\leq 0,~i\in\mathbb{I}_{[1,r]},
\end{align}
\end{subequations} 
with $\overline{x}^+=f(\overline{x},\overline{u})$, $\overline{u}=k_{\mathrm{f}}(\overline{x})$, 
$\overline{e}_k=\dfrac{1-\tilde{\eta}^k}{1-\tilde{\eta}}\sigma_4(\overline{w})+\tilde{\eta}^k\overline{e}$, $\overline{s}_{k+1}=\rho\overline{s}_k+\gamma_{\mathrm{s,o}}(\overline{e}_k)+\gamma_{\mathrm{s,w}}(\overline{w})$, $k\in\mathbb{I}_{[0,N-1]}$, $\overline{s}_0=0$. 
\end{assumption}
Compared to Assumption~\ref{ass:term}, Condition~\eqref{eq:term_tight_1} requires an RPI condition and thus cannot be satisfied with a terminal equality constraint, compare~\cite[Prop.~6]{Kohler2019Output}, \cite[Prop.~4]{Koehler2020Robust} for a constructive offline design. 

\begin{theorem}
\label{thm:tight}
Let Assumptions~\ref{ass:disturbance}, \ref{ass:stable_observer},  \ref{ass:add_disturbance}, \ref{ass:common_Lyap}, \ref{ass:constraints_cont}, and \ref{ass:term_tight} hold.
Suppose the system admits an (exponential-decay) $\delta$-IOSS Lyapunov function (Def.~\ref{def:IOSS_Lyap}) and an (exponential-decrease) $\delta$-ISS CLF (Def.~\ref{def:ISS_Lyap}).  
Suppose further that~\eqref{eq:MPC_tight} is feasible at $t=0$ and $V_{\mathrm{o}}(\hat{x}_0,x_0)\leq \overline{e}_0$. 
Then, for all $t\in\mathbb{I}_{\geq 0}$, problem~\eqref{eq:MPC_tight} is feasible and the constraints are satisfied, i.e., $(x_t,u_t)\in\mathbb{Z}$, for the resulting closed loop. 
\end{theorem}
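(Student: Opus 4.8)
The plan is to reproduce the structure of the proof of Theorem~\ref{thm:tube} --- recursive feasibility followed by closed-loop constraint satisfaction, with no performance statement to establish here --- while accounting for the two features of~\eqref{eq:MPC_tight} that differ from~\eqref{eq:MPC_tube}: the nominal initial state is fixed to $\overline{x}_{0|t}=\hat{x}_t$ and the tube is re-initialized at $\overline{s}_{0|t}=0$. Two facts will be used throughout. First, since $\gamma_\kappa(0)=0$ in~\eqref{eq:ISS_Lyap_3}, the applied input satisfies $u_t=\overline{u}^*_{0|t}=\kappa(\hat{x}_t,\overline{x}_{0|t},\overline{u}^*_{0|t})$, so the closed loop is exactly the one analysed in Proposition~\ref{prop:RPI} and Corollary~\ref{corol:IOSS}. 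Second, the hypotheses of Corollary~\ref{corol:IOSS} hold, hence $V_{\mathrm{o}}(\hat{x}_t,x_t)\le\overline{e}_t$ and, by~\eqref{eq:e_t_bound_3}, $\overline{e}_{t+1}\le\tilde{\eta}\overline{e}_t+\sigma_4(\overline{w})=\overline{e}^*_{1|t}$; since the propagations~\eqref{eq:MPC_tube_e_pred} and~\eqref{eq:MPC_tube_s_pred} are monotone in their initial value, this gives $\overline{e}_{k|t+1}\le\overline{e}^*_{k+1|t}$ for $k\in\mathbb{I}_{[0,N-1]}$.

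For recursive feasibility, assume~\eqref{eq:MPC_tight} is feasible at time $t$. Proposition~\ref{prop:RPI} applied with $\overline{x}=\overline{x}_{0|t}=\hat{x}_t$ (so $V_\delta(\overline{x}_{0|t},\hat{x}_t)=0$) and $V_{\mathrm{o}}(\hat{x}_t,x_t)\le\overline{e}_t$ yields $V_\delta(\overline{x}^*_{1|t},\hat{x}_{t+1})\le\gamma_{\mathrm{s,o}}(\overline{e}_t)+\gamma_{\mathrm{s,w}}(\overline{w})=\overline{s}^*_{1|t}$. Extend the previous optimal trajectory by the terminal controller, $\overline{u}^*_{N|t}:=k_{\mathrm{f}}(\overline{x}^*_{N|t})$, $\overline{x}^*_{N+1|t}:=f(\overline{x}^*_{N|t},\overline{u}^*_{N|t})$ (well defined since $(\overline{x}^*_{N|t},\overline{e}_t)\in\mathbb{X}_{\mathrm{f}}$), and take as candidate at $t+1$ the (forced) $\overline{x}_{0|t+1}=\hat{x}_{t+1}$ together with the tracking inputs $\overline{u}_{k|t+1}=\kappa(\overline{x}_{k|t+1},\overline{x}^*_{k+1|t},\overline{u}^*_{k+1|t})$, $k\in\mathbb{I}_{[0,N-1]}$. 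Recursive use of~\eqref{eq:ISS_Lyap_2} with $w=\tilde{w}=0$, starting from the bound above, gives $V_\delta(\overline{x}_{k|t+1},\overline{x}^*_{k+1|t})\le\rho^k\overline{s}^*_{1|t}$, $k\in\mathbb{I}_{[0,N]}$, and a short induction on $k$ using~\eqref{eq:MPC_tube_s_pred}, $\overline{s}_{0|t+1}=0$ and $\overline{e}_{k|t+1}\le\overline{e}^*_{k+1|t}$ gives $\rho^k\overline{s}^*_{1|t}+\overline{s}_{k|t+1}\le\overline{s}^*_{k+1|t}$ for $k\in\mathbb{I}_{[0,N-1]}$.

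The tightened constraints~\eqref{eq:MPC_tight_tight} for this candidate then follow from Assumption~\ref{ass:constraints_cont} applied with $\hat{x}=x=\overline{x}_{k|t+1}$ (so its $V_{\mathrm{o}}$-term vanishes), superadditivity of $\sigma_{\mathrm{g}_i,\mathrm{s}}$, and the two bounds just obtained:
\begin{align*}
&g_i(\overline{x}_{k|t+1},\overline{u}_{k|t+1})+\sigma_{\mathrm{g}_i,\mathrm{s}}(\overline{s}_{k|t+1})+\sigma_{\mathrm{g}_i,\mathrm{o}}(\overline{e}_{k|t+1})\\
\le& g_i(\overline{x}^*_{k+1|t},\overline{u}^*_{k+1|t})+\sigma_{\mathrm{g}_i,\mathrm{s}}\big(\rho^k\overline{s}^*_{1|t}+\overline{s}_{k|t+1}\big)+\sigma_{\mathrm{g}_i,\mathrm{o}}(\overline{e}^*_{k+1|t})\\
\le& g_i(\overline{x}^*_{k+1|t},\overline{u}^*_{k+1|t})+\sigma_{\mathrm{g}_i,\mathrm{s}}(\overline{s}^*_{k+1|t})+\sigma_{\mathrm{g}_i,\mathrm{o}}(\overline{e}^*_{k+1|t})\le 0,
\end{align*}
the last step being feasibility of~\eqref{eq:MPC_tight_tight} at $t$ for indices $k+1\le N-1$ and Assumption~\ref{ass:term_tight}~\eqref{eq:term_tight_2} for $k=N-1$ (here $\overline{e}^*_{k|t}$, $\overline{s}^*_{k|t}$ coincide with the sequences in Assumption~\ref{ass:term_tight} because $\overline{s}^*_{0|t}=0$). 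For the terminal constraint~\eqref{eq:MPC_tight_terminal}, set $d_{\mathrm{w}}:=\overline{x}_{N|t+1}-\overline{x}^*_{N+1|t}$; the $k=N$ instance of the CLF bound reads $V_\delta(\overline{x}^*_{N+1|t}+d_{\mathrm{w}},\overline{x}^*_{N+1|t})\le\rho^N\overline{s}^*_{1|t}=\rho^N(\gamma_{\mathrm{s,o}}(\overline{e}_t)+\gamma_{\mathrm{s,w}}(\overline{w}))$, so Assumption~\ref{ass:term_tight}~\eqref{eq:term_tight_1} gives $(\overline{x}_{N|t+1},\tilde{\eta}\overline{e}_t+\sigma_4(\overline{w}))\in\mathbb{X}_{\mathrm{f}}$, which with $\overline{e}_{t+1}\le\tilde{\eta}\overline{e}_t+\sigma_4(\overline{w})$ and the (downward-closed in the error argument) terminal-set design yields $(\overline{x}_{N|t+1},\overline{e}_{t+1})\in\mathbb{X}_{\mathrm{f}}$. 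With the base case being the assumed feasibility at $t=0$, this closes the induction. Constraint satisfaction is then immediate: at any feasible $t$, the $k=0$ instance of~\eqref{eq:MPC_tight_tight} with $\overline{s}_{0|t}=0$ reads $g_i(\hat{x}_t,u_t)+\sigma_{\mathrm{g}_i,\mathrm{o}}(\overline{e}_t)\le 0$, and Assumption~\ref{ass:constraints_cont} with $\overline{x}=\hat{x}=\hat{x}_t$, $x=x_t$ (again $V_\delta=0$, $\kappa(\hat{x}_t,\hat{x}_t,u_t)=u_t$) together with $V_{\mathrm{o}}(\hat{x}_t,x_t)\le\overline{e}_t$ gives $g_i(x_t,u_t)\le g_i(\hat{x}_t,u_t)+\sigma_{\mathrm{g}_i,\mathrm{o}}(\overline{e}_t)\le 0$, i.e.\ $(x_t,u_t)\in\mathbb{Z}$.

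I expect the terminal step to be the main obstacle. In contrast to Theorem~\ref{thm:tube}, the nominal state is re-initialized at the freshly measured $\hat{x}_{t+1}$ rather than at $\overline{x}^*_{1|t}$, so the candidate terminal state is only $\rho^N\overline{s}^*_{1|t}$-close (in $V_\delta$) to the terminal-controller extension $\overline{x}^*_{N+1|t}$; one has to verify that this is exactly the robustness margin budgeted in Assumption~\ref{ass:term_tight}~\eqref{eq:term_tight_1} (which is why the exponent $\rho^N$ appears there) and that the error-dependent terminal set absorbs the slack $\overline{e}_{t+1}\le\overline{e}^*_{1|t}$. The remaining work --- the monotone propagation $\overline{e}_{k|t+1}\le\overline{e}^*_{k+1|t}$ and the $\rho^k\overline{s}^*_{1|t}+\overline{s}_{k|t+1}\le\overline{s}^*_{k+1|t}$ induction carried through the superadditive $\sigma_{\mathrm{g}_i,\mathrm{s}}$ --- is routine bookkeeping.
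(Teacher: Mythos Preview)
Your proposal is correct and follows essentially the same approach as the paper's proof: the same candidate input $\overline{u}_{k|t+1}=\kappa(\overline{x}_{k|t+1},\overline{x}^*_{k+1|t},\overline{u}^*_{k+1|t})$, the same contraction bound $V_\delta(\overline{x}_{k|t+1},\overline{x}^*_{k+1|t})\le\rho^k\overline{s}^*_{1|t}$, the same key inequality $\rho^k\overline{s}^*_{1|t}+\overline{s}_{k|t+1}\le\overline{s}^*_{k+1|t}$ combined with superadditivity of $\sigma_{\mathrm{g}_i,\mathrm{s}}$, and the same terminal-set argument via Assumption~\ref{ass:term_tight}. Your treatment is in fact slightly more explicit than the paper's in two places: you spell out the initial step $V_\delta(\overline{x}^*_{1|t},\hat{x}_{t+1})\le\overline{s}^*_{1|t}$ via Proposition~\ref{prop:RPI}, and you correctly flag that concluding $(\overline{x}_{N|t+1},\overline{e}_{t+1})\in\mathbb{X}_{\mathrm{f}}$ from $(\overline{x}_{N|t+1},\tilde{\eta}\overline{e}_t+\sigma_4(\overline{w}))\in\mathbb{X}_{\mathrm{f}}$ uses downward-closedness of $\mathbb{X}_{\mathrm{f}}$ in $\overline{e}$, which the paper leaves implicit.
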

\begin{proof}
\textbf{Part I: }
Suppose that the optimization problem is feasible at some time $t\in\mathbb{I}_{\geq 0}$. 
Analogous to Theorem~\ref{thm:tube},  $\overline{e}_{k|t+1}\leq \overline{e}^*_{k+1|t}$, $k\in\mathbb{I}_{[0,N-1]}$. 
Denote $\overline{u}^*_{N|t}=k_{\mathrm{f}}(\overline{x}^*_{N|t})$, $\overline{x}^*_{N+1|t}=f(\overline{x}^*_{N|t},\overline{u}_{N|t}^*)$. 
As a candidate solution, consider $\overline{u}_{k|t+1}=\kappa(\overline{x}_{k|t+1},\overline{x}^*_{k+1|t},\overline{u}^*_{k+1|t})$, $k\in\mathbb{I}_{[0,N]}$.
Proposition~\ref{prop:RPI} with $w=0$, $\hat{x}=x$, and $\hat{f}(x,u,y)=f(x,u)$ ensures 
$V_\delta(\overline{x}_{k|t+1},\overline{x}^*_{k+1|t})\leq \rho^k V_\delta(\overline{x}_{0|t+1},\overline{x}^*_{1|t})\leq \rho^k \overline{s}^*_{1|t}$, $k\in\mathbb{I}_{[0,N]}$. 
Analogous to~\cite[Thm.~1]{Koehler2020Robust}, $\overline{e}^*_{k+1|t}\geq \overline{e}_{k|t+1}$ in combination with the formula~\eqref{eq:MPC_tube_s_pred} ensures $s^*_{k+1|t}\geq s_{k|t+1}+\rho^ks^*_{1|t}$. 
Superadditivity of $\sigma_{\mathrm{g}_i,\mathrm{s}}$ then implies
 \begin{align*}
\sigma_{\mathrm{g}_i,\mathrm{s}}(\overline{s}^*_{k+1|t})\geq \sigma_{\mathrm{g}_i,\mathrm{s}}(\overline{s}_{k|t+1})+\sigma_{\mathrm{g}_i,\mathrm{s}}(\rho^k\overline{s}^*_{1|t}),~k\in\mathbb{I}_{[0,N]}.
\end{align*}
Thus, the candidate solution satisfies the tightened constraints~\eqref{eq:MPC_tight_tight} for $k\in\mathbb{I}_{[0,N-2]}$ using
\begin{align*}
&g_i(\overline{x}_{k|t+1},\overline{u}_{k|t+1})+\sigma_{\mathrm{g}_i,\mathrm{s}}(\overline{s}_{k|t+1})+\sigma_{\mathrm{g}_i,\mathrm{o}}(\overline{e}_{k|t+1})\\
\stackrel{\eqref{eq:constraints_cont}}{\leq}& g_i(\overline{x}^*_{k+1|t},\overline{u}^*_{k+1|t})+\sigma_{\mathrm{g}_i,\mathrm{s}}(\rho^k \overline{s}^*_{1|t})\\
&+\sigma_{\mathrm{g}_i,\mathrm{s}}(\overline{s}_{k|t+1})+\sigma_{\mathrm{g}_i,\mathrm{o}}(\overline{e}_{k|t+1})\\
\leq& g_i(\overline{x}^*_{k+1|t},\overline{u}^*_{k+1|t})+\sigma_{\mathrm{g}_i,\mathrm{s}}(\overline{s}^*_{k+1|t})+\sigma_{\mathrm{g}_i,\mathrm{o}}(\overline{e}^*_{k+1|t})\stackrel{\eqref{eq:MPC_tight_tight}}{\leq}  0,
\end{align*}
where the first inequality used~\eqref{eq:constraints_cont} with $x=\hat{x}$. 
Satisfaction of the tightened constraints~\eqref{eq:MPC_tight_tight} for $k=N-1$ follows from Condition~\eqref{eq:term_tight_2} in Assumption~\ref{ass:term_tight}.
Note that $V_\delta(\overline{x}^*_{N+1|t},\overline{x}_{N|t+1})\leq \rho^Ns^*_{1|t}$ (cf. Part I), $s^*_{1|t}=\gamma_{s,o}(\overline{e}_t)+\gamma_{s,w}(\overline{w})$, and $(x^*_{N|t},\bar{e}_t)\in\mathbb{X}_{\mathrm{f}}$. 
Thus, Condition~\eqref{eq:term_tight_1} ensures satisfaction of the terminal constraint~\eqref{eq:MPC_tight_terminal}.
Hence, the MPC problem is recursively feasible. \\
\textbf{Part II: }
Feasibility of constraint~\eqref{eq:MPC_tight_tight} with $k=0$, $V_{\mathrm{o}}(\hat{x}_t,x_t)\leq\overline{e}_t$, $\hat{x}_t=\overline{x}_{0|t}$, $\overline{s}_{0|t}=0$, $u_t=\kappa(\bar{x}_{0|t},\bar{x}_{0|t},\bar{u}_{0|t})$ (cf.~\eqref{eq:ISS_Lyap_3}) in combination with Assumption~\ref{ass:constraints_cont} ensures that the true closed-loop state and input $x_t$, $u_t$ satisfy the posed constraints $\mathbb{Z}$ for all $t\in\mathbb{I}_{\geq 0}$. 
\end{proof}
%performance
Given a suitably defined terminal cost $V_{\mathrm{f}}$ (cf.~\eqref{eq:term_3}) and continuity bounds for $\ell,V_{\mathrm{f}}$, we can also show robust performance guarantees similar to~\eqref{eq:performance_RMPC} for this MPC formulation, compare~\cite[Thm.~1]{Koehler2020Robust} for a similar derivation. 
%exp. RMPC
The provided analysis also demonstrates that the previous robust MPC formulations in~\cite{kohler2018novel,Koehler2020Robust,Kohler2019Output}
 are \textit{not} restricted to incremental \textit{exponential} stability but can use exponential-decay $\delta$-CLFs, which allows for a broader range of applications.

%!TEX root = ./Output.tex
%%%%%%%%%%%%%%%%%%%%%%%%%%%%%%%%%%%%%%%%%%%%%%%%%%%%%%%%%%%%%%%%%%%%%%%%%%%%%%%
\section{Numerical example}
\label{sec:num} 
The following example considers a nonlinear robust output-feedback problem with state constraints and demonstrates the reduction in conservatism by using the MPC formulations in Section~\ref{sec:tube} in combination with the estimation error bounds from Section~\ref{sec:estimation}. 
In the following, the offline and online computation is done in Matlab using SeDuMi-1.3~\cite{sturm1999using} and CasADi\footnote{%
We use IPOPT and limit the number of iterations to $10^3$. In case the resulting solution is worse than the guarantees of the candidate solution (which can happen due to the difficult initialization), the candidate solution is used. 
}~\cite{andersson2019casadi}, respectively.
Furthermore, in the following we use the fact that the system properties only need to hold on the constraint set $\mathbb{Z}$ (cf. Rk.~\ref{rk:constraints}). 
%!TEX root = ./Output.tex
%%%%%%%%%%%%%%%%%%%%%%%%%%%%%%%%%%%%%%%%%%%%%%%%%%%%%%%%%%%%%%%%%%%%%%%%%%%%%%%
\subsubsection*{System model}
We consider the following 10-state quadrotor model 
\begin{align*}
\dot{z}_1=&v_1+{w}_1,\quad \dot{v}_1=g\tan(\phi_1)+w_6,\\
\dot{z}_2=&v_2+{w}_2,\quad \dot{v}_2=g\tan(\phi_2)+w_7,\\
\dot{z}_3=&v_3+w_3,\quad \dot{v}_3=-g+k_Tu_3+w_8,\\
\dot{\phi}_1=&-d_1\phi_1+\omega_1+w_4,\quad \dot{\omega}_1=-d_0\phi_1+n_0u_1+w_9,\\
\dot{\phi}_2=&-d_1\phi_2+\omega_2+w_5,\quad \dot{\omega}_2=-d_0\phi_2+n_0u_2+w_10,\\
x=&\begin{pmatrix}
z_1&z_2&z_3&\phi_1&\phi_2&v_1&v_2&v_3&\omega_1&\omega_2
\end{pmatrix}^\top\in\mathbb{R}^{10},\\
y=&\begin{pmatrix} I_5&0_5\end{pmatrix}x+\begin{pmatrix}0_{5\times 8}&5\cdot I_5\end{pmatrix}w\in\mathbb{R}^5,~ u\in\mathbb{R}^3, ~ w\in\mathbb{R}^{15},
\end{align*}
where $(z_1,z_2,z_3)$ are the positions, $(v_1,v_2,v_3)$ are the velocities, $(\phi_1,\phi_2)$ denote the pitch and roll angles, $(\omega_1,\omega_2)$ the pitch and roll rates, and $(u_1,u_2,u_3)$ are the adjustable pitch angle, roll angle and the vertical thrust. 
The parameters are $d_0=10,~d_1=8,~n_0=10,~k_T=0.91,~g=9.8$, and the constraint set is
\begin{align*}
\mathbb{Z}=\{(x,u)|~z_1\leq 4,~|\phi_{i}|\leq \pi/6,~|u_{1,2}|\leq \pi/9,~u_3\in[0,2g]\}.
\end{align*}
The discrete-time model is obtained using an Euler discretization with a sampling time $h=0.05$ and considering $u,w$ piece-wise constant. This essentially corresponds to the problem in~\cite{Koehler2020Robust}, with output measurements instead of state feedback and disturbance/noise on every variable. 
In the following, the noise and disturbances are chosen randomly such that $\|w_t\|= \bar{w}= 0.9\cdot 10^{-3}$.

%observer - detectability
For the offline design, we embed the differential dynamics in an LPV system (similar to a linear difference inclusions), thus allowing for the usage of simple LMI methods (cf.~\cite{koelewijn2021incremental,koehler2020nonlinearTAC,wang2020virtual,koelewijn2021nonlinear}). 
We verify detectability (Def.~\ref{def:IOSS_Lyap}) by computing a quadratic $\delta$-IOSS Lyapunov function $W_\delta$  resulting in $\eta=0.96$. %, $\sigma_1(r)=2.71\cdot r^2$, $\sigma_2(r)=8.54\cdot r^2$.   offline using LMI methods (cf. \cite{koelewijn2021incremental}),
We design a Luenberger observer of the form~\eqref{eq:obs_linear_gain}, which satisfies Assumption~\ref{ass:stable_observer} with $V_{\mathrm{o}}=W_\delta$.   
The corresponding design is chosen to achieve a faster nominal convergence rate of $\tilde{\eta}=0.957$. 
We compute a quadratic $\delta$-ISS CLF $V_\delta$ with a linear feedback $\kappa$ (Def.~\ref{def:ISS_Lyap}) satisfying~\eqref{eq:RPI_ISS} with $\rho=0.96$.  
%Given 
Given $W$ quadratic and $\max\{W(\hat{x},x),W(\tilde{x},x)\}\leq \overline{e}_{\max}$, Condition~\eqref{eq:IOSS_cont} (Ass.~\ref{ass:IOSS_cont}) holds with $\sigma_\delta(\|\hat{x}-\tilde{x}\|)=\|\hat{x}-\tilde{x}\|_P^2+2\|\hat{x}-\tilde{x}\|_P\sqrt{\overline{e}_{\max}}$, which is valid for the MHE since $W_\delta(\hat{x}_{t-M_t},x_{t-M_t})\leq \overline{e}_{t-M_t}\leq \overline{e}_{\max}$.

\begin{comment}
LDI-Prop:
\begin{align}
&V_{\delta}(f(\overline{x},\overline{u}),\hat{f}(\hat{x},u,y))\nonumber\\
=&\|(A_x+B_xK)(\bar{x}-\hat{x})+LC(\hat{x}-x)+LE_{\mathrm{y}}w)\|_P^2\\
\leq&\rho \|\bar{x}-\hat{x}\|_P^2+\gamma_{so}\|\hat{x}-x\|_{P_o}^2+\gamma_{sw}\|w\|^2
\end{align}
Write with $(\bar{x}-\hat{x},\hat{x}-x,w)=(z_1,z_2,z_3)$
\begin{align*}
&\begin{pmatrix}A_x+B_xK&LC&LE_{\mathrm{y}}\end{pmatrix}^\top
P_\delta
\begin{pmatrix}A_x+B_xK&LC&LE_{\mathrm{y}}\end{pmatrix}\\
\preceq&\text{diag}(\rho P_\delta,\gamma_{so}P_o,\gamma_{sw}I)
\end{align*}
$X=P_\delta^{-1}$, $K=YP$, $Y=KX$, multiple $(X,I,I)$
\begin{align*}
&\begin{pmatrix}A_xX+B_xY&LC&LE_{\mathrm{y}}\end{pmatrix}^\top
X^{-1}\begin{pmatrix}A_xX+B_xY&LC&LE_{\mathrm{y}}\end{pmatrix}\\
\preceq&\text{diag}(\rho X,\gamma_{so}P_o,\gamma_{sw}I)
\end{align*}
Schur
\begin{align*}
&\begin{pmatrix}X&\begin{pmatrix}A_xX+B_xY&LC&LE_{\mathrm{y}}\end{pmatrix}\\
\begin{pmatrix}A_xX+B_xY&LC&LE_{\mathrm{y}}\end{pmatrix}^\top&
\text{diag}(\rho X,\gamma_{so}P_o,\gamma_{sw}I)\end{pmatrix}\succeq 0
\end{align*}
Numerical values x1:
normal: 326, 6.6\\
Set E_y=0, then 70,0.9\\
E_x=0, then exact tracking possible? ($K=0$)
\end{comment}

\subsubsection*{Estimation}
First, we study only the estimation error bounds from Section~\ref{sec:estimation} with an exemplary trajectory, which is generated with the dynamic output-feedback from Proposition~\ref{prop:RPI}. 
For simplicity, we consider the initial condition $\hat{x}_0=x_0$, $\overline{e}_0=0$, i.e., perfect knowledge regarding the initial state.

We implemented the Luenberger observer and computed bounds $\overline{e}_t$ on the estimation error using: the a-priori bound~\eqref{eq:thm_IOSS_3}, the detectability bounds from Sections~\ref{sec:IOSS}, and the set-membership estimation from Section~\ref{sec:estimation_setmember} with horizon $M=4$. 
In addition, we implemented the MHE scheme from Section~\ref{sec:estimation_MHE} with horizon $M=10$. 
Furthermore, we also implement a set-membership estimation (cf. Sec.~\ref{sec:estimation_setmember}) to provide tighter bounds for the resulting MHE estimate. 
The results can be seen in Figure~\ref{fig:estimate}.

\begin{figure}[hbtp]
\begin{center}
\includegraphics[width=0.43\textwidth]{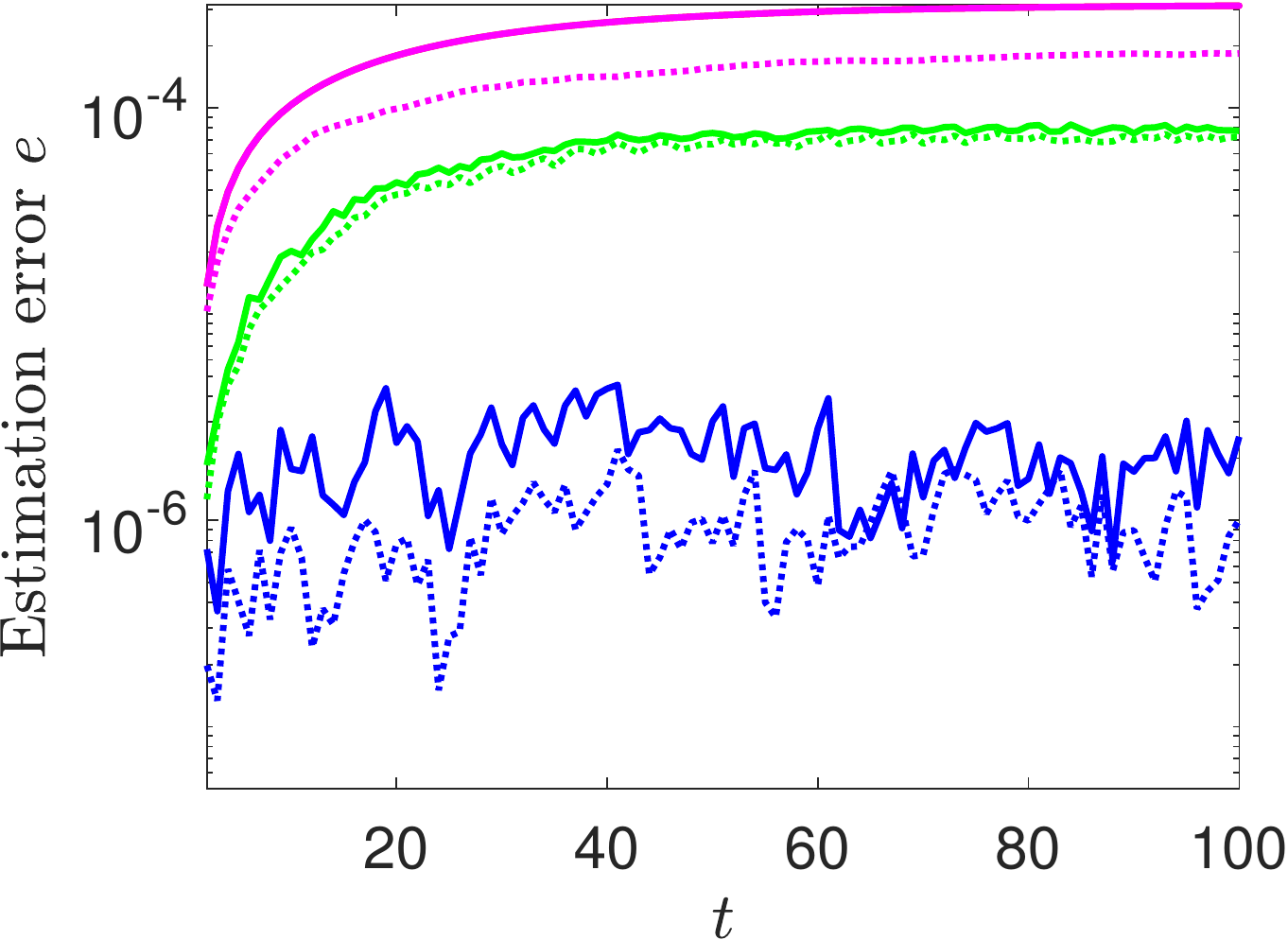}
\end{center}
\caption{Estimation error from the Luenberger observer is solid: true error $V_{\mathrm{o}}(\hat{x}_t,x_t)$ (blue), $\delta$-IOSS bound~\eqref{eq:e_t_bound_3} (magenta), set-membership estimation (green). 
Estimation error from the MHE is dotted: true error $V_{\mathrm{o}}(\hat{x}_t,x_t)$ (blue), $\delta$-IOSS bound~\eqref{eq:MHE_update} (magenta), set-membership estimation (green). 
}
\label{fig:estimate}
\end{figure}

%1. conservative
All the derived bounds are conservative over-approximations of the true estimation errors, typically with a factor of $10$ and higher.
This conservatism is most likely attributable to the general conservatism of worst-case robust bounds, especially given the high dimension of the disturbances $w\in\mathbb{R}^{15}$, which makes the occurrence of worst-case disturbances very unlikely.
%
%2. IOSS a priori - set-membership
Considering the Luenberger observer: 
The simple $\delta$-IOSS bound from Corollary~\ref{corol:IOSS} is easy to apply (scalar multiplication and min-operator), but it in the considered simulation it does not improve the a-priori error bounds. 
For the considered system we have $\sigma_4(\overline{w})/(1-\tilde{\eta})\approx 4\sigma_1(\overline{w})/(1-\eta)$. 
Thus, in the extreme case that no disturbances are encountered in closed-loop operation, i.e., $w_t=0$, the simple $\delta$-IOSS bound from Corollary~\ref{corol:IOSS} improves the a-priori bounds by $75\%$. 
The set-membership estimation (Sec.~\ref{sec:estimation_setmember} reduces the bounds on the estimation error on average by $75\%$. 
However, this comes at a significant increase in the online computational complexity.  
%Casadi:100ms
%
%3. MHE - set
Comparing the MHE and the Luenberger observer: the true estimation error of the MHE is on average only $36\%$ of the estimation error of the Luenberger observer, while the error bounds computed using set-membership estimation do not differ significantly. 
In addition, the online computed error bound of the MHE based on detectability is on average $40\%$ smaller than the a-priori error bound for the Luenberger observer. 
Although MHE also requires the solution to an NLP online, due to the different structure of the optimization problem, obtaining a solution to the MHE problem~\eqref{eq:MHE_IOSS} only required on average $5\%$ of the computational time compared to the set-membership estimation~\eqref{eq:NLP_set_estimation}.

As a summary: The simple $\delta$-IOSS bound~\eqref{eq:e_t_bound_3} is very easy to apply but seems to only provide improved estimates in case the assumed disturbance bound $\overline{w}$ is a conservative over-approximation of the true disturbances. 
The set-membership estimation method (Sec.~\ref{sec:estimation_setmember}) can provide significantly reduced bounds on the estimation error, however, the application also requires a globally optimal solution to a complex NLP and hence special care is required to avoid possible pitfalls. 
The MHE is able to obtain better state estimates including improved bounds on the estimation error compared to the a-priori bounds of a Luenberger observer, while the additional computational complexity is typically small compared to the MPC scheme.

\subsubsection*{Robust Output-feedback MPC}
In the following, we use the robust output-feedback MPC framework (Sec.~\ref{sec:tube}) and particularly focus on the effect of using different estimation methods. 
In order to demonstrate the reduction in conservatism, we wish to increase the position $x_1$ as much as possible while robustly guaranteeing that the constraint $x_1\leq 4$ is satisfied.
This is implemented with the following stage cost $\ell(\overline{x},\overline{u},\overline{e},\overline{s})=-\bar{x}_1+\ell_{\mathrm{s}}(\overline{s})+\ell_{\mathrm{e}}(\overline{e})$, which provides an upper bound on $-x_1$ for all $V_{\mathrm{o}}(\hat{x},x)\leq \overline{e}$, $V_\delta(\overline{x},\hat{x})\leq \overline{s}$, analogous to Assumption~\ref{ass:constraints_cont}.
Hence, the distance to the constraint can be viewed as a measure of the conservatism of the different robust formulation.  
We implemented the homothetic tube MPC~\eqref{eq:MPC_tube}, where \eqref{eq:MPC_tube_tight} is implemented using condition~\eqref{eq:constraints_cont}. 
Since $V_{\mathrm{s}}$ is quadratic, the additional complexity of the initial state constraint~\eqref{eq:MPC_tube_init} is limited and thus we do not need to implement the simplified constraint tightening from Section~\ref{sec:tube_tight}. 
We consider a prediction horizon of $N=40$ and a terminal equality constraint with the setpoint $x_{\mathrm{s}}=[3.7,3,10,0_7]^\top$, $u_{\mathrm{s}}=[0,0,g/k_T]^\top$, which satisfies Assumption~\ref{ass:term} with the smallest feasible stage cost $\ell(x_{\mathrm{s}},u_{\mathrm{s}},\overline{e}_{\max},\overline{s}_{\max})$. 
For simplicity, we consider the initial condition $x_0=\hat{x}_0=x_{\mathrm{s}}$, $\overline{e}_0=0$. 
We implemented the MPC using the Luenberger estimate with the a-priori error bound (Luen), the set-membership estimation (SetMember),  the MHE-Luenberger estimate (MHE) from Remark~\ref{rk:MHE_simple}),\footnote{%
Instead of verifying~\eqref{eq:thm_IOSS_2}--\eqref{eq:thm_IOSS_3} , we only need to verify that \eqref{eq:thm_IOSS_3} and \eqref{eq:RPI_ISS} remain valid, which is less conservative (and was satisfied in all closed-loop simulations). 
} and the joint MHE-MPC formulation (MHE-MPC) from Algorithm~\ref{alg:MPC_MHE_simple}. 
For comparison, we also implemented a simple rigid tube formulation (cf.~\cite{singh2019robust,bayer2013discrete}) with $\overline{s}=\overline{s}_{\max}$, $\overline{e}=\overline{e}_{\max}$.

\begin{figure}[hbtp]
\begin{center}
\includegraphics[width=0.43\textwidth]{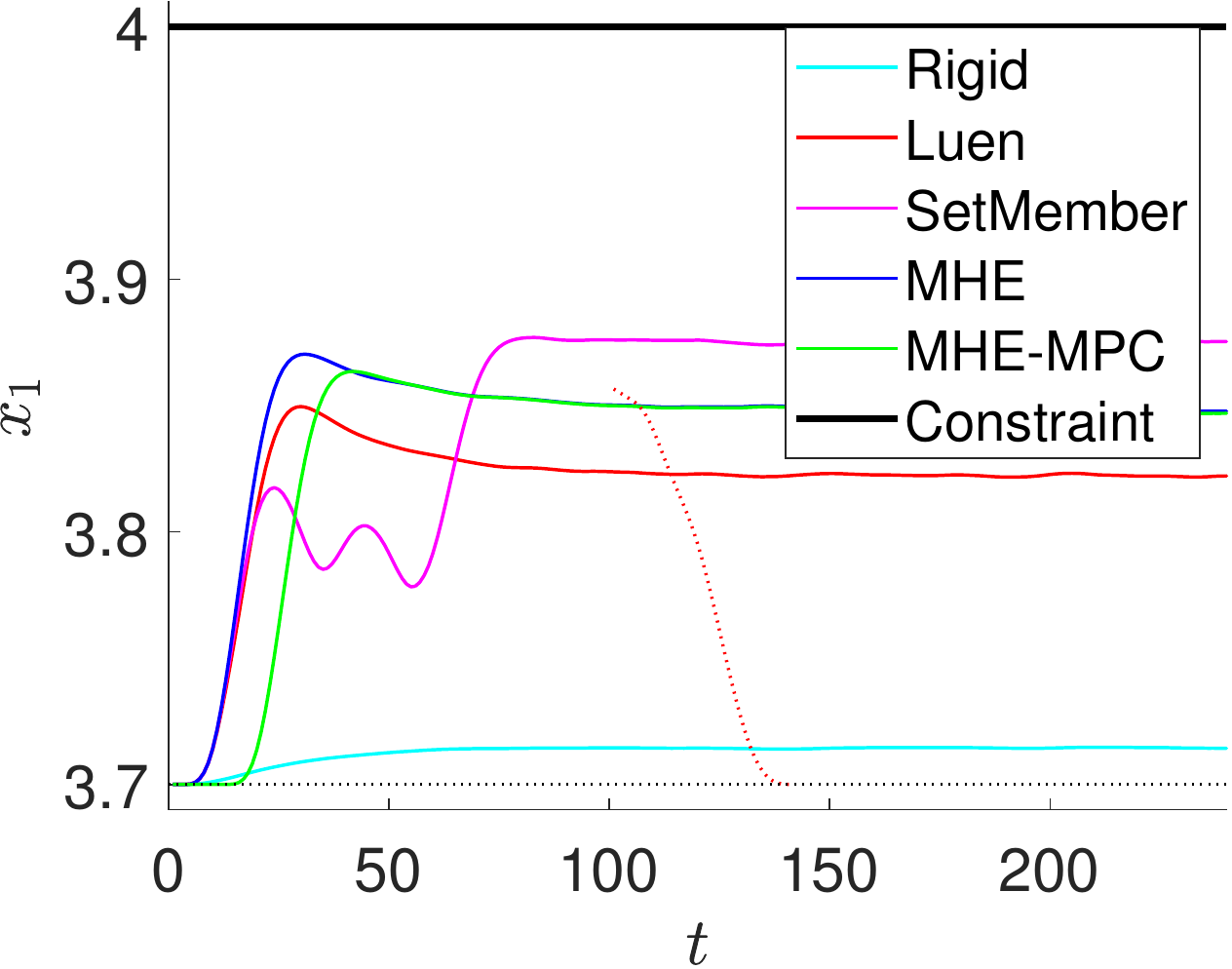}
\end{center}
\caption{Closed-loop position $x_1$: Rigid tube formulation (Rigid, cyan), homothetic tube formulation with the Luenberger observer based on the a-priori estimate (Luen, red) and set-membership estimation (SetMember, magenta); homothetic tube formulation based on the MHE estimate (MHE, blue) and the joint MHE-MPC formulation (MHE-MPC, green). State constraint $x_1\leq 4$ shown in black, solid and nominal steady-state $x_{\mathrm{s}}$ in black, dotted. 
Exemplary open-loop nominal trajectory $\overline{x}^*_{\cdot|t}$ at $t=100$ (Red, dotted).  
}
\label{fig:MPC}
\end{figure}

The results can be seen in Figure~\ref{fig:MPC}.
First, we can see that the smaller estimation error of using an MHE directly translates into a smaller conservatism of the output-feedback MPC and safe operation closer to the constraints. 
Regarding the joint MHE-MPC formulation (Alg.~\ref{alg:MPC_MHE_simple}), we see virtually no difference compared to the separate MHE and MPC formulation, except for a delay of $M=10$, which is due to the initialization of the MHE-MPC at $t=M$.  
In both MHE implementations, the case distinctions from Remark~\ref{rk:MHE_simple}/Algorithm~\ref{alg:MPC_MHE_simple} were never active and the Luenberger observer was not used. 
The set-membership estimation is able to provide even smaller error bounds resulting in further reduced conservatism, however, at the expense of a significant increase in the online computational demand. 
Lastly, the homothetic tube formulation has essentially the same computational complexity as the rigid tube formulation, however, the additional degrees of freedom in the homothetic tube formulation yield a significant improvement. 
From the exemplary open-loop trajectory in Figure~\ref{fig:MPC}, we can see that the nominal state predictions $\overline{x}$ differ drastically from the true closed-loop trajectories.    
Thus, overall we can see that the combination of the proposed output-feedback homothetic tube MPC with advanced state estimation methods (MHE, set-membership) outperforms competing approaches in terms of conservatism and performance.

%!TEX root = ./Output.tex
%%%%%%%%%%%%%%%%%%%%%%%%%%%%%%%%%%%%%%%%%%%%%%%%%%%%%%%%%%%%%%%%%%%%%%%%%%%%%%%
\section{Conclusion}
\label{sec:sum} 
We have presented a general framework for robust nonlinear output-feedback MPC.  
The main features of the proposed framework are the applicability to a rather general class of nonlinear constrained systems (incremental stabilizable \& detectable) and the fact that online bounds on the estimation error are incorporated to reduce conservatism. 
%
%
%Estimation
We have provided different methods to estimate the state and bound the observer error, with a varying degree of complexity and conservatism utilizing: a) stability of a Luenberger observer; b) detectability ($\delta$-IOSS); c) set-membership estimation; d) and "optimal" estimates using MHE. 
The proposed MPC formulation incorporates the resulting online bounds on the estimation error and ensures robust constraint satisfaction and performance. 
The corresponding MPC formulation also generalizes earlier nonlinear robust MPC methods~\cite{singh2019robust,bayer2013discrete,kohler2018novel,Koehler2020Robust} using a homothetic tube formulation and the corresponding computational complexity can be reduced to be equivalent to a nominal MPC scheme. 
%MHE_MPC
We showed how an MHE can be incorporated into the MPC formulation to yield a single optimization problem that jointly solves estimation and control, resulting in a larger feasible set. 
%example
We have demonstrated the improved performance of the proposed approach with a nonlinear numerical example. 
%
%Overall, the proposed framework is applicable to rather general class of nonlinear systems; can incorporate a variety of estimation methods; provides safety and performance guarantees; and avoids the conservatism of using worst-case estimation bounds.

%
%%%%%%%%%%%%%%%%%%%%%%%%%%%%%%%%%%%%%%%%%%%%%%%%%%%%%%%%%%%%%%%%%%%%%%%%%%%%%%%%
\bibliographystyle{IEEEtran}  
\bibliography{Literature}  

\begin{IEEEbiography}[{\includegraphics[width=1in,height=1.25in,clip,keepaspectratio]{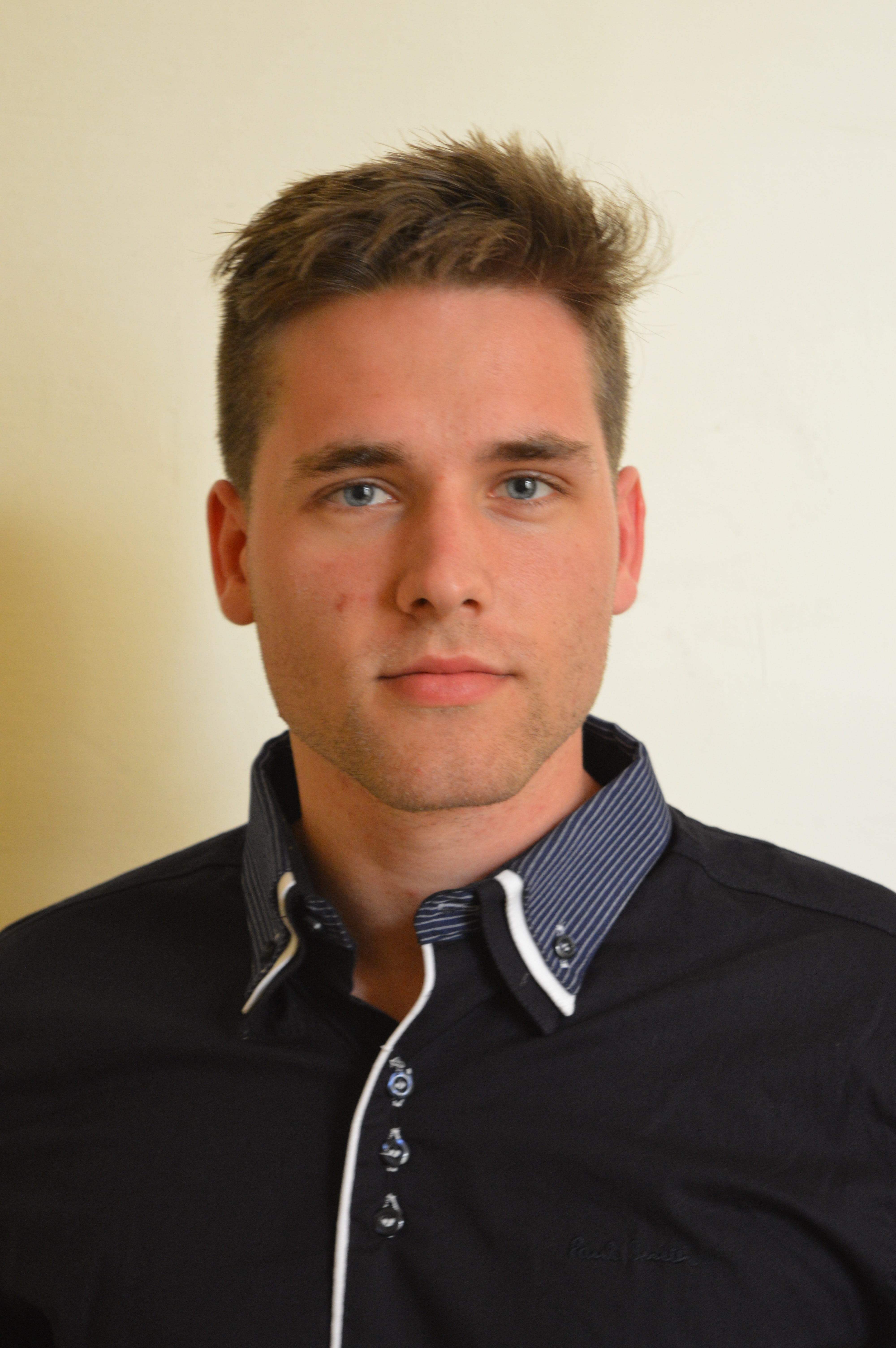}}]{Johannes K\"ohler}
 received his Master degree in Engineering Cybernetics from the University of Stuttgart, Germany, in 2017.
He has been a doctoral student at the \emph{Institute for Systems Theory and Automatic Control} at the University of Stuttgart under the supervision of Prof. Frank Allg\"ower from 2017 till 2021. 
Since then, he is a Postdoctoral researcher at the \textit{Institute for Dynamic Systems and Control} at ETH Zürich. 
His research interests are in the area of model predictive control and the control of nonlinear uncertain systems. 
\end{IEEEbiography}
\begin{IEEEbiography}[{\includegraphics[width=1in,height=1.25in,clip,keepaspectratio]{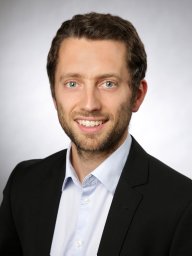}}]{Matthias A. M\"uller}
 received a Diploma degree in Engineering Cybernetics from the University of Stuttgart, Germany, and an M.S. in Electrical and Computer Engineering from the University of Illinois at Urbana-Champaign, US, both in 2009.
In 2014, he obtained a Ph.D. in Mechanical Engineering, also from the University of Stuttgart, Germany, for which he received the 2015 European Ph.D. award on control for complex and heterogeneous systems. Since 2019, he is director of the Institute of Automatic Control and full professor at the Leibniz University Hannover, Germany. 
He obtained an ERC Starting Grant in 2020 and is recipient of the inaugural Brockett-Willems Outstanding Paper Award for the best paper published in Systems \& Control Letters in the period 2014-2018. His research interests include nonlinear control and estimation, model predictive control, and data-/learning-based control, with application in different fields including biomedical engineering.
\end{IEEEbiography}
\begin{IEEEbiography}[{\includegraphics[width=1in,height=1.25in,clip,keepaspectratio]{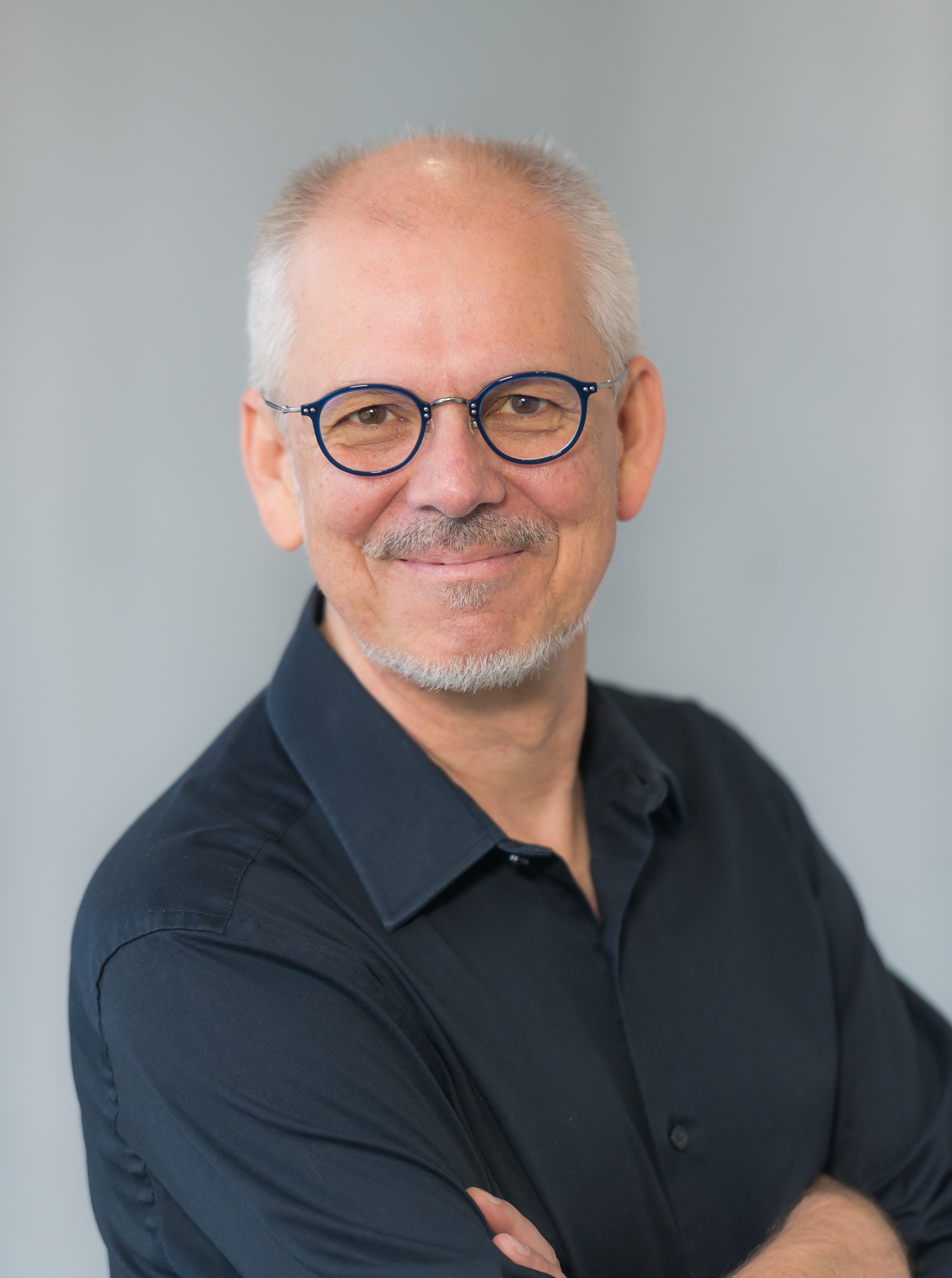}}]{Frank Allg\"ower}
is professor of mechanical engineering at the University of Stuttgart, Germany, and Director of the Institute for Systems Theory and Automatic Control (IST) there. 
Frank is active in serving the community in several roles: Among others he has been President of the International Federation of Automatic Control (IFAC) for the years 2017-2020, Vice-president for Technical Activities of the IEEE Control Systems Society for 2013/14, and Editor of the journal Automatica from 2001 until 2015. From 2012 until 2020 Frank served in addition as Vice-president for the German Research Foundation (DFG), which is Germany’s most important research funding organization. 
His research interests include predictive control, data-based control, networked control, cooperative control, and nonlinear control with application to a wide range of fields including systems biology.
\end{IEEEbiography}
%\newpage
 \appendix
%!TEX root = ./Output.tex
%%%%%%%%%%%%%%%%%%%%%%%%%%%%%%%%%%%%%%%%%%%%%%%%%%%%%%%%%%%%%%%%%%%%%%%%%%%%%
\subsection{Further estimation error bounds}
\label{app:IOSS}
In the following, we show that the $\delta$-IOSS bounds from Section~\ref{sec:IOSS} are not fragile to outlier noise and how observability can also be used to obtain estimation error bounds. 
\subsubsection*{Robustness to outlier noise}
In the following, we analyse the sensitivity and robustness of the bound $\overline{e}_t$ in Theorem~\ref{thm:IOSS}/Corollary~\ref{corol:IOSS} to outlier noise $w$ which does not satisfy Assumption~\ref{ass:disturbance}.
\begin{proposition}
\label{prop:IOSS_fragile}
Let Assumptions~\ref{ass:boundedness}, \ref{ass:stable_observer}, \ref{ass:add_disturbance}, and \ref{ass:common_Lyap}  hold.
There exist $\sigma_{\epsilon}\in\mathcal{K}$ and $\eta_\epsilon\in[0,1)$, such that 
for any $t\in\mathbb{I}_{\geq 0}$, the estimates $\hat{x}_t$, $\overline{e}_t$ according to~\eqref{eq:observer} and \eqref{eq:e_t_bound_3} satisfy 
\begin{align}
\label{eq:IOSS_fragile}
V_{\mathrm{o}}(\hat{x}_t,x_t)\leq& \overline{e}_t+\sum_{k=0}^{t-1}\eta_\epsilon^{t-k-1}\sigma_{\epsilon}(\max\{\|w_k\|-\overline{w},0\})\\
&+\eta_{\epsilon}^t\max\{V_{\mathrm{o}}(\hat{x}_0,x_0)-\overline{e}_0,0\}.\nonumber
\end{align}
\end{proposition}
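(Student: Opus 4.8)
The plan is to follow the induction behind Corollary~\ref{corol:IOSS}, but to carry along the (possibly positive) gap
\[
\Delta_t:=\max\{V_{\mathrm{o}}(\hat{x}_t,x_t)-\overline{e}_t,0\}
\]
between the true Lyapunov value and the computed bound, and to show that this gap contracts geometrically up to an injection term proportional to the outlier excess $\delta_t:=\max\{\|w_t\|-\overline{w},0\}$. By definition $V_{\mathrm{o}}(\hat{x}_t,x_t)\leq\overline{e}_t+\Delta_t$ and $\Delta_0=\max\{V_{\mathrm{o}}(\hat{x}_0,x_0)-\overline{e}_0,0\}$, so once the recursion $\Delta_{t+1}\leq\eta_\epsilon\Delta_t+\sigma_\epsilon(\delta_t)$ is established, iterating it and substituting into $V_{\mathrm{o}}(\hat{x}_t,x_t)\leq\overline{e}_t+\Delta_t$ gives exactly~\eqref{eq:IOSS_fragile}.

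First I would use, as in the proof of Proposition~\ref{prop:IOSS}, that Assumption~\ref{ass:add_disturbance} allows the observer recursion to be written as $\hat{x}_{t+1}=f_{\mathrm{w}}(\hat{x}_t,u_t,\hat{w}_t)$ with $\hat{w}_t=E_{\mathrm{x}}^\dagger\hat{L}(\hat{x}_t,u_t,y_t)$, and that under Assumption~\ref{ass:common_Lyap} the function $V_{\mathrm{o}}$ is a $\delta$-IOSS Lyapunov function, so~\eqref{eq:IOSS_Lyap_2} holds with $W_\delta$ replaced by $V_{\mathrm{o}}$ and the same constants $\eta,\sigma_1,\sigma_2$ appearing in~\eqref{eq:e_t_bound_3}. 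Applying~\eqref{eq:observer_prop_2} and this $\delta$-IOSS inequality to the two trajectories $(x_t,u_t,w_t)$ and $(\hat{x}_t,u_t,\hat{w}_t)$, and using $\|\hat{w}_t-w_t\|\leq\|\hat{w}_t\|+\|w_t\|\leq\|\hat{w}_t\|+\overline{w}+\delta_t$ together with $\|w_t\|\leq\overline{w}+\delta_t$, yields
\[
V_{\mathrm{o}}(\hat{x}_{t+1},x_{t+1})\leq\min\bigl\{\tilde{\eta}V_{\mathrm{o}}(\hat{x}_t,x_t)+\sigma_4(\overline{w}+\delta_t),\ \eta V_{\mathrm{o}}(\hat{x}_t,x_t)+\sigma_1(\|\hat{w}_t\|+\overline{w}+\delta_t)+\sigma_2(\|\hat{y}_t-y_t\|)\bigr\}.
\]

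Next I would subtract the update~\eqref{eq:e_t_bound_3} for $\overline{e}_{t+1}$, using the elementary inequality $\min\{a,b\}-\min\{c,d\}\leq\max\{a-c,b-d\}$ and $V_{\mathrm{o}}(\hat{x}_t,x_t)\leq\overline{e}_t+\Delta_t$. The two candidate differences collapse to $\tilde{\eta}\Delta_t+[\sigma_4(\overline{w}+\delta_t)-\sigma_4(\overline{w})]$ and $\eta\Delta_t+[\sigma_1(\|\hat{w}_t\|+\overline{w}+\delta_t)-\sigma_1(\|\hat{w}_t\|+\overline{w})]$. Assumption~\ref{ass:boundedness} together with continuity of $\hat{L},h_{\mathrm{w}}$ (note $y_t=h_{\mathrm{w}}(x_t,u_t,w_t)$ is bounded since $x_t,u_t,w_t$ are) provides a constant $\overline{c}>0$ with $\|\hat{w}_t\|\leq\overline{c}$ and $\delta_t\leq\overline{c}$ for all $t$. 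Setting $\eta_\epsilon:=\max\{\tilde{\eta},\eta\}\in[0,1)$ and letting $\sigma_\epsilon\in\mathcal{K}$ be a class-$\mathcal{K}$ majorant on $[0,\overline{c}]$ of $r\mapsto\max\{\sigma_4(\overline{w}+r)-\sigma_4(\overline{w}),\ \sup_{b\in[0,\overline{c}]}(\sigma_1(b+\overline{w}+r)-\sigma_1(b+\overline{w}))\}$, both differences are bounded by $\eta_\epsilon\Delta_t+\sigma_\epsilon(\delta_t)$, hence $\Delta_{t+1}=\max\{V_{\mathrm{o}}(\hat{x}_{t+1},x_{t+1})-\overline{e}_{t+1},0\}\leq\eta_\epsilon\Delta_t+\sigma_\epsilon(\delta_t)$. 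Iterating from $\Delta_0$ gives $\Delta_t\leq\eta_\epsilon^t\Delta_0+\sum_{k=0}^{t-1}\eta_\epsilon^{t-k-1}\sigma_\epsilon(\delta_k)$, which together with $V_{\mathrm{o}}(\hat{x}_t,x_t)\leq\overline{e}_t+\Delta_t$ is precisely~\eqref{eq:IOSS_fragile}.

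The one step that is not a mechanical mirror of Corollary~\ref{corol:IOSS} is checking that the constructed $\sigma_\epsilon$ is genuinely class $\mathcal{K}$: the inner supremum is a supremum over a compact index set of continuous increasing functions vanishing at $r=0$, so finiteness and the existence of a $\mathcal{K}$ majorant follow from uniform continuity of $\sigma_1$ (and $\sigma_4$) on the bounded range of arguments fixed by Assumption~\ref{ass:boundedness}. This, together with the bookkeeping that $\|\hat{w}_t\|$ and $\delta_t$ remain in a fixed compact set, is the only (minor) technical obstacle; the remainder is the contraction argument above.
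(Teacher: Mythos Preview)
Your proof is correct and follows essentially the same approach as the paper: define the gap $\Delta_t$ (the paper calls it $\epsilon_t$) and the outlier excess $\delta_t$ (the paper's $w_{t,\epsilon}$), use both the observer-stability bound~\eqref{eq:observer_prop_2} and the one-step $\delta$-IOSS bound from Proposition~\ref{prop:IOSS} with $M=1$, combine them via the same min--max inequality (your $\min\{a,b\}-\min\{c,d\}\leq\max\{a-c,b-d\}$ is equivalent to the paper's $\min\{a+b,c+d\}\leq\min\{a,c\}+\max\{b,d\}$), set $\eta_\epsilon=\max\{\tilde{\eta},\eta\}$, and construct $\sigma_\epsilon$ from uniform continuity of $\sigma_1,\sigma_4$ on the compact range fixed by Assumption~\ref{ass:boundedness}. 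The only cosmetic difference is that the paper phrases the construction of $\sigma_\epsilon$ directly as $\sigma_i(r_1+r_2)\leq\sigma_i(r_1)+\sigma_\epsilon(r_2)$ for $i\in\{1,4\}$, whereas you phrase it as a $\mathcal{K}$ majorant of the increment; both rest on the same uniform-continuity argument.
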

\begin{proof}
Define $w_{t,\epsilon}:=\max\{\|w_t\|-\overline{w},0\}$ and $\epsilon_t:=\max\{V_{\mathrm{o}}(\hat{x}_t,x_t)-\overline{e}_t,0\}$, $t\in\mathbb{I}_{\geq 0}$. 
Given continuity of $\sigma_1,\sigma_4\in\mathcal{K}$, there exists a function $\sigma_{\epsilon}\in\mathcal{K}$ such that $\sigma_i(r_1+r_2)\leq \sigma_i(r_1)+\sigma_{\epsilon}(r_2)$, for all $i\in\{1,4\}$, and all $r_1,r_2 \in[0,c]$, with $c$ from Assumption~\ref{ass:boundedness}. 
Together with Assumption~\ref{ass:stable_observer}, this implies
\begin{align*}
V_{\mathrm{o}}(\hat{x}_{t+1},x_{t+1})\leq &\tilde{\eta}V_{\mathrm{o}}(\hat{x}_t,x_t)+\sigma_4(\overline{w}+w_{t,\epsilon})\\
\leq& \tilde{\eta}\overline{e}_t+\sigma_4(\overline{w})+\tilde{\eta}\epsilon_t+\sigma_\epsilon(w_{t,\epsilon}).
\end{align*}
The $\delta$-IOSS bound~\eqref{eq:observer_IOSS_bound_1} in Proposition~\ref{prop:IOSS}  remains valid with $\overline{w}$ replaced by $\overline{w}+w_{\epsilon,t-j}$. 
Using this bound for $M=1$, Equation~\eqref{eq:e_t_bound_3} and Assumption~\ref{ass:boundedness}, we arrive at
\begin{align*}
V_{\mathrm{o}}(\hat{x}_{t+1},x_{t+1})\leq \overline{e}_{t+1,\mathrm{IOSS}}+\eta \epsilon_{t}+\sigma_{\epsilon}(w_{t,\epsilon}). 
\end{align*}
Using the fact that $\min\{a+b,c+d\}\leq \min\{a,c\}+\max\{b,d\}$ and combining the above two inequalities yields  
\begin{align*}
&V_{\mathrm{o}}(\hat{x}_{t+1},x_{t+1})\\
\leq &\underbrace{\min\{\tilde{\eta} \overline{e}_t+\sigma_4(\overline{w}),\overline{e}_{t+1,\mathrm{IOSS}}\}}_{=\overline{e}_{t+1}}+\underbrace{\max\{\tilde{\eta},\eta\}}_{=:\eta_\epsilon}\epsilon_t+\sigma_{\epsilon}(w_{t,\epsilon}).\nonumber
\end{align*}
In case $\epsilon_{t+1}= V_{\mathrm{o}}(\hat{x}_{t+1},x_{t+1})-\overline{e}_{t+1}$, this implies $\epsilon_{t+1}\leq \eta_\epsilon\epsilon_t+\sigma_{\epsilon}(w_{t,\epsilon})$, which yields~\eqref{eq:IOSS_fragile} using the geometric series. 
In case $\epsilon_{t+1}=0$ the same bound holds since $\epsilon_t,w_{t,\epsilon}\geq 0$. 
\end{proof}
In case $V_{\mathrm{o}}(\hat{x}_0,x_0)\leq \overline{e}_0$ and $\|w_k\|\leq \overline{w}$ we recover the result in Corollary~\ref{corol:IOSS}. 
Inequality~\eqref{eq:IOSS_fragile} formalizes the sensitivity of the provided estimate $\overline{e}_t$ to outlier noise (Ass.~\ref{ass:disturbance} does not hold) and errors in the initial bound $\overline{e}_0$.

\subsubsection*{Observable systems}
In the following, we show how simpler bounds on the observer error $\overline{e}_t$ can be computed in case the system is observable. 
\begin{definition}
\label{def:observable}
(Final state observability~\cite[Def.~4.29]{rawlings2017model})
System~\eqref{eq:sys_w} is uniformly final state observable if there exist $\nu\in\mathbb{I}_{\geq 1}$, $\gamma_{\mathrm{w}},\gamma_{\mathrm{v}}\in\mathcal{K}$ such that
\begin{align}
\label{eq:observability}
&\|x_{t+\nu}-\tilde{x}_{t+\nu}\|\nonumber\\
\leq &\sum_{j=0}^{\nu-1}\gamma_{\mathrm{w}}(\|w_{t+j}-\tilde{w}_{t+k}\|)+\gamma_{\mathrm{v}}(\|y_{t+j}-\tilde{y}_{t+j}\|),
\end{align}
for all initial conditions $x_t,\tilde{x}_t\in\mathbb{X}$, all disturbance sequences $w,\tilde{w}\in\mathbb{W}^\infty$, all input sequences $u\in\mathbb{U}^{\infty}$, and all $t\in\mathbb{I}_{\geq 0}$, where $(x,u,y,w)_{t=0}^{\infty}$ and $(\tilde{x},u,\tilde{y},\tilde{w})_{t=0}^{\infty}$ correspond to two trajectories each satisfying Equations~\eqref{eq:sys_w} for all $t\in\mathbb{I}_{\geq 0}$.
\end{definition}
Final state observability is a stronger condition than detectability/$\delta$-IOSS, but weaker than observability (cf.~\cite[Prop.~4.31]{rawlings2017model}). 
Based on Definition~\ref{def:observable}, for $t\geq \nu$ the update Equations~\eqref{eq:e_t_bound_1}/\eqref{eq:e_t_bound_2} can be replaced by
\begin{subequations}
\label{eq:e_t_bound_obs}
 \begin{align}
\label{eq:e_t_bound_obs_a}
\overline{e}_{t,\mathrm{obs}}:=&\sum_{j=1}^{\nu}\gamma_{\mathrm{w}}(\overline{w}+\|\hat{w}_{t-j}\|)+\gamma_{\mathrm{v}}(\|y_{t-j}-\hat{y}_{t-j}\|),\\
\label{eq:e_t_bound_obs_b}
\overline{e}_t:=&\min\{\tilde{\eta} \overline{e}_{t-1}+\sigma_4(\overline{w}), \alpha_6(\overline{e}_{t,\mathrm{obs}})\}.
\end{align}
\end{subequations}
For $t< \nu$ we can simply define $\overline{e}_t=\dfrac{1-\tilde{\eta}^t}{1-\tilde{\eta}}\sigma_4(\overline{w})+\tilde{\eta}^t\overline{e}_0$.
The guarantees in Theorem~\ref{thm:IOSS} remain valid since $\overline{e}_{t,\mathrm{obs}}$ is an upper bound on $\|\hat{x}_{t}-x_{t}\|$ using Definition~\ref{def:observable}.

Final state observability is, e.g., satisfied for input-output models based on the extended (non-minimal) state $x_t=[u^\top_{[t-\nu,t-1]},y^\top_{[t-\nu,t-1]}]^\top$, which is often considered for data-driven control methods.
For such models, it is also possible to directly utilize the noisy measurements to define a state estimate $\hat{x}_t$ and bounds on the estimation error analogous to~\eqref{eq:thm_IOSS} can be directly deduced based on a noise bound (cf.~\cite{manzano2020robust}).

%!TEX root = ./Output.tex
%%%%%%%%%%%%%%%%%%%%%%%%%%%%%%%%%%%%%%%%%%%%%%%%%%%%%%%%%%%%%%%%%%%%%%%%%%%%%%%
\subsection{Identical Lyapunov function}
\label{app:identical}
In the following, we provide sufficient conditions, such that the $\delta$-Lyapunov function $V_{\mathrm{o}}$ (Ass.~\ref{ass:stable_observer}) is also a $\delta$-IOSS Lyapunov function, i.e., Assumption~\ref{ass:common_Lyap} holds.  
%This condition is  naturally satisfied for linear systems (cf.~\cite{cai2008input}) and, as the following proposition demonstrates, also for important special cases of nonlinear systems. 
\begin{proposition}
\label{prop:Lyap_is_IOSS}
Suppose that $f_{\mathrm{w}},h_{\mathrm{w}}$ are affine in $w$. 
Let Assumption~\ref{ass:stable_observer} hold with a quadratic function $V_{\mathrm{o}}$ and $\hat{L}$ satisfying Equation~\eqref{eq:obs_linear_gain}.
Then, Assumption~\ref{ass:common_Lyap} holds.
\end{proposition}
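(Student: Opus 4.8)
The plan is to show that under the affine-in-$w$ structure and the linear injection law, the observer-error dynamics expressed via $V_{\mathrm{o}}$ can be bounded exactly in the form required by Definition~\ref{def:IOSS_Lyap}, with the output deviation playing the role of the ``$y - \tilde y$'' term. First I would write out the dynamics: since $f_{\mathrm{w}}(x,u,w) = f(x,u) + E_{\mathrm{x}} w$ and $h_{\mathrm{w}}(x,u,w) = h(x,u) + E_{\mathrm{y}} w$ for suitable matrices (affineness in $w$), and $\hat L(\hat x,u,y) = L(h(\hat x,u) - y)$ by~\eqref{eq:obs_linear_gain}, the key observation is that for \emph{two arbitrary} trajectories $(x,u,w,y)$ and $(\tilde x,u,\tilde w,\tilde y)$ of the system~\eqref{eq:sys_w} the quantity $\hat f(\hat x,u,h_{\mathrm{w}}(x,u,w))$ appearing in Assumption~\ref{ass:stable_observer} is, when we substitute $\hat x \mapsto \tilde x$, exactly $f_{\mathrm{w}}(\tilde x,u,\tilde w) + L E_{\mathrm{y}}(\tilde w - w) + L(h(\tilde x,u) - h(x,u))$; i.e. the closed observer map applied to a second system trajectory reproduces that trajectory's one-step update plus a correction that is linear in $(\tilde w - w)$ and in $(\tilde y - y) = (h_{\mathrm{w}}(\tilde x,u,\tilde w) - h_{\mathrm{w}}(x,u,w))$.

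Next I would feed this identity into the Lyapunov decrease~\eqref{eq:observer_prop_2}. Because $V_{\mathrm{o}}(\hat x,x) = \|x - \hat x\|_{P}^2$ is quadratic, I can use the standard inequality $\|a + b\|_P^2 \le (1+\epsilon)\|a\|_P^2 + (1 + \epsilon^{-1})\|b\|_P^2$ (Young's inequality with a free parameter $\epsilon > 0$): apply it with $a$ the ``nominal'' one-step error governed by~\eqref{eq:observer_prop_2} applied to the pair $(\tilde x,\tilde x)$ versus $(x,x)$ — which already contributes $\tilde\eta\,V_{\mathrm{o}}(\tilde x,x) + \sigma_4(\|\tilde w - w\|)$ after absorbing $w$ appropriately — and $b$ the linear correction term of norm bounded by $\|L E_{\mathrm{y}}\|\,\|\tilde w - w\| + \|L\|\,\|\tilde y - y\|$. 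Choosing $\epsilon$ small enough that $(1+\epsilon)\tilde\eta =: \eta < 1$ gives the decrease factor; the remaining terms are quadratic (hence $\mathcal K$) functions of $\|\tilde w - w\|$ and of $\|\tilde y - y\|$, which I collect into $\sigma_1$ and $\sigma_2$ respectively. Together with the quadratic sandwich bound~\eqref{eq:observer_prop_1} (which is already of the form~\eqref{eq:IOSS_Lyap_1} with $\alpha_1 = \alpha_5$, $\alpha_2 = \alpha_6$), this establishes that $V_{\mathrm{o}}$ satisfies Definition~\ref{def:IOSS_Lyap}, i.e. Assumption~\ref{ass:common_Lyap}.

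A subtle point I would be careful about: Assumption~\ref{ass:stable_observer} only gives the decrease~\eqref{eq:observer_prop_2} for the \emph{actual} observer driven by the \emph{true} plant output, so I cannot invoke it directly with $(\tilde x, \tilde x)$ in both slots unless the pair $(\hat x, x) = (\tilde x, \tilde x)$ is admissible — it is, since $\hat L(\hat x,u,h(\hat x,u)) = 0$ and the inequality is stated for all $(x,\hat x,u,w)$. The main obstacle is bookkeeping rather than conceptual: one must track how the disturbance $w$ enters \emph{both} through $f_{\mathrm{w}}$ and through the measured output fed to $\hat L$, and verify that after the split all disturbance contributions can honestly be written as a single $\mathcal K$-function of $\|w - \tilde w\|$ (using $\|w\|, \|\tilde w\| \le c$ from boundedness if any cross terms appear, or more cleanly just keeping everything in terms of the difference since the maps are affine). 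I expect the cleanest route is to first reduce, via affineness, to the case $\tilde w = 0$ is \emph{not} needed — instead directly compute the error update $e^+ := \tilde x^+ - \hat f(\hat x, u, h_{\mathrm{w}}(x,u,w))\big|_{\hat x = \tilde x}$ symbolically, observe it is affine in $(w,\tilde w, x - \tilde x)$ with the $(x-\tilde x)$-part controlled by~\eqref{eq:observer_prop_2}, and then apply Young's inequality once. This keeps the proof to a few lines.
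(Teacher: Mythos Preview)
Your overall strategy (rewrite the one-step error using the observer map, invoke~\eqref{eq:observer_prop_2} for the contractive part, and separate off the remainder with Young's inequality) is the same as the paper's. However, the execution as written has a genuine gap. Your identity for $\hat f(\tilde x,u,h_{\mathrm w}(x,u,w))$ is missing the term $-E_{\mathrm x}\tilde w$: a direct computation gives $\hat f(\tilde x,u,h_{\mathrm w}(x,u,w)) = f_{\mathrm w}(\tilde x,u,\tilde w) - E_{\mathrm x}\tilde w + L(h(\tilde x,u)-h(x,u)) - LE_{\mathrm y}w$, so after the split the leftover piece involves $\tilde w$ and $w$ \emph{separately}, not only through $w-\tilde w$. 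More importantly, applying~\eqref{eq:observer_prop_2} with the noisy output $h_{\mathrm w}(x,u,w)$ gives the supply term $\sigma_4(\|w\|)$, not $\sigma_4(\|w-\tilde w\|)$; your phrase ``after absorbing $w$ appropriately'' does not explain how to convert the former into the latter, and in general this cannot be done.

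The paper's resolution is to apply~\eqref{eq:observer_prop_2} with the roles swapped and with zero disturbance: take $\hat x = x$, true state $\tilde x$, and $w=0$, so that the observer is fed the \emph{noiseless} output $h(\tilde x,u)$. This yields $\|\hat f(x,u,h(\tilde x,u))-f(\tilde x,u)\|_{P_{\mathrm o}} \le \sqrt{\tilde\eta}\,\|x-\tilde x\|_{P_{\mathrm o}}$ with $\sigma_4(0)=0$, and then the full difference $f_{\mathrm w}(x,u,w)-f_{\mathrm w}(\tilde x,u,\tilde w)$ equals this contractive piece plus $E_{\mathrm x}(w-\tilde w)$ plus a term linear in $h_{\mathrm w}(x,u,w)-h_{\mathrm w}(\tilde x,u,\tilde w)$ and in $E_{\mathrm y}(w-\tilde w)$. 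Now every disturbance contribution enters only through $w-\tilde w$, and squaring with Young's inequality (choosing $\epsilon>0$ so that $(1+\epsilon)\tilde\eta<1$) delivers the $\delta$-IOSS decrease. Once you use the noiseless-output version of~\eqref{eq:observer_prop_2}, your plan goes through exactly as you outlined.
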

\begin{proof}
The proof follows the same lines as~\cite[Sec.~3.2]{cai2008input} for linear detectable systems.  
Given $V_{\mathrm{o}}$ quadratic with some positive definite matrix $P_{\mathrm{o}}\in\mathbb{R}^{n\times n}$ and $f_{\mathrm{w}}$, $h_{\mathrm{w}}$  affine in $w$ with matrices $E_{\mathrm{x}}\in\mathbb{R}^{n\times q}$, $E_{\mathrm{y}}\in\mathbb{R}^{p\times q}$, we have
\begin{align*}
&\|f_{\mathrm{w}}(x,u,w)-f_{\mathrm{w}}(\tilde{x},u,\tilde{w})\|_{P_{\mathrm{o}}}\\
\stackrel{\eqref{eq:obs_linear_gain}}{\leq} &\|\hat{f}(x,u,h(\tilde{x},u))-f(\tilde{x},u)\|_{P_{\mathrm{o}}}+\|E_{\mathrm{x}}(w-\tilde{w})\|_{P_{\mathrm{o}}}\\
&+\|L(h_{\mathrm{w}}(x,u,w)-h_{\mathrm{w}}(\tilde{x},u,\tilde{w}))-LE_{\mathrm{y}}(w-\tilde{w})\|_{P_{\mathrm{o}}}\\
\stackrel{\eqref{eq:observer_prop_2}}{\leq}&\sqrt{\tilde{\eta}}\|x-\tilde{x}\|_{P_{\mathrm{o}}}+\sqrt{\lambda_{\max}(E_{\mathrm{x}}^\top P_{\mathrm{o}} E_{\mathrm{x}})}\|w-\tilde{w}\|  \\
&+\sqrt{\lambda_{\max}(L^\top P_{\mathrm{o}} L)}\|h_{\mathrm{w}}(x,u,w)-h_{\mathrm{w}}(\tilde{x},u,\tilde{w})\|\\
&+\sqrt{\lambda_{\max}(E_{\mathrm{y}}^\top L^\top P_{\mathrm{o}}L E_{\mathrm{y}})}\|w-\tilde{w}\| .
\end{align*}
By squaring the result and using the fact that $\|a+b\|^2\leq (1+\epsilon)\|a\|^2+\frac{1+\epsilon}{\epsilon}\|b\|^2$ for any $\epsilon>0$, $a,b\in\mathbb{R}^n$, we arrive at Condition~\eqref{eq:IOSS_Lyap_2} with $\eta=(1+\epsilon)\tilde{\eta}$, $\sigma_1(r)=\dfrac{2(1+\epsilon)}{\epsilon}\left(\sqrt{\lambda_{\max}(E_{\mathrm{x}}^\top P_{\mathrm{o}} E_{\mathrm{x}})}+\sqrt{\lambda_{\max}(E_{\mathrm{y}}^\top L^\top P_{\mathrm{o}}L E_{\mathrm{y}})}\right)^2  r^2$, $\sigma_2=\dfrac{2(1+\epsilon)}{\epsilon}\lambda_{\max}(L^\top P_{\mathrm{o}}L) r^2$. 
For $\epsilon>0$ small enough we have  $\tilde{\eta}\in[0,1)$ and $\sigma_{1},\sigma_2\in\mathcal{K}$. 
Condition~\eqref{eq:IOSS_Lyap_1} holds with quadratic functions $\alpha_1=\alpha_5,\alpha_2=\alpha_6$  since $W_\delta=V_{\mathrm{o}}$ is quadratic and thus $V_{\mathrm{o}}$ is a $\delta$-IOSS Lyapunov function. 
\end{proof}

\end{document}